\documentclass{article}
\usepackage[utf8]{inputenc}

\usepackage{amssymb}
\usepackage{amsthm}
\usepackage{amsmath}
\usepackage{thm-restate}
\usepackage{booktabs}
\usepackage{complexity}
\usepackage{multirow}
\usepackage{mathtools}

\setlength{\tabcolsep}{8pt}

\newcommand{\ALG}{\textsc{ALG}}
\newcommand{\OPT}{\textsc{OPT}}

\DeclareMathOperator{\SSW}{SW}
\DeclareMathOperator{\MMW}{MW}
\DeclareMathOperator{\Parallelop}{Parallel}
\DeclareMathOperator{\haters}{haters}
\DeclareMathOperator{\indiff}{indiff}
\DeclareMathOperator{\single}{single}

\DeclareMathOperator*{\argmin}{arg\,min}
\newcommand{\Location}{\mathbf{x}}
\newcommand{\Aversion}{\mathbf{a}}
\newcommand{\LieAversion}{\mathbf{a}'}
\newcommand{\Parallel}[1]{\Parallelop(#1)}

\newcommand{\etal}{et al.}
\newcommand{\Instance}{(n,k,\Location, \Aversion)}
\newcommand{\LieInstance}{(n,k,\Location,\LieAversion)}
\newcommand{\dist}{\Delta}

\theoremstyle{definition}
\newtheorem{definition}{Definition}
\newtheorem{theorem}{Theorem}

\newtheorem{lemma}{Lemma}
\newtheorem{fact}{Fact}
{\bfseries}{\itshape}
\newenvironment{custommech}[1]
  {\innercustomthm}
  {\endinnercustomthm}
\newtheorem{mechanism}{Mechanism}{\bfseries}{\itshape}

\begin{document}

\title{The Obnoxious Facility Location Game \\with Dichotomous Preferences\footnote{Department of Computer Science, University of Texas at Austin.  Email: \{fuli, plaxton, vaibhavsinha\}@utexas.edu. This is an extended version of a paper presented at the 22nd Italian Conference on Theoretical Computer Science in September 2021.
}}

\author{Fu Li \and
C. Gregory Plaxton \and
Vaibhav B. Sinha}

\date{May 2023}

\maketitle

\begin{abstract}
We consider a facility location game in which $n$ agents reside at known locations on a path, and $k$ heterogeneous facilities are to be constructed on the path. Each agent is adversely affected by some subset of the facilities, and is unaffected by the others. We design two classes of mechanisms for choosing the facility locations given the reported agent preferences: utilitarian mechanisms that strive to maximize social welfare (i.e., to be efficient), and egalitarian mechanisms that strive to maximize the minimum welfare. For the utilitarian objective, we present a weakly group-strategyproof efficient mechanism for up to three facilities, we give strongly group-strategyproof mechanisms that achieve approximation ratios of $5/3$ and $2$ for $k=1$ and $k > 1$, respectively, and we prove that no strongly group-strategyproof mechanism achieves an approximation ratio less than $5/3$ for the case of a single facility. For the egalitarian objective, we present a strategyproof egalitarian mechanism for arbitrary $k$, and we prove that no weakly group-strategyproof mechanism achieves a $o(\sqrt{n})$ approximation ratio for two facilities. We extend our egalitarian results to the case where the agents are located on a cycle, and we extend our first egalitarian result to the case where the agents are located in the unit square.
\end{abstract}

\section{Introduction}

The facility location game ($\FLG$) was introduced by Procaccia and
Tannenholtz~\cite{PROCACCIA2009}. In this setting, a central
planner wants to build a facility that serves agents located on a path. The agents report their locations, which are fed to a
mechanism that decides where the facility should be built. Procaccia
and Tannenholtz studied two different objectives that the planner
seeks to minimize: the sum of the distances from the facility to all
agents, and the maximum distance of any agent to the facility.

Every agent aims to maximize their welfare, which increases as their
distance to the facility decreases. An agent or a coalition of agents
can misreport their location(s) to try to increase their welfare.
The strategyproof~(SP) property says that no agent can increase
their welfare by lying about their dislikes. The weakly
group-strategyproof (WGSP) property says that if a non-empty
coalition of agents lies, then at least one agent in the coalition
does not increase their welfare.  The strongly group-strategyproof
(SGSP) property says that if a coalition of agents lies and some
agent in the coalition increases their welfare, then some agent in
the coalition decreases their welfare.  It is natural to seek SP,
WGSP, or SGSP mechanisms, which incentivize truthful reporting.
Often such mechanisms cannot simultaneously optimize the planner's
objective.  In these cases, it is desirable to approximately optimize
the planner's objective.

In real scenarios, an agent might dislike a certain facility, such as
a power plant, and want to stay away from it. This variant, called the
obnoxious facility location game ($\OFLG$), was introduced by Cheng et
al., who studied the problem of building an obnoxious facility on a
path~\cite{CHENG2011}. In the present paper, we consider the
problem of building multiple obnoxious facilities on a path. With
multiple facilities, there are different ways to define the welfare
function. For example, in the case of two facilities, the welfare of
the agent might be the sum, minimum, or maximum of the distances to the
two facilities. In our work, as all the facilities are obnoxious, a
natural choice for welfare is the minimum distance to
any obnoxious facility: the closest facility to an agent causes them
the most annoyance, and if it is far away, then the agent
is satisfied.

A facility might not be universally obnoxious. Consider, for example,
a school or sports stadium. An agent with no children might consider a
school to be obnoxious due to the associated noise and traffic, while
an agent with children might not consider it to be obnoxious. Another
agent who is not interested in sports might similarly consider a
stadium to be obnoxious. We assume that each agent has dichotomous
preferences; they dislike some subset of the facilities and are
indifferent to the others. Each agent reports a subset of
facilities to the planner. As the dislikes are private information,
the reported subset might not be the subset of facilities that the
agent truly dislikes. On the other hand, we assume that the agent
locations are public and cannot be misreported.

In this paper, we study a variant of $\FLG$, which we call $\OOFLG$
(Dichotomous Obnoxious Facility Location Game), that
combines the three aspects mentioned above: multiple (heterogeneous)
obnoxious facilities, minimum distance as welfare, and dichotomous
preferences.
We seek to design mechanisms that perform well with respect to either a utilitarian or egalitarian objective.
The utilitarian objective
is to maximize the social welfare, that is, the total welfare of
all agents.  A mechanism that maximizes social welfare is said to be
efficient.  The egalitarian objective is to maximize the minimum 
welfare of any agent. For both objectives, we
seek mechanisms that are SP, or better yet, weakly or
strongly group-strategyproof (WGSP / SGSP).

\subsection{Our contributions}

We study $\OOFLG$ with $n$ agents. 
In Section~\ref{sec:sum-min}, we consider the utilitarian objective. 
We present $2$-approximate SGSP mechanisms for any number of facilities when the agents are located on a path, cycle, or square.
We obtain the following two additional results for the  path setting.
We obtain a mechanism that is WGSP for any number of facilities and
efficient for up to three facilities.  To show that this mechanism is
WGSP, we relate it to a weighted approval voting mechanism.  To prove
its efficiency, we identify two crucial properties that the welfare
function satisfies, and we use an exchange argument.  For the path
setting, we show that no SGSP mechanism can achieve an approximation
ratio better than $5/3$, even in the single-facility case, and we
  present a single-facility SGSP mechanism that achieves an
  approximation ratio of $5/3$, matching the lower bound. The argument
  underlying our $5/3$ lower bound demonstrates that any
  single-facility SGSP mechanism needs to essentially disregard the
  agent preferences; in other words, the location of the facility
  has to be (essentially) determined by the agent locations.

The single-facility mechanism that we use to establish the matching
$5/3$ upper bound disregards the agent preferences entirely, and hence
is SGSP.  Our proof of the $5/3$ upper bound is by far the most
technical argument in the paper. Given the agent locations, we first
use a sequence of lemmas to characterize the best possible
approximation ratio that can be guaranteed (for all possible choices of the agent preferences) if the mechanism locates the facility at the left
endpoint, right endpoint, or center of the path. (We also give a fast
algorithm for computing these three approximation ratios, which allows
for a fast implementation of our mechanism.)  We exploit this
characterization to show that it is sufficient to bound the
approximation ratio achieved by the mechanism on instances where all
of the agents to the left (resp., right) of the center are located at no
more than two distinct locations.  We then show that it is sufficient
to further restrict our attention to ``balanced'' instances where the
average agent location (i.e., the center of gravity of the agents) is
at the center. Under these restrictions, we are able to show that if
the mechanism cannot guarantee a $5/3$ approximation ratio by building
the facility at the left or right endpoint, then it can guarantee a $5/3$ approximation ratio by building at the center.

In Section~\ref{sec:min-min}, we consider the egalitarian objective. We provide optimal SP mechanisms for
any number of facilities when the agents are located on a path, cycle,
or square. 
We prove that the approximation ratio achieved by any WGSP mechanism is $\Omega(\sqrt{n})$, even for two facilities.  
Also, we
present a straightforward $O(n)$-approximate WGSP 
mechanism. Both of the results for WGSP mechanisms hold
for $\OOFLG$ when the agents are located on a path or
cycle. Table~\ref{tab:results} summarizes our results.

\begin{table}
\centering
\caption{Summary of our results for $\OOFLG$ when the agents are
  located on a path. The heading LB (resp., UB) stands for lower
  (resp., upper) bound. The results in the egalitarian column also
  hold when the agents are located on a cycle. Boldface results hold
  when the agents are located on a path, cycle, or square. The
    tight $5/3$ upper bound for the SGSP utilitarian case holds when
    there is a single facility and the agents are located on a path,
    while the upper bound of $2$ holds for an arbitrary number of
    facilities when the agents are located on a path, cycle, or
    square.}
\vspace{1pt}
\begin{tabular}{ l | c c | c c |}
\cline{2-5}
    & \multicolumn{2}{c|}{Utilitarian} & \multicolumn{2}{c|}{Egalitarian} \\
    \cline{2-5}
    & LB & UB & LB & UB \\
\hline
\multicolumn{1}{|l|}{SP}  & \multirow{2}{*}{$\mathbf{1}$} & \multirow{2}{*}{$1$ for $k \le 3$} & $\mathbf{1}$ & $\mathbf{1}$\\ 
\cline{1-1} \cline{4-5}
\multicolumn{1}{|l|}{WGSP} & & & \multirow{3}{*}{$\Omega(\sqrt{n})$} & \multirow{3}{*}{$O(n)$}\\ \cline{1-3} 
\multicolumn{1}{|l|}{\multirow{2}{*}{SGSP}} & \multirow{2}{*}{$5/3$} & $5/3$ for $k = 1$ & & \\
\multicolumn{1}{|l|}{} & & $\mathbf{2}$ & & \\
\hline
\end{tabular}
\label{tab:results}
\end{table}

\subsection{Related work}
\label{sec:related-works}

Procaccia and Tannenholtz~\cite{PROCACCIA2009}
introduced $\FLG$. Many generalizations and extensions of
$\FLG$ have been studied \cite{ALON2010,DOKOW2012,FELDMAN2013,FILOS0217,FOTAKIS2010,FOTAKIS2014,LU2010,ZHANG2014}; here
we highlight some of the most relevant work.
Cheng~\etal\ introduced $\OFLG$ and presented a WGSP
mechanism to build a single facility on a path~\cite{CHENG2011}. Later they extended the model to cycles and
trees~\cite{CHENG2013}. A complete characterization of
single-facility SP/WGSP mechanisms for paths has
been developed~\cite{HAN2012,IBARA2012}. Duan~\etal\ studied the
problem of locating two obnoxious facilities at least distance $d$
apart~\cite{DUAN2019}. Other variants of $\OFLG$ have been
considered \cite{CHENG2013BOUNDEDSERVICE,FUKUI2019,OOMINE2016,YE2015}.

Agent preferences over the facilities were introduced to $\FLG$ in~\cite{FEIGENBAUM2015} and~\cite{ZOU2015}.  Serafino and Ventre studied
$\FLG$ for building two facilities where each agent likes a subset of
the facilities~\cite{SERAFINO201627}. Anastasiadis and Deligkas extended this model to allow the agents to like, dislike, or be
indifferent to the facilities~\cite{ANASTASIADIS2018}. The
aforementioned works address linear (sum) welfare functions. Yuan~\etal\ studied non-linear welfare functions (maximum and minimum) for building two
non-obnoxious facilities~\cite{YUAN2016}; their results have
subsequently been strengthened~\cite{CHEN2020,LI2020}.  In the present
paper, we initiate the study of a non-linear welfare function (minimum) for building
multiple obnoxious facilities.

\newcommand{\Soln}{\mathbf{y}}

\section{Preliminaries}
\label{sec:preliminaries}

The problems considered in this paper involve a set of agents located
on a path, cycle, or square.  In the path (resp., cycle, square)
setting, we assume without loss of generality that the path (resp.,
cycle, square) is the unit interval (resp., unit-circumference circle,
unit square).  We map the points on the unit-circumference circle to
$[0,1)$, in the natural manner.  Thus, in the path (resp., cycle,
square) setting, each agent $i$ is located in $[0,1]$ (resp., $[0,1)$,
$[0,1]^2$).  The distance between any two points $x$ and $y$ is
denoted $\dist(x,y)$.  In the path and square settings, $\dist(x,y)$ is
defined as the Euclidean distance between $x$ and $y$.  In the cycle
setting, $\dist(x,y)$, is defined as the length of the shorter arc between
$x$ and $y$.  In all settings, we index the agents from $1$. Each
agent has a specific location in the path, cycle, or square. A
\emph{location profile} $\Location$ is a vector $(x_1,\ldots,x_n)$ of
points, where $n$ denotes the number of agents and $x_i$ is the
location of agent $i$. 
Sections~\ref{subsec:sum-min-interval}
and~\ref{subsec:min-min-interval} (resp., Sections~\ref{subsec:sum-min-cycle}
and~\ref{subsec:min-min-cycle},  Sections~\ref{subsec:sum-min-square}
and~\ref{subsec:min-min-square}) present our results for the path (resp., cycle, square)
setting.

Consider a set of agents $1$ through $n$ and a set of facilities
$\mathcal{F}$, where we assume that each agent dislikes (equally)
certain facilities in $\mathcal{F}$ and is indifferent to the rest. In
this context, we define an \emph{aversion profile} $\Aversion$ as a
vector $(a_1,\ldots,a_n)$ where each component $a_i$ is a subset of
$\mathcal{F}$.  We say that such an aversion profile is \emph{true} if
each component $a_i$ is equal to the subset of $\mathcal{F}$ disliked
by agent $i$.  In this paper, we also consider \emph{reported}
aversion profiles where each component $a_i$ is equal to the set of
facilities that agent $i$ claims to dislike. Since agents can lie, a
reported aversion profile need not be true. For any aversion profile $\Aversion$ and any subset $C$ of agents $[n]$, $\Aversion_C$ (resp., $\Aversion_{-C}$) denotes the aversion profile for the agents in (resp., not in) $C$.
For a singleton set
of agents $\{i\}$, we abbreviate $\Aversion_{-\{i\}}$ as $\Aversion_{-i}$.

An instance of the dichotomous obnoxious facility location ($\DOFL$)
problem is given by a tuple $(n,k,\Location,\Aversion)$ where $n$ denotes the
number of agents, there is a set of $k$ facilities
$\mathcal{F}=\{F_1,\ldots,F_k\}$ to be built, $\Location=(x_1,\ldots,x_n)$
is a location profile for the agents, and $\Aversion=(a_1,\ldots,a_n)$ is
an aversion profile (true or reported) for the agents with respect to
$\mathcal{F}$.  A solution to such a $\DOFL$ instance is a vector
$\mathbf{y}=(y_1,\ldots,y_k)$ where component $y_j$ specifies the point at
which to build $F_j$. We say that a $\DOFL$ instance is true (resp.,
reported) if the associated aversion profile is true (resp.,
reported).
For any $\DOFL$ instance $I = \Instance$ and any $j$ in $[k]$, we define $\haters(I, j)$ as $\{i \in [n] \mid F_j \in a_i\}$,
and $\indiff(I)$ as $\{ i \in [n] \mid a_i = \emptyset \}$.

For any $\DOFL$ instance $I=(n,k,\Location,\Aversion)$ and any associated
solution $\Soln$, we define the \emph{welfare} of agent $i$,
denoted $w(I,i,\mathbf{y})$, as the minimum
distance from $x_i$ to any facility in $a_i$, that is, $\min_{j:F_j \in a_i} \dist(x_i, y_j)$.
Remark: If $a_i$ is
empty, we define $w(I,i,\mathbf{y})$ as $1/2$ in the cycle setting,
$\max(\dist(x_i, 0), \dist(x_i, 1))$ in the path setting, and the maximum
distance from $x_i$ to a corner in the square setting.

The foregoing definition of agent welfare is suitable for true $\DOFL$
instances, and is only meaningful for reported $\DOFL$ instances where
the associated aversion profile is close to true.  In this paper,
reported aversion profiles arise in the context of mechanisms that
incentivize truthful reporting, so it is reasonable to expect such
aversion profiles to be close to true.  We define the \emph{social
 welfare} (resp., \emph{minimum welfare}) as the sum (resp., minimum)
of the individual agent welfares. When the facilities are built at
$\Soln$, the social welfare and minimum welfare are denoted by
$\SSW(I,\Soln)$ and $\MMW(I,\Soln)$, respectively. Thus
$\SSW(I,\Soln)=\sum_{i \in [n]}w(I,i,\Soln)$ and
$\MMW(I,\Soln)=\min_{i \in [n]}w(I,i,\Soln)$.

\begin{definition}
For $\alpha \ge 1$, a $\DOFL$ algorithm $A$ is $\alpha$-efficient if for any $\DOFL$ instance $I$,
\begin{equation*}
    \max_{\mathbf{y}} \SSW(I, \mathbf{y}) \le \alpha \SSW (I, A(I)).
\end{equation*} Similarly, $A$ is $\alpha$-egalitarian if for any $\DOFL$ instance $I$,
\begin{equation*}
    \max_{\mathbf{y}} \MMW(I, \mathbf{y}) \le \alpha \MMW (I, A(I)).
\end{equation*}
\end{definition}
A $1$-efficient (resp., $1$-egalitarian) $\DOFL$ algorithm, is said to be efficient (resp., egalitarian).

We are now ready to define a $\DOFL$-related game, which we call $\OOFLG$. It is
convenient to describe a $\OOFLG$ instance in terms of a pair $(I,I')$ of
$\DOFL$ instances where $I=(n,k,\Location,\Aversion)$ is true and
$I'=(n,k,\Location,\Aversion')$ is reported.  There are $n$ agents indexed
from $1$ to $n$, and a planner. There is a set of $k$ facilities
$\mathcal{F}=\{F_1,\ldots,F_k\}$ to be built. The numbers $n$ and $k$
are publicly known, as is the location profile $\Location$ of the
agents. Each component $a_i$ of the true aversion profile $\Aversion$ is
known only to agent $i$. Each agent $i$ submits component $a_i'$ of
the reported aversion profile $\LieAversion$ to the planner.  The planner,
who does not have access to $\Aversion$, runs a $\DOFL$ algorithm, call it
$A$, to map $I'$ to a solution. The input-output behavior of $A$
defines a $\OOFLG$ mechanism, call it $M$; in the special case where
$k=1$, we say that $M$ is a single-facility $\OOFLG$ mechanism.  We would
like to choose $A$ so that $M$ enjoys strong game-theoretic
properties. We say that $M$ is $\alpha$-efficient (resp.,
$\alpha$-egalitarian, efficient, egalitarian) if $A$ is $\alpha$-efficient (resp.,
$\alpha$-egalitarian, efficient, egalitarian). As indicated earlier, such properties (which
depend on the notion of agent welfare) are only meaningful if the
reported aversion profile is close to true.  To encourage truthful
reporting, we require our mechanisms to be SP, as defined below; we
also consider the stronger properties WGSP and SGSP.

The SP property says that no agent can increase their welfare by lying
about their dislikes, regardless of the fixed aversion profile
  reported by the remaining agents.

\begin{definition}
A $\OOFLG$ mechanism $M$ is SP if for any $\OOFLG$ instance $(I, I')$ with $I = \Instance$, and $I' = \LieInstance$, 
and any agent $i$ in $[n]$ such that $\LieAversion = (\Aversion_{-i}, a_i')$, we have
\begin{equation*}
    w(I, i, M(I)) \ge w(I, i, M(I')).
\end{equation*}
\end{definition}

The WGSP property says that if a non-empty coalition $C \subseteq [n]$ of agents
lies, then 
at least one agent in $C$ does not increase their welfare.

\begin{definition}
A $\OOFLG$ mechanism $M$ is WGSP if for any $\OOFLG$ instance $(I, I')$ with $I = \Instance$, and $I' = \LieInstance$, and any non-empty coalition $C \subseteq [n]$ such that $\LieAversion = (\Aversion_{-C}, \LieAversion_C)$, there exists an agent $i$ in $C$ such that
\begin{equation*}
    w(I, i, M(I)) \ge w(I, i, M(I')).
\end{equation*}
\end{definition}

The SGSP property says that if a coalition $C \subseteq [n]$ of agents lies 
and some agent in $C$ increases their welfare then some agent in $C$ decreases their welfare.

\begin{definition}
A $\OOFLG$ mechanism $M$ is SGSP if for any $\OOFLG$ instance $(I, I')$ with $I = \Instance$, and $I' = \LieInstance$, and any coalition $C \subseteq [n]$ such that $\LieAversion = (\Aversion_{-C}, \LieAversion_C)$, if there exists an agent $i$ in $C$ such that
\begin{equation*}
    w(I, i, M(I)) < w(I, i, M(I')),
\end{equation*} then there exists an agent $i'$ in $C$ such that
\begin{equation*}
    w(I, i', M(I)) > w(I, i', M(I')).
\end{equation*}
\end{definition}

Every SGSP mechanism is WGSP and every WGSP mechanism is SP.

\section{Weighted Approval Voting}
\label{sec:weight-app-vote}

Before studying efficient mechanisms for our problem, we review a
variant of the approval voting mechanism~\cite{BRAMS1978}.  An
instance of the Dichotomous Voting (DV) problem is a tuple $(m, n,
\mathbf{C}, \mathbf{w}^+, \mathbf{w}^-)$ where $m$ voters $1, \dots,
m$ have to elect a candidate among a set of candidates $C = \{c_1,
\dots, c_n\}$.  Each voter $i$ has dichotomous preferences, that is,
voter $i$ partitions all of the candidates into two equivalence
classes: a top (most preferred) tier $C_i$ and a bottom tier
$\overline{C_i} = C \setminus C_i$.  Each voter $i$ has associated
(and publicly known) weights $w_i^+ \ge w_i^- \ge 0$.  The symbols
$\mathbf{C}$, $\mathbf{w}^+$, and $\mathbf{w}^-$ denote length-$m$
vectors with $i$th element $C_i$, $w^+_i$, and $w^-_i$, respectively.
We now present our weighted approval voting mechanism.\footnote{Our
  mechanism differs from the homonymous mechanism of Massó et al.,
  which has weights for the candidates instead of the
  voters~\cite{MASSO2008}.}

\begin{mechanism}
\label{mech:weighted-approval-voting}
Given  a $\DV$ instance $(m, n,\mathbf{C}, \mathbf{w}^+, \mathbf{w}^-)$, every voter $i$ votes by partitioning $C$ into $C'_i$ and $\overline{C'_i}$. 
Let the weight function $w$ be such that for voter $i$ and candidate $c_j$, $w(i, j) = w_i^+$ if $c_j$ is in $C'_i$ and $w(i, j) = w_i^-$ otherwise. 
For any $j$ in $[n]$, we define $A(j) = \sum_{i \in [m]} w(i, j)$ as the approval of candidate $c_j$.
The candidate $c_j$ with highest approval $A(j)$ is declared the winner. 
Ties are broken according to a fixed ordering of the candidates (e.g., in favor of lower indices).
\end{mechanism}

We note that the approval voting mechanism can be obtained from the weighted approval voting mechanism by setting weights $w_i^+$ to $1$ and $w_i^-$ to $0$ for all voters $i$.
In Section~\ref{sec:preliminaries}, we defined SP, WGSP, and SGSP in the $\OOFLG$ setting.
These definitions are easily generalized to the voting setting.
Brams and Fishburn proved that the approval voting mechanism is SP~\cite{BRAMS1978}. 
Below we prove that our weighted approval voting mechanism is WGSP (and hence also SP).

\begin{restatable}{theorem}{wgspappvote}
\label{thm:wgsp-app-vote}
Mechanism~\ref{mech:weighted-approval-voting} is WGSP.
\end{restatable}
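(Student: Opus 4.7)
My plan is to argue by contradiction. Suppose some non-empty coalition $C$ deviates from truthful reporting in such a way that every voter $i \in C$ strictly increases their welfare. Let $c_j$ be the winner under truthful reporting and $c_{j'}$ the winner after the deviation. Since each voter's preferences are dichotomous (only two tiers), a strict welfare gain for voter $i \in C$ forces $c_j \in \overline{C_i}$ and $c_{j'} \in C_i$; in particular $j \ne j'$. This is the hook that makes the truthful ballot of each coalition member already ``as helpful as possible'' in the head-to-head between $c_j$ and $c_{j'}$.

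The second step is to formalize that observation as an inequality comparing approvals before and after the deviation. Let $w$ (resp., $w'$) denote the weight function defined by Mechanism~\ref{mech:weighted-approval-voting} under truthful reporting (resp., under the coalitional deviation), and let $A$ (resp., $A'$) denote the resulting candidate approvals. Truthfulness gives $w(i,j') = w_i^+$ and $w(i,j) = w_i^-$, while any ballot whatsoever satisfies $w'(i,j') \le w_i^+$ and $w'(i,j) \ge w_i^-$, because $w_i^- \le w_i^+$. Voters outside $C$ contribute identical terms to $A$ and $A'$, so summing these per-voter inequalities over $i \in C$ yields
\begin{equation*}
A'(j') - A'(j) \;\le\; A(j') - A(j).
\end{equation*}

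For the final step, combine the inequality with the fact that $c_j$ beats $c_{j'}$ under truthful reporting (so $A(j) \ge A(j')$) and $c_{j'}$ beats $c_j$ under the deviation (so $A'(j') \ge A'(j)$). This forces $A(j) = A(j')$ and $A'(j) = A'(j')$. The main (mild) obstacle I anticipate is handling the equality case: both match-ups between $c_j$ and $c_{j'}$ end in a tie, yet the truthful tie is resolved in favor of $c_j$ while the deviated tie is resolved in favor of $c_{j'}$. Since Mechanism~\ref{mech:weighted-approval-voting} breaks ties by a fixed ordering on candidates that is independent of the ballots, this is a contradiction, and the argument is complete.
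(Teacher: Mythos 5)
Your proof is correct and follows essentially the same route as the paper's: both derive that the coalition's truthful ballots already maximize the winner's advantage in the head-to-head between the two winners (you package this as the single inequality $A'(j')-A'(j)\le A(j')-A(j)$, the paper as the pair $A'(k)\ge A(k)$ and $A(\ell)\ge A'(\ell)$), and both dispose of the resulting tie via the fixed, ballot-independent tie-breaking order.
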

\begin{proof}
Assume for the sake of contradiction that there is an
instance in which a coalition of voters $U$ with true preferences
$\{(C_i,\overline{C_i})\}_{i\in U}$ all benefit by misreporting their
preferences as $\{(C'_i,\overline{C'_i})\}_{i\in U}$.  For any
candidate $c_j$, let $A(j)$ denote the approval of $c_j$ when
coalition $U$ reports truthfully, and let $A'(j)$ denote the approval
of $c_j$ when coalition $U$ misreports.

Let $c_k$ be the winning candidate when coalition $U$ reports
truthfully, and let $c_\ell$ be the winning candidate when coalition
$U$ misreports. Since every voter in $U$ benefits when the coalition
misreports, we know that $c_k$ belongs to
$\bigcap_{i\in U}\overline{C_i}$ and $c_\ell$ belongs to $\bigcap_{i\in U}C_i$.

Since $c_k$ belongs to $\bigcap_{i\in U}\overline{C_i}$, we deduce that
$A'(k)=A(k)+\sum_{i\in U:c_k\in C'_i}w_i^+-w_i^-$ and hence
$A'(k)\geq A(k)$.  Similarly, since $c_\ell$ belongs to $\bigcap_{i\in U}C_i$, we
deduce that
$A'(\ell)=A(\ell)+\sum_{i\in U:c_\ell\in\overline{C'_i}}w_i^--w_i^+$
and hence $A(\ell)\geq A(\ell')$.

Since $c_k$ wins when coalition $U$ reports truthfully, one of the
following two cases applies.

Case~1: $A(k)>A(\ell)$. Since $A'(k)\geq A(k)$ and
$A(\ell)\geq A'(\ell)$, the case condition implies that
$A'(k)>A'(\ell)$. Hence $c_\ell$ does not win when coalition $U$
misreports, a contradiction.

Case~2: $A(k)=A(\ell)$ and $c_k$ has higher priority than $c_\ell$.
Since $A'(k)\geq A(k)$ and $A(\ell)\geq A(\ell')$, the case condition
implies that $A'(k)\geq A(\ell')$ and $c_k$ has higher priority than
$c_\ell$. Hence $c_\ell$ does not win when coalition $U$ misreports, a
contradiction.
\end{proof}

\begin{theorem}
Mechanism~\ref{mech:weighted-approval-voting} is not SGSP.
\end{theorem}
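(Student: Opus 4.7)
The plan is to exhibit an explicit counterexample: a $\DV$ instance together with a coalition $U$ whose joint misreport strictly increases the welfare of some member of $U$ while leaving every other member's welfare unchanged, thereby violating the SGSP definition. Since in the dichotomous voting setting a voter's welfare depends only on whether the winning candidate lies in their top or bottom tier, it is enough to exhibit coalitional lies that change the winner from some candidate $c_k$ to some candidate $c_\ell$ in a way that (i) at least one member of $U$ has $c_k \in \overline{C_i}$ and $c_\ell \in C_i$, and (ii) every other member of $U$ has both $c_k$ and $c_\ell$ in their top tier, so that for them the switch from $c_k$ to $c_\ell$ is a switch between equally acceptable outcomes.

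I expect the smallest working instance to use three candidates, two coalition members, and at most one additional non-coalition voter, with uniform weights $w_i^+ = 1$ and $w_i^- = 0$ (so the impossibility holds already for ordinary, unweighted approval voting). The coalition's strategy would be for each of its members to drop from their reported top tier every candidate other than $c_\ell$; this preserves the approval total of $c_\ell$ while reducing the approval of $c_k$ by just enough to flip the outcome.

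After writing down the instance, I would verify three things in turn: first, that with truthful reporting the approval totals (broken with the fixed lexicographic tie-breaking rule) elect $c_k$; second, that after the coalition's misreport the approval totals instead elect $c_\ell$; and third, that the dichotomous welfares are consistent with the claimed pattern, namely one coalition member strictly gains while every other coalition member is indifferent. I anticipate that the main delicate point will be arranging (iii): the ``indifferent'' coalition member must truly have both $c_k$ and $c_\ell$ in their top tier, which in turn constrains how much approval that member contributes to $c_k$ under truth-telling and how much can be withheld by lying. Balancing these constraints against the rest of the electorate's votes, so that the flip from $c_k$ to $c_\ell$ actually occurs, is what fixes the shape of the counterexample; once the instance is chosen, verification reduces to adding up a handful of weights.
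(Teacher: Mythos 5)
Your approach is correct and is essentially the paper's: the paper's counterexample likewise has the coalition contain one voter who strictly prefers the new winner plus voters who are indifferent because both candidates lie in their top tier, and the lie consists of the indifferent voters dropping the current winner from their reported approval set (concretely, $5$ voters, $2$ candidates, unit weights, with $C_1=\{c_1\}$, $C_2=C_3=\{c_1,c_2\}$, $C_4=C_5=\{c_2\}$, and coalition $\{1,2,3\}$ all reporting $\{c_1\}$). Your plan instantiates without difficulty (a two-candidate example already suffices), so the remaining work is only the arithmetic verification you describe.
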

\begin{proof}
Let $I$ be a $\DV$ instance with $5$ voters, candidates $c_1$ and
  $c_2$, and weights $w^+_i = 1$ and $w^-_i = 0$ for all $i$ in $\{1,
  \dots, 5\}$.  Each voter in $I$ votes truthfully, and their votes
  are: $C_1 = \{c_1\}$, $C_2 = C_3 = \{c_1, c_2\}$, and $C_4 = C_5 =
  \{c_2\}$.  Thus $A(1) = 3$ and $A(2) = 4$, and
  Mechanism~\ref{mech:weighted-approval-voting} declares $c_2$ the
  winner.  Let $I'$ be the $\DV$ instance with the same voters,
  candidates, and weights as in $I$.  Voters $1, 2$, and $3$ form a
  coalition and vote $\{c_1\}$, while voters $4$ and $5$ vote
  $\{c_2\}$.  Then $A(1) = 3$ and $A(2) = 2$, and
  Mechanism~\ref{mech:weighted-approval-voting} declares $c_1$ the
  winner.  This result benefits voter $1$, without any loss to voters
  $2$ and $3$.  Thus Mechanism~\ref{mech:weighted-approval-voting} is
  not SGSP.
\end{proof}


\section{Efficient Mechanisms}
\label{sec:sum-min}

In this section, we present efficient mechanisms for $\OOFLG$.  In
Section~\ref{subsec:sum-min-interval}, we address the case where the
agents are located in the unit interval.  In
Section~\ref{subsec:sum-min-cycle} (resp., Section~\ref{subsec:sum-min-square}), we consider the case where the agents are
located on a cycle (resp., square).

\subsection{The unit interval}
\label{subsec:sum-min-interval}

We now present our efficient mechanism for $\OOFLG$.

\begin{mechanism}
\label{mech:sum-min}
For a given reported $\DOFL$ instance $I = \Instance$,
output the lexicographically least solution $\mathbf{y}$ in $\{0, 1\}^k$ that maximizes the social welfare $\SSW (I,\mathbf{y})$.
\end{mechanism}

Mechanism~\ref{mech:sum-min} runs in $O(nk2^k)$ time, and hence
  runs in polynomial time when $k$ is $O(\log n)$.

\begin{theorem}
\label{thm:sp-sum-min}
Mechanism~\ref{mech:sum-min} is WGSP.
\end{theorem}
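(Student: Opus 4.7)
The plan is to exhibit a welfare-preserving reduction from Mechanism~\ref{mech:sum-min} to Mechanism~\ref{mech:weighted-approval-voting} and then invoke Theorem~\ref{thm:wgsp-app-vote}. The crucial observation is that because each coordinate $y_j$ of any output of Mechanism~\ref{mech:sum-min} lies in $\{0,1\}$, we have $\dist(x_i, y_j) \in \{x_i, 1-x_i\}$, so $w(I,i,\Soln) = \min_{j:F_j\in a_i} \dist(x_i, y_j)$ takes only one of two values: it equals $\min(x_i,1-x_i)$ when some hated facility sits at the endpoint closer to $x_i$, and $\max(x_i,1-x_i)$ otherwise (the convention from Section~\ref{sec:preliminaries} gives the latter when $a_i = \emptyset$). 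Every agent therefore has effectively dichotomous preferences over the $2^k$ candidate solutions in $\{0,1\}^k$, which is exactly the structure weighted approval voting was designed to exploit.

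First I would build the $\DV$ instance associated with a reported $\DOFL$ instance $I = \Instance$. Take $m = n$ voters and $2^k$ candidates indexed by the elements of $\{0,1\}^k$ ordered lexicographically. For each agent $i$, set $w_i^+ = \max(x_i,1-x_i)$ and $w_i^- = \min(x_i,1-x_i)$, let $e_i \in \{0,1\}$ denote the endpoint farther from $x_i$, and define the top tier $C_i = \{\Soln \in \{0,1\}^k : y_j = e_i \text{ for every } F_j \in a_i\}$ (with $C_i = \{0,1\}^k$ when $a_i = \emptyset$). A short case analysis then yields $w(I,i,\Soln) = w_i^+$ when $\Soln \in C_i$ and $w_i^-$ otherwise, so the approval score $A(\Soln)$ of candidate $\Soln$ in the induced $\DV$ instance equals $\SSW(I,\Soln)$. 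Because lexicographic tie-breaking on $\{0,1\}^k$ coincides with the fixed candidate ordering of Mechanism~\ref{mech:weighted-approval-voting}, the two mechanisms return the same solution.

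It then remains to transfer WGSP across this reduction. A coalition $C$ reporting $\LieAversion_C$ in $\OOFLG$ induces a coalition in the $\DV$ instance that misreports its top tiers to $\{C_i'\}_{i \in C}$, and for any agent $i$ with $x_i \neq 1/2$ and $a_i \neq \emptyset$, a strict welfare improvement in $\OOFLG$ is exactly the statement that the winning solution moves from outside $C_i$ to inside $C_i$, i.e., from the bottom to the top tier in the voting sense. Agents with $x_i = 1/2$ or $a_i = \emptyset$ cannot strictly improve at all, so they cannot threaten WGSP. Applying Theorem~\ref{thm:wgsp-app-vote} to the induced $\DV$ instance yields some coalition member whose voting outcome does not improve, and hence the corresponding agent's welfare does not rise. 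The one subtlety worth flagging is that the map $a_i' \mapsto C_i'$ is not surjective onto the power set of $\{0,1\}^k$, so agents in $\OOFLG$ have a strictly smaller menu of admissible deviations than voters in $\DV$; but shrinking the deviation set can only make WGSP easier to achieve, so this causes no trouble. The main obstacle in executing the plan is the dichotomous welfare claim, but the required case analysis is routine.
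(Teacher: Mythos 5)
Your proof is correct and follows essentially the same route as the paper: both reduce Mechanism~\ref{mech:sum-min} to weighted approval voting by treating the $2^k$ solutions in $\{0,1\}^k$ as candidates, setting $w_i^+=\max(x_i,1-x_i)$ and $w_i^-=\min(x_i,1-x_i)$ (the paper encodes this via the assumption $x_i\le 1/2$), and then invoking Theorem~\ref{thm:wgsp-app-vote}. Your explicit handling of the edge cases $x_i=1/2$ and $a_i=\emptyset$, and of the fact that the reachable top tiers form only a subset of all possible votes, is somewhat more careful than the paper's one-line ``it is easy to see,'' but it is the same argument.
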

\begin{proof}
To establish this theorem, we show that Mechanism~\ref{mech:sum-min} can be equivalently expressed in terms of the approval voting mechanism. Hence Theorem~\ref{thm:wgsp-app-vote} implies the theorem.

Let $(I,I')$ denote a $\OOFLG$ instance where $I=\Instance$ and $I'=\LieInstance$.
We view each agent $i \in [n]$ as a voter, and each $\mathbf{y}$ in $\{0,1\}^k$ as a candidate. We obtain the top-tier candidates $C_i$ of voter $i$, and their reported top-tier candidates $C'_i$, from $a_i$ and $a'_i$, respectively. Assume without loss of generality that $x_i \le 1/2$ (the other case can be handled similarly). Set $C_i = \{\mathbf{y} = (y_1, \dots, y_k) \in \{0,1\}^k \mid y_j = 1 \text{ for all } F_j \in a_i\}$ and similarly $C'_i = \{\mathbf{y} = (y_1, \dots, y_k) \in \{0,1\}^k \mid y_j = 1 \text{ for all } F_j \in a'_i\}$. Also set $w^+_i = 1-x_i$ and $w_i^- = x_i$. With this notation, it is easy to see that $A(\mathbf{y}) = \SSW (I', \mathbf{y})$, and that choosing the $\mathbf{y}$ with the highest social welfare in Mechanism~\ref{mech:sum-min} is the same as electing the candidate with the highest approval in Mechanism~\ref{mech:weighted-approval-voting}. 
\end{proof}

We show that Mechanism~\ref{mech:sum-min} is efficient for $k =
3$. First, we note a well-known result about the 1-Maxian problem. In
this problem, there are $n$ points located at $z_1, \dots, z_n$ in the
interval $[a,b]$, and the task is to choose a point in $[a,b]$ such
that the sum of the distances from that point to all of the $z_i$'s is
maximized. This result follows from the fact that the sum of
  convex functions is convex, and that a convex function on a closed
  interval is maximized at the one of the endpoints of the
  interval~\cite{BOYD2004}.

\begin{lemma}
\label{lem:1-maxian}
Let $[a,b]$ be a real interval, let $z_1, \dots, z_n$ belong to $[a,b]$, and let $f(z)$ denote $\sum_{i\in [n]} |z-z_i|$.
Then $\max_{z\in [a,b]} f(z)$ belongs to 
$\{f(a), f(b)\}$. 
\end{lemma}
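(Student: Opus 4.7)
The plan is to execute the sketch that the paper already flags: express $f$ as a sum of convex functions, then apply the fact that a convex function on a compact interval is maximized at an endpoint. Given how transparent this route is, I do not anticipate any genuine obstacle; the proof is essentially a verification.

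First, I would fix an index $i \in [n]$ and argue that the function $g_i(z) = |z - z_i|$ is convex on $\mathbb{R}$. The cleanest way is to invoke the triangle inequality: for any $u, v \in \mathbb{R}$ and $\lambda \in [0,1]$,
\[
|\lambda u + (1-\lambda) v - z_i| = |\lambda (u - z_i) + (1-\lambda)(v - z_i)| \le \lambda |u - z_i| + (1-\lambda)|v - z_i|,
\]
which is exactly the defining inequality for convexity of $g_i$.

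Second, I would use the elementary observation that a finite sum of convex functions is convex (just add the convexity inequalities for the summands, weighted by the same $\lambda$). Hence $f = \sum_{i \in [n]} g_i$ is convex on $[a,b]$.

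Finally, I would apply the standard maximum principle for convex functions on a compact interval. Any $z \in [a,b]$ can be written as $z = \lambda a + (1-\lambda) b$ for some $\lambda \in [0,1]$, so convexity of $f$ yields
\[
f(z) \le \lambda f(a) + (1 - \lambda) f(b) \le \max\{f(a), f(b)\}.
\]
This shows $\max_{z \in [a,b]} f(z) \le \max\{f(a), f(b)\}$, and since $a, b \in [a,b]$ the reverse inequality is immediate. Therefore the maximum is attained at an endpoint, as claimed. The only ``choice'' in the proof is how much to outsource to the cited convex-analysis reference; I would keep the argument self-contained as above, since it is only a few lines.
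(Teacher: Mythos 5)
Your proof is correct and follows exactly the route the paper indicates (the lemma is stated there without a separate proof, justified only by the remark that a sum of convex functions is convex and that a convex function on a closed interval attains its maximum at an endpoint); you have simply filled in the standard details of that sketch.
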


Before proving the main theorem, we establish Lemma~\ref{lem:opt-sum-min}, which follows from Lemma~\ref{lem:1-maxian}.

\begin{restatable}{lemma}{optsummin}
\label{lem:opt-sum-min}
Let $I = (n, k, \Location, \Aversion)$ denote the reported $\DOFL$ instance,
let $Y$ denote the set of all $y$ in $[0,1]$ such that it is efficient to build all $k$ facilities at $y$, and assume that $Y$ is non-empty. Then $Y \cap \{0,1\}$ is non-empty.
\end{restatable}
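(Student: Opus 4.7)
The plan is to reduce the optimization over diagonal placements (those of the form $(y, y, \ldots, y)$) to the one-dimensional 1-Maxian problem handled by Lemma~\ref{lem:1-maxian}. The key observation is that when all $k$ facilities are colocated at a single point $y \in [0,1]$, the welfare of each agent simplifies drastically. For any agent $i$ with $a_i \neq \emptyset$, we have $w(I, i, (y, \ldots, y)) = \min_{j : F_j \in a_i} |x_i - y| = |x_i - y|$, since every aversed facility sits at the same point. For any agent $i$ with $a_i = \emptyset$, the welfare is, by definition, $\max(x_i, 1 - x_i)$, which is independent of $y$.

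Consequently, the social welfare of a diagonal placement decomposes as $\SSW(I, (y, \ldots, y)) = g(y) + C$, where
\[
g(y) = \sum_{i \,:\, a_i \neq \emptyset} |x_i - y|
\qquad\text{and}\qquad
C = \sum_{i \,:\, a_i = \emptyset} \max(x_i, 1 - x_i).
\]
Since $C$ does not depend on $y$, maximizing $\SSW(I, (y, \ldots, y))$ over $y \in [0,1]$ is equivalent to maximizing $g$. Lemma~\ref{lem:1-maxian} then says that this maximum is attained at an endpoint of $[0,1]$, i.e., $\max_{y \in [0,1]} g(y) \in \{g(0), g(1)\}$.

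To finish, I would invoke the hypothesis that $Y$ is non-empty. Pick any $y^* \in Y$, so $(y^*, \ldots, y^*)$ is globally efficient over $\mathbf{y} \in [0,1]^k$. In particular, $y^*$ is optimal among diagonal placements, giving $g(y^*) = \max_{y \in [0,1]} g(y)$. Combined with Lemma~\ref{lem:1-maxian}, either $g(0) = g(y^*)$ or $g(1) = g(y^*)$, and adding the constant $C$ back yields $\SSW(I, (y, \ldots, y)) = \SSW(I, (y^*, \ldots, y^*))$ for the corresponding endpoint $y \in \{0, 1\}$, placing it in $Y$.

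There is no real obstacle here: once the welfare function is rewritten for diagonal placements, the result is an immediate consequence of Lemma~\ref{lem:1-maxian}. The only subtlety is correctly handling the indifferent agents, whose welfare is defined specially (as $\max(x_i, 1 - x_i)$ in the path setting) and therefore only contributes the constant $C$ that drops out of the optimization.
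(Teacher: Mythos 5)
Your proposal is correct and follows essentially the same route as the paper's proof: decompose the diagonal social welfare into the $1$-Maxian sum over non-indifferent agents plus a $y$-independent constant from the indifferent agents, apply Lemma~\ref{lem:1-maxian}, and use the non-emptiness of $Y$ to conclude that an endpoint achieves the global optimum. No issues.
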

\begin{proof}
Let $U$ denote $\indiff(I)$. When all of the facilities are built at $y$,
\begin{equation*}
    \SSW (I, (y, \dots, y)) = \sum_{i\in [n] \setminus U} |x_i - y| + \sum_{i \in U} w(I, i, y).
\end{equation*}

Since $Y$ is non-empty, $\max_y \SSW (I, (y, \dots, y)) = \max_\mathbf{y} \SSW(I, \mathbf{y})$.
Moreover, since $\sum_{i \in U} w(I, i, y)$ does not depend on $y$, Lemma~\ref{lem:1-maxian} implies that 
\begin{equation*}
    \max(\SSW (I, (0, \dots, 0)), \SSW (I, (1, \dots, 1))) = \max_y \SSW (I, (y, \dots, y)).
\end{equation*} 
Thus, if $\SSW (I, (0, \dots, 0)) \ge \SSW (I, (1, \dots, 1))$, it is efficient to build all $k$ facilities at $0$.
Otherwise, it is efficient to build all $k$ facilities at $1$.
\end{proof}

\begin{theorem}
\label{thm:opt-sum-min}
Mechanism~\ref{mech:sum-min} is efficient for $k=3$. 
\end{theorem}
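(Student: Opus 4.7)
The goal is to show that $\max_{\mathbf{y} \in [0,1]^3} \SSW(I, \mathbf{y}) = \max_{\mathbf{y} \in \{0,1\}^3} \SSW(I, \mathbf{y})$; this immediately implies Mechanism~\ref{mech:sum-min} is efficient for $k=3$. I plan to establish this equality via a two-stage exchange argument that begins from an arbitrary optimal $\mathbf{y}^* \in [0,1]^3$ and constructs an optimal $\mathbf{y}' \in \{0,1\}^3$. The two crucial properties I would exploit are: (P1) welfare is the minimum $w(I, i, \mathbf{y}) = \min_{j : F_j \in a_i} |x_i - y_j|$ over distances to disliked facilities; and (P2) each distance $|x_i - y_j|$ is convex in $y_j$, so Lemma~\ref{lem:1-maxian} applies to sums of such terms.

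Stage~1 would reduce to an optimal solution with at most two distinct facility locations. Starting from any optimal $\mathbf{y}^*$, relabel facilities so $y_1^* \le y_2^* \le y_3^*$ and consider moving $y_2$ to either $y_1^*$ or $y_3^*$ while keeping the other two coordinates fixed. I would partition agents by their aversion set on $\{F_1,F_2,F_3\}$: agents with $F_2 \notin a_i$ are unaffected; agents with $\{F_1,F_2\} \subseteq a_i$ and $F_3 \notin a_i$ weakly benefit from $y_2 \to y_1^*$, since the new welfare $|x_i - y_1^*|$ is at least the old $\min(|x_i - y_1^*|, |x_i - y_2^*|)$; symmetrically, agents with $\{F_2,F_3\} \subseteq a_i$ and $F_1 \notin a_i$ weakly benefit from $y_2 \to y_3^*$; agents with $a_i = \{F_1,F_2,F_3\}$ see welfare weakly improve under either move, because $\min(|x_i - y_1^*|, |x_i - y_3^*|) \ge \min(|x_i - y_1^*|, |x_i - y_2^*|, |x_i - y_3^*|)$; and the $a_i = \{F_2\}$ class contributes a convex sum whose maximum over $[y_1^*, y_3^*]$ lies at one of the two endpoints by Lemma~\ref{lem:1-maxian}. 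Aggregating across classes and picking the better of the two candidate moves yields an optimal configuration with at most two distinct facility locations.

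Stage~2 would push these (at most) two distinct values into $\{0,1\}$. If all three facilities coincide at a single location, Lemma~\ref{lem:opt-sum-min} directly gives an optimal solution in $\{0,1\}^3$. Otherwise we have distinct cluster locations $y_L < y_R$. Fixing $y_R$, each agent's welfare as a function of $y_L$ is either $|x_i - y_L|$ or a constant (depending on the aversion set and on whether the min is achieved inside the $y_L$ cluster), so the total is, up to an additive constant, a sum of convex distance functions in $y_L$. By Lemma~\ref{lem:1-maxian} we can push $y_L$ to $0$ without loss; a symmetric argument then pushes $y_R$ to $1$.

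The main obstacle I anticipate is Stage~1: the function $\SSW(I, (y_1^*, y_2, y_3^*))$ is not globally convex in $y_2$, because of the truncation induced by the $\min$ in (P1), so Lemma~\ref{lem:1-maxian} cannot be applied to the $y_2$-variable directly. The per-aversion-class decomposition, combined with a balancing argument---agents with $F_1 \in a_i$ favor moving $y_2$ left while those with $F_3 \in a_i$ favor moving right, and the $a_i = \{F_2\}$ class breaks the tie via Lemma~\ref{lem:1-maxian}---is the key technical ingredient. One subtlety is that agents with $\{F_2, F_3\} \subseteq a_i$ and $F_1 \notin a_i$ whose locations lie on the $y_1^*$ side of $(y_1^*+y_3^*)/2$ may lose welfare under the move $y_2 \to y_1^*$, so the accounting must carefully verify that such losses are dominated by gains in the other classes on at least one of the two candidate moves.
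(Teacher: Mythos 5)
Your overall skeleton matches the paper's proof: an exchange argument that first collapses the three facilities to at most two distinct locations by moving $y_2$ to one of $y_1^*,y_3^*$, then pushes the remaining locations into $\{0,1\}$ via Lemma~\ref{lem:1-maxian} and Lemma~\ref{lem:opt-sum-min}. However, Stage~1 as you describe it has a genuine gap, and it is exactly the one you flag at the end. Your per-aversion-class accounting gives guarantees that pull in opposite directions: the $\{F_1,F_2\}$-agents are only guaranteed not to lose under the move $y_2\to y_1^*$, the $\{F_2,F_3\}$-agents only under $y_2\to y_3^*$, and Lemma~\ref{lem:1-maxian} applied to the $\{F_2\}$-class selects one endpoint that you do not get to choose. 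A ``losses dominated by gains'' argument across classes does not close this: there is no reason the gains of the $\{F_1,F_2\}$-class at $y_1^*$ should outweigh the losses of the $\{F_2,F_3\}$-class there (consider an instance with no $\{F_2\}$-only agents, one $\{F_1,F_2\}$-agent near $y_3^*$ and one $\{F_2,F_3\}$-agent near $y_1^*$). Moreover, the natural averaging shortcut $\SSW(y_1^*)+\SSW(y_3^*)\ge 2\,\SSW(y_2^*)$ fails because $y_2^*$ need not be the midpoint of $[y_1^*,y_3^*]$.

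The missing idea is a \emph{per-agent} reclassification rather than a per-class one. For an agent with $a_i=\{F_1,F_2\}$, split on whether $|y_1^*-x_i|\le|y_3^*-x_i|$: if so, $w_i$ attains its maximum over $[y_1^*,y_3^*]$ at \emph{both} endpoints (with equal value $|y_1^*-x_i|$), so this agent is content with whichever endpoint is chosen; if not, then $x_i>(y_1^*+y_3^*)/2$ and one checks that $\min(|y_1^*-x_i|,|y_2-x_i|)=|y_2-x_i|$ for \emph{every} $y_2\in[y_1^*,y_3^*]$, so this agent's welfare is genuinely convex on the whole interval and can be folded into the Lemma~\ref{lem:1-maxian} bucket together with the $\{F_2\}$-agents (symmetrically for $\{F_2,F_3\}$). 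With every agent in one of these two buckets --- ``convex on the interval'' or ``maximized at both endpoints'' --- the endpoint selected by Lemma~\ref{lem:1-maxian} for the first bucket is automatically acceptable to the second, and Stage~1 closes. The same dichotomy (not ``$|x_i-y_L|$ or a constant,'' but ``$|x_i-y_L|$ or maximized at both endpoints of the interval being swept'') is what you need in Stage~2 as well: an agent averse to facilities at both clusters has welfare $\min(|x_i-y_L|,c)$, which is neither constant nor globally $|x_i-y_L|$ until you do this case split; and Lemma~\ref{lem:1-maxian} only lets you push $y_L$ to \emph{some} endpoint of the sweep interval (if it lands on $y_R$ you fall back to Lemma~\ref{lem:opt-sum-min}), not to $0$ by fiat.
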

\newcommand{\tuple}{ y_{1}^*,y_{3}^* }
\newcommand{\interval}{\ensuremath{[\tuple ]}}
\newcommand{\fixing}{|_{y_1 = y_1^*, y_3 =y_3^*}}
\newcommand{\fixingTwoSame}{|_{y_2 = y_2^*, y_3 =y_2^*}}
\newcommand{\fixingTwoSameWithZero}{|_{y_1=0, y_2 = y_3=y}}
\begin{proof}
Let $I = \Instance$ denote the reported $\DOFL$ instance and let $\mathbf{y}^* = (y_1^*,y_2^*, y_3^*)$ be an efficient solution for $I$ such
that $y_1^*\leq y_2^*\leq y_3^*$.

Consider fixing variables $y_1$ and $y_3$ in the social welfare
function $\SSW(I, \mathbf{y})$. That is, we have
\begin{equation*}
    \SSW(I, \mathbf{y})\fixing = \sum_{i\in [n]} w(I,i, \mathbf{y})\fixing.
\end{equation*}
For convenience, let $\SSW(y_2)$ denote $\SSW(I,\mathbf{y})\fixing $ and
let $w_i(y_2)$ denote $w(I,i, \mathbf{y})\fixing$ for each agent
$i$. 

Claim~1: For each agent $i$, the welfare function $w_i(y_2)$ with $y_2
\in \interval$ satisfies either~(1) $w_i(y_2) = |y_2-x_i|$ or~(2)
$w_i(y_1^*) = w_i(y_3^*) = \max_{y\in \interval} w_i(y)$.

Proof: Fix an agent $i$. We consider five cases.

Case~1: $F_2 \notin a_i$. Since the welfare of agent $i$ is
independent of the location of $F_2$, $w_i$ is a constant
function. Hence~(2) is satisfied.

Case~2: $a_i = \{F_2\}$. By definition, we have $w_i(y_2) =
|y_2-x_i|$. Hence~(1) is satisfied.

Case~3: $a_i= \{F_1,F_2\}$. By definition, we have $w_i(y_2) =
\min(|y_1^*-x_i|,|y_2-x_i|)$. 
Notice that $w_i(y_1^*) = |y_1^*-x_i|$.
Since $\min(|y_1^*-x_i|,|y_2-x_i|) \le |y_1^*-x_i|$ for all $y_2$ in $\interval$, we have $w_i(y_1^*) = |y_1^*-x_i| = \max_{y\in \interval} w_i(y)$.
Moreover, $w_i(y_3^*) =
\min(|y_1^*-x_i|,|y_3^*-x_i|)$. We consider two cases.

Case~3.1: $|y_1^*-x_i| > |y_3^*-x_i|$. 
Then $x_i$ belongs to $((y_1^* + y_3^*)/2, 1]$.
Hence $|y_2 - x_i| \le |y_1^* - x_i|$ for all $y_2$ in $\interval$.
Thus $w_i(y_2) = |y_2-x_i|$ for all $y_2$ in $\interval$, 
that is, $w_i(y_2)$ satisfies~(1).

Case~3.2: $|y_1^*-x_i|\le |y_3^*-x_i|$. Then $w_i(y_3^*)=
|y_1^*-x_i|=\max_{y\in \interval} w_i(y) = w_i(y_1^*)$ and hence
$w_i(y_2)$ satisfies~(2).

Case~4: $a_i= \{F_2,F_3\}$. This case is symmetric to Case~3, and can
be handled similarly.

Case~5: $a_i= \{F_1,F_2,F_3\}$. By definition, we have $w_i(y_2) =
\min(|y_1^*-x_i|,|y_2-x_i|,|y_3^*-x_i|)$.  Notice that $w_i(y_1^*) =
w_i(y_3^*) = \min(|y_1^*-x_i|,|y_3^*-x_i|)$.  Also notice that for any
$y_2$ in $\interval$, we have $w_i(y_2) =
\min(|y_1^*-x_i|,|y_2-x_i|,|y_3^*-x_i|) \le
\min(|y_1^*-x_i|,|y_3^*-x_i|) = w_i(y_1^*)$.  Hence~(2) holds.

This concludes our proof of Claim~1.

Claim~2: There is a solution that optimizes $\max_\mathbf{y}
\SSW(I, \mathbf{y})$ and builds facilities in at most two locations.

Proof:
We establish the claim by proving
that either $\SSW(I,(y_1^*,y_{1}^*, y_{3}^*)) \ge \SSW(I,\mathbf{y}^*) $
or $\SSW(I, (y_1^*,y_{3}^*, y_{3}^*)) \ge \SSW(I, \mathbf{y}^*)$.

Claim~1 implies that the set of agents $[n]$ can be partitioned into two sets
$(S,\overline{S})$ such that $w_i(y_2)$ satisfies (1) for all
$i$ in $S$, and $w_i(y_2)$ satisfies~(2) for all $i$ in
$\overline S$. Thus $\SSW(y_2) = \sum_{i\in [n]}
w_i(y_2) = \sum_{i\in S} w_i(y_2) + \sum_{i\in \overline S} w_i(y_2)$.
By Lemma~\ref{lem:1-maxian}, there is a $b$ in $\{ \tuple \}$ such
that $\sum_{i\in S} w_i(b)\ge \sum_{i\in S} w_i(y_2)$ for all $y_2$ in
\interval. For any $i$ in $\overline{S}$, we deduce from (2)
that $w_i(b)\ge w_i(y_2)$ for all $y_2$ in \interval. Therefore,
$\SSW(b)\ge \SSW(y_2)$ for all $y_2$ in $\interval$. This
completes our proof of Claim~2.

Having established Claim~2, we can assume without loss of generality
that $y_2^*=y_3^*$.  A simpler version of the arguments given in
Claims~1 and~2 above can be used to prove that either $(0,
y_2^*,y_2^*)$ or $(y_2^*, y_2^*,y_2^*)$ is an efficient solution. If
$(0, y_2^*,y_2^*)$ is efficient, then we can use a simpler version of
the arguments in Claims~1 and~2 to prove that either $(0,0,0)$ or
$(0,1,1)$ is efficient. If $(y_2^*, y_2^*,y_2^*)$ is efficient, then
by applying Lemma~\ref{lem:opt-sum-min} with $k=3$, we deduce that
either $(0,0,0)$ or $(1,1,1)$ is efficient. Thus, there is a $0$-$1$
efficient solution.  The efficiency of Mechanism~\ref{mech:sum-min}
follows.
\end{proof}

When $k=2$ (resp., $1$), we can add one (resp., two) dummy facilities and use Theorem~\ref{thm:opt-sum-min} to establish that Mechanism~\ref{mech:sum-min} is efficient for $k = 2$ (resp., $1$). Theorem~\ref{thm:sum-min-sgsp-lower-bound} below provides a lower bound on the
approximation ratio of any SGSP efficient mechanism; this result implies that Mechanism~\ref{mech:sum-min} is not SGSP.

\begin{restatable}{theorem}{summinsgsplowerbound}
\label{thm:sum-min-sgsp-lower-bound}
Let $M$ be a single-facility SGSP $\alpha$-efficient $\OOFLG$ mechanism for some
positive constant $\alpha$.  Then $\alpha\geq 5/3$.
\end{restatable}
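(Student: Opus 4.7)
The plan is to exhibit, for any SGSP mechanism $M$, a location profile $\Location$ and aversion profile $\Aversion$ on which $M$'s approximation ratio is at least $5/3$. The argument proceeds in three stages: deriving the restrictions that SGSP places on $M$, constructing a location profile on which those restrictions bind tightly, and exhibiting a specific aversion profile whose approximation ratio is forced to be at least $5/3$.

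First, I would establish the structural consequences of SGSP. Fixing a location profile $\Location$, let $y_\Aversion = M(\Location, \Aversion)$. Individual strategyproofness implies that for each agent $i$ at $x_i$, the quantity $|y_\Aversion - x_i|$ is nondecreasing as $i$'s reported aversion changes from indifferent to hater (otherwise a true hater could strictly benefit by lying). Coalitional strategyproofness, applied to coalitions of agents whose welfare gradients align with respect to $y$---in particular, agents at the same location or, more generally, agents all on the same side of $y$---forbids simultaneous strict gains by all coalition members. These constraints jointly pin down the mapping $\Aversion \mapsto y_\Aversion$ up to tightly restricted degrees of freedom, matching the paper's informal claim that any SGSP single-facility mechanism must ``essentially disregard the agent preferences.''

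Second, I would construct an adversarial location profile on which these constraints are maximally restrictive. The natural candidates are profiles with agents clustered near the two endpoints in an asymmetric ratio (so that OPT swings between $0$ and $1$ as aversion changes) together with one or more ``indifferent-friendly'' positions where the welfare of an indifferent agent is fixed and large. The tight ratio $5/3$ emerges from a profile such as $(1,1,0)$ over a location profile of the form two agents at $0$ and one at $1$, or a scaled-up analogue, where $\OPT=3$, $\SSW(I,y)=2y+1$, and $\SSW=3/(5/3)=9/5$ requires $y\geq 2/5$. The SGSP chain of inequalities linking this profile to adversarially chosen partner profiles (via coalitions of agents flipping their aversions) would prevent $y$ from simultaneously serving the partner profiles' optima without pushing the ratio above $5/3$ on at least one of them.

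Third, I would combine the constraint system and the ratio expressions. For the chosen $\Location$, I would show that the SGSP-feasible outputs $\{y_\Aversion\}$ form a set in which, for some $\Aversion^*$, every feasible $y_{\Aversion^*}$ satisfies $\OPT(I_{\Aversion^*})/\SSW(I_{\Aversion^*}, y_{\Aversion^*})\geq 5/3$. This is then a direct witness that $\alpha \geq 5/3$.

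The main obstacle will be Step 2: pinpointing a location profile on which SGSP really does force the $5/3$ bound. Small instances such as $(0,0,1)$ are deceptively flexible---careful preference-dependent SGSP mechanisms achieve ratio $1$ on them by placing the facility at different endpoints for different aversion profiles. The lower-bound instance must therefore exploit interactions across multiple aversion profiles so that SGSP constraints propagate through a chain of coalition switches and rule out ``aversion-tailored'' optimal placements; I anticipate this requiring a parametric family with increasing $n$, agents at specific interior positions, and a careful exploitation of the fact that indifferent agents' welfare $\max(x_i, 1-x_i)$ does not depend on $y$ and so contributes adversarially to the OPT/SW ratio.
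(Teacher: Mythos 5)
Your overall strategy---argue that SGSP forces the facility location to be (nearly) independent of the reported aversions, then exhibit a location profile on which no single preference-independent location beats $5/3$---is the same as the paper's, but the two ingredients that actually make it work are absent from your outline, and your candidate instances would not yield the bound. The constraint derivation in your Step~1 is weaker than what is needed. The decisive observation is this: form a coalition consisting of one genuine hater at an endpoint together with many \emph{truly indifferent} agents colocated with it who falsely report $\{F_1\}$. An indifferent agent's welfare is the constant $\max(x_i,1-x_i)$, so no member of such a coalition can ever strictly lose; SGSP therefore forbids the hater from strictly gaining, and applying this at both endpoints pins the mechanism to \emph{exactly the same} output point on the truthful instance and on both lying instances. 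You mention the constancy of indifferent agents' welfare only as something that inflates the $\OPT/\SSW$ ratio, not as the engine of the coalition argument; and the argument is a single-deviation one, not the ``chain of coalition switches'' you anticipate needing.

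Second, your candidate location profile (two agents at $0$, one at $1$) cannot give $5/3$, and the difficulty you flag in Step~2 is precisely where the content of the proof lies. The paper's family has, besides one hater and $n-1$ lying indifferent agents at each endpoint, a block of $n$ haters at $1/2$. Without that middle block the forced common location could be taken to be $1/2$, and the ratio on the lying instances is then only $\tfrac{2n-1}{3n/2-1/2}\to 4/3$. The midpoint block is exactly what penalizes the compromise $y=1/2$: those agents' welfare $|1/2-y|$ vanishes there while they contribute $n/2$ to $\OPT$ when the facility sits at an endpoint, which lifts the bound to $\tfrac{5n/2-1}{3n/2}\to 5/3$. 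So the gap is concrete: the frame is right, but the proof consists of (i) the indifferent-agents-as-free-coalition-partners argument and (ii) the specific parametric family with a mass of haters at the midpoint, neither of which appears in the proposal.
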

\begin{proof}
Let $n$ be a large integer.  We construct three $3n$-agent
single-facility $\OOFLG$ instances $(I,I)$, $(I,I')$, and $(I,I'')$.
In $(I,I)$, $(I,I')$, and $(I,I'')$, agent $1$ is located at $0$ and
dislikes $\{F_1\}$, agent $2$ is located at $1$ and dislikes
$\{F_1\}$, $n$ agents are located at $1/2$ and dislike $\{F_1\}$,
$n-1$ agents forming a set $U$ are located at $0$ and dislike
$\emptyset$, and $n-1$ agents forming a set $V$ are located at $1$ and
dislike $\emptyset$.  In $I$, all agents report truthfully. In $I'$
(resp. $I''$), all agents in $U$ (resp., $V$) report $\{F_1\}$ and the
remaining agents report truthfully.

Let the maximum social welfare for instances $I'$ and $I''$ be $\OPT'$
and $\OPT''$, respectively.  It is easy to see that
$\OPT'=\frac{5n}{2} - 1$ is achieved by building $F_1$ at $1$ on $I'$.
Likewise, $\OPT''=\frac{5n}{2} - 1$ is achieved by building $F_1$ at
$0$ on $I''$.  Let the social welfare achieved by mechanism $M$ on
$I'$ (resp., $I''$) be $\ALG'$ (resp., $\ALG''$).

Let $M$ build $F_1$ at $y$ (resp., $y'$, $y''$) on $I$ (resp., $I'$,
$I''$). We claim that $y = y' = y''$. To prove the claim, assume for
the sake of contradiction that $y \neq y'$. We consider two cases. If
$y < y'$, then agent $1$ benefits by forming a coalition with $V$ in
$(I, I')$. Similarly, if $y > y'$, then agent $2$ benefits by forming
a coalition with $U$ in $(I, I')$. Thus $y = y'$. Using a similar
argument for $(I, I'')$, we deduce that $y = y''$. Thus the claim
holds. We now consider two cases.

Case~1: $y \le 1/2$. Since $y' = y$, we have $\ALG' = \frac{3n}{2} -
y$.  Moreover, since $y \le 1/2$, we have $\ALG' \le 3n/2$.  Using
$\OPT' = \frac{5n}{2} - 1$ and $\ALG' \le 3n/2$, we obtain
\begin{equation*}
    \label{eqn:alpha-lower-bound-sgsp-utilitarian}
        \alpha \ge \frac{\frac{5n}{2} - 1}{\frac{3n}{2}} = \frac{5}{3}-\frac{2}{3n}.
\end{equation*}
Case~2: $y \ge 1/2$. Using similar arguments as in Case~1, but now for
$y'', \OPT''$, and $\ALG''$, we again find that
$\alpha\ge\frac{5}{3}-\frac{2}{3n}$.

Thus, in all cases, $\alpha$ is at least
$\frac{5}{3}-\frac{2}{3n}$. Since this bound approaches $\frac{5}{3}$
as $n$ tends to infinity, the theorem follows.
\end{proof}

In view of Theorem~\ref{thm:sum-min-sgsp-lower-bound}, it is natural
to try to determine the minimum value of $\alpha$ for which an SGSP
$\alpha$-efficient $\OOFLG$ mechanism exists. Below we present a
  $2$-efficient SGSP mechanism for arbitrary $k$.
  Section~\ref{subsubsec:sgsp-sum-min-interval} presents a
  $5/3$-efficient SGSP mechanism for $k=1$.  For $k>1$, it
remains an interesting open problem to improve the approximation ratio
of $2$, or to establish a tighter lower bound for the approximation
ratio.

\begin{mechanism}
\label{mech:sum-min-sgsp}
Let $\Instance$ denote the reported $\DOFL$ instance. Build all of the
facilities at $0$ if $\sum_{i\in [n]} x_i \ge \sum_{i \in [n]}
(1-x_i)$; otherwise, build all of the facilities at $1$.
\end{mechanism}

\begin{theorem}
\label{thm:sp-sum-min-sgsp}
Mechanism~\ref{mech:sum-min-sgsp} is SGSP.
\end{theorem}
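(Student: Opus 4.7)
The plan is to observe that Mechanism~\ref{mech:sum-min-sgsp} makes its decision based solely on the location profile $\Location$, completely ignoring the reported aversion profile. Since agent locations are public and cannot be misreported in $\OOFLG$, any coalition $C$ that deviates from truthful reporting can only alter the aversion components $\Aversion_C$, which the mechanism never consults. Hence for any $\OOFLG$ instance $(I,I')$ with $I = \Instance$ and $I' = \LieInstance$ differing only on a coalition $C$, we have $M(I) = M(I')$.

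Once $M(I) = M(I')$ is established, the welfare functions $w(I,i,\cdot)$ (which depend on the \emph{true} aversion profile $\Aversion$, not on $\LieAversion$) give $w(I, i, M(I)) = w(I, i, M(I'))$ for every agent $i$, and in particular for every $i \in C$. Consequently, no agent $i \in C$ satisfies $w(I, i, M(I)) < w(I, i, M(I'))$, so the antecedent of the SGSP implication is never activated. The SGSP property therefore holds vacuously.

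The proof is essentially one paragraph: state that the mechanism is a function of $\Location$ alone, conclude $M(I) = M(I')$ for any misreporting coalition, and observe that the SGSP condition holds vacuously because no agent's welfare can change. There is no real obstacle here; the only thing to be careful about is invoking the fact that locations are public (so only the aversion vector can be manipulated by the coalition), which was established in the preliminaries.
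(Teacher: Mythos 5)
Your proof is correct and is exactly the paper's argument: Mechanism~\ref{mech:sum-min-sgsp} depends only on the public location profile, so the reported dislikes cannot change the outcome and the SGSP condition holds vacuously. The paper states this in one sentence; your version just spells out the same reasoning in more detail.
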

\begin{proof}
Reported dislikes do not affect the locations at which the facilities are built. Hence the theorem follows.
\end{proof}

\begin{restatable}{theorem}{optsumminsgsp}
\label{thm:opt-sum-min-sgsp}
Mechanism~\ref{mech:sum-min-sgsp} is 2-efficient.
\end{restatable}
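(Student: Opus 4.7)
The plan is to lower-bound the mechanism's social welfare by $n/2$ and upper-bound the optimum by $n$, from which $2$-efficiency follows immediately.

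By the reflection $x \mapsto 1-x$, I may assume without loss of generality that $\sum_{i\in[n]} x_i \ge \sum_{i\in[n]} (1-x_i)$, equivalently $\sum_{i\in[n]} x_i \ge n/2$, so Mechanism~\ref{mech:sum-min-sgsp} outputs $\mathbf{y}^* = (0,\ldots,0)$.

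For each agent $i \notin \indiff(I)$, the welfare $w(I,i,\mathbf{y}^*)$ equals $\min_{j:F_j\in a_i}|x_i-0| = x_i$. For each agent $i \in \indiff(I)$, by the preliminaries $w(I,i,\mathbf{y}^*) = \max(x_i,1-x_i) \ge x_i$. Summing over all agents gives $\SSW(I,\mathbf{y}^*) \ge \sum_{i\in[n]} x_i \ge n/2$. On the other hand, every agent's welfare in any solution lies in $[0,1]$, the diameter of the unit interval, so $\max_{\mathbf{y}} \SSW(I,\mathbf{y}) \le n$. Combining, $2\,\SSW(I,\mathbf{y}^*) \ge n \ge \max_{\mathbf{y}} \SSW(I,\mathbf{y})$, establishing the theorem.

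There is no significant obstacle, but one subtlety is worth flagging: Mechanism~\ref{mech:sum-min-sgsp} decides its endpoint based on the location-sum over \emph{all} agents, not only the non-indifferent ones, so it need not select the endpoint achieving the larger of $\SSW(I,(0,\ldots,0))$ and $\SSW(I,(1,\ldots,1))$. What makes the argument go through despite this ``oblivious'' choice is the slack in the bound $\max_{\mathbf{y}} \SSW(I,\mathbf{y}) \le n$, which is enough to absorb any suboptimal choice between the two endpoints and still deliver a $2$-approximation.
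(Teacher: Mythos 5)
Your proof is correct and follows essentially the same route as the paper's: lower-bound $\ALG$ by $\sum_i x_i \ge n/2$ using the mechanism's tie-breaking condition and the fact that indifferent agents get welfare $\max(x_i,1-x_i)\ge x_i$, then upper-bound $\OPT$ by $n$ since no agent's welfare exceeds $1$. The subtlety you flag is accurate but harmless, exactly as you observe.
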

\begin{proof}
Let $I = \Instance$ denote the reported $\DOFL$ instance. Let $\ALG$ denote the social
welfare obtained by Mechanism~\ref{mech:sum-min-sgsp} on this instance, and let $\OPT$ denote the maximum possible social welfare on this instance.
We need to prove that
$2\cdot\ALG\geq\OPT$.

Assume without loss of generality that
Mechanism~\ref{mech:sum-min-sgsp} builds all of the facilities at
$0$. (A symmetric argument can be used in the case where all facilities
are built at $1$.) Then the welfare of an agent $i$ not in
$\indiff(I)$ is $x_i$ and the welfare of an agent $i$ in $\indiff(I)$
is $\max(x_{i}, 1-x_{i}) \ge x_{i}$. Thus $\ALG \ge \sum_{i\in [n]}
x_i$. As Mechanism~\ref{mech:sum-min-sgsp} builds the facilities at
$0$ and not $1$, we have $\sum_{i\in [n]}x_i\ge\sum_{i\in [n]}(1-x_i)$,
which implies that $\sum_{i\in [n]} x_i \ge n/2$. Combining the above
two inequalities, we obtain $\ALG \ge n/2$.  Since no agent has
welfare greater than $1$, we have $n \ge \OPT$.  Thus $2 \cdot \ALG
\ge n \ge \OPT$, as required.
\end{proof}

We now establish that the analysis of Theorem~\ref{thm:opt-sum-min-sgsp} is
tight by exhibiting a two-facility $\DOFL$ instance on which Mechanism~\ref{mech:sum-min-sgsp} achieves 
half of the optimal social welfare. 
For the reported $\DOFL$ instance $I = (2, 2, (0, 1), (\{F_1\}, \{F_2\}))$, it is easy to verify that the optimal social welfare is
$\SSW(I, (1,0)) = 2$, while the social welfare obtained by
Mechanism~\ref{mech:sum-min-sgsp} is $\SSW (I, (0,0))=1$.

\newcommand{\Reals}{\mathbb{R}}
\newcommand{\RealsPositive}{\Reals_{>0}}

\newcommand{\Loc}{x}
\newcommand{\LocFac}{y}
\newcommand{\Wt}{\gamma}

\newcommand{\Dist}{D}
\newcommand{\Weight}[1]{\Gamma(#1)}
\newcommand{\WtdSum}[1]{h(#1)}
\newcommand{\Prefix}[2]{\Id{prefix}(#1,#2)}
\newcommand{\Suffix}[2]{\Id{suffix}(#1,#2)}
\newcommand{\ObjFcn}{\Phi}
\newcommand{\Obj}[3]{\ObjFcn(#1,#2,#3)}
\newcommand{\ObjRatFcn}{\Psi}
\newcommand{\ObjRat}[4]{\ObjRatFcn(#1,#2,#3,#4)}

\newcommand{\RatSym}{\beta}
\newcommand{\RatOne}[1]{\RatSym(#1)}
\newcommand{\RatTwo}[2]{\RatSym(#1,#2)}

\newcommand{\Mech}{M}

\newcommand{\Id}[1]{\mbox{#1}}

\subsubsection{SGSP $5/3$-efficient mechanism}
\label{subsubsec:sgsp-sum-min-interval}

In this section, we design an SGSP $5/3$-efficient mechanism for
  the single-facility case. Throughout this section, we find it
  technically convenient to work over the interval $[-1,1]$ instead of
  $[0,1]$, and to allow the number of agents at a given location to be
  fractional. (We emphasize that the upper bound established in this
  section also holds for the special case in which the number of
  agents at any location is required to be an integer.)  We begin by
  introducing some useful definitions; these definitions will only be
  used in the present section.

We define a distribution as a finite subset $\Dist$ of
$[-1,1]\times\RealsPositive$ where no two pairs in $\Dist$ share the
same first component.  For any distribution $\Dist$, we define
$\Dist_{>0}$ as $\{(\Loc,\Wt)\mid \Loc>0\}$. Related expressions such
as $\Dist_{=0}$ are defined similarly.  For any distribution $\Dist$,
we define $\Weight{\Dist}$ as $\sum_{(\Loc,\Wt)\in\Dist}\Wt$ and
$\WtdSum{\Dist}$ as $\sum_{(\Loc,\Wt)\in\Dist}\Wt\Loc$.

We say that distribution $\Dist'$ is dominated by a distribution
$\Dist$ if for any pair $(\Loc,\Wt')$ in $\Dist'$, there is a pair
$(\Loc,\Wt)$ in $\Dist$ with $\Wt\geq\Wt'$. For any distributions
$\Dist$ and $\Dist'$ such that $\Dist$ dominates $\Dist'$, and any
$\LocFac$ in $[-1,1]$, we define $\Obj{\Dist}{\Dist'}{\LocFac}$ as
\[
\sum_{(\Loc,\Wt)\in\Dist}|\LocFac-\Loc|(\Wt-\Wt')+(1+|\Loc|)\Wt'
\]
where $\Wt'$ denotes $\Weight{\Dist'_{=\Loc}}$.

Let us elaborate on the intended interpretation of the function
$\Obj{\Dist}{\Dist'}{\LocFac}$.  The distribution $\Dist$ encodes the
number of agents (which we allow to be fractional) at each location in
$[-1,1]$ with a nonzero number of agents: If the pair $(\Loc,\Wt)$
belongs to $\Dist$, then there are $\Wt>0$ agents at location
$\Loc$. The distribution $\Dist'$ encodes the preferences of the
agents, in the following sense: If the pair $(\Loc,\Wt')$ belongs to
$\Dist'$, then there is a pair of the form $(\Loc,\Wt)$ in $\Dist$
such that $\Wt'\leq\Wt$ (since $\Dist$ dominates $\Dist'$), and we
understand that $\Wt'$ agents at $\Loc$ are indifferent to the
facility and $\Wt-\Wt'$ agents at $\Loc$ dislike the facility.
The location $\LocFac$ represents the location of the facility.  The
value of the function $\Obj{\Dist}{\Dist'}{\LocFac}$ represents the
total welfare of the agents, where the welfare of an agent at location
$\Loc$ is $|\LocFac-\Loc|$ (i.e., the distance between the agent and
the facility) if the agent dislikes the facility, and is
$(1+|\Loc|)$ (i.e., the distance to the farthest endpoint of the
interval $[-1,1]$) otherwise.

As the reader may recognize, there is a close connection between the
function $\Obj{\Dist}{\Dist'}{\LocFac}$ used in the present section
and the function $\SSW(I,\LocFac)$ used elsewhere in the
paper. Specifically, if we let $I$ denote a $\DOFL$ instance, and we encode the locations and
preferences of the agents in $I$ using distributions $\Dist$ and
$\Dist'$ as in the previous paragraph, then $\SSW(I,\LocFac)$ is equal
to $\Obj{\Dist}{\Dist'}{\LocFac}$.

For any distributions $\Dist$ and $\Dist'$ such that $\Dist$ dominates
$\Dist'$, any $\LocFac$ in $\{-1,0,1\}$, and any $\LocFac'$ in $[-1,1]$,
we define $\ObjRat{\Dist}{\Dist'}{\LocFac}{\LocFac'}$ as
\[
\frac{\Obj{\Dist}{\Dist'}{\LocFac'}}{\Obj{\Dist}{\Dist'}{\LocFac}}.
\]

For any distribution $\Dist$ and any location $\LocFac$ in
$\{-1,0,1\}$, we define $\RatTwo{\Dist}{\LocFac}$ as the maximum, over
all distributions $\Dist'$ dominated by $\Dist$ and all locations
$\LocFac'$ in $[-1,1]$, of
$\ObjRat{\Dist}{\Dist'}{\LocFac}{\LocFac'}$.  (Remark: If
$\Dist=\emptyset$, we consider the above ratio to be equal to $1$.) 

A $\OOFLG$ mechanism $\Mech$ maps any given distribution $\Dist$
  to a location $\Mech(\Dist)$ in $[-1,1]$.
Observe that a $\OOFLG $ mechanism $M$ is $\alpha$-efficient if and only if 
  $\beta (D,M(D)) \leq \alpha$ for all distributions $D$.  

\begin{custommech}{4}
  \label{mech:sum-min-fivethird-sgsp-interval}
  Let $I = (n, 1, \Location, \Aversion)$ denote the reported $\DOFL$
  instance.  We construct a corresponding distribution $\Dist$
  as follows: For each location $\Loc$ with one or more agents, we
  include the pair $(\Loc,\Wt)$ in $\Dist$ where $\Wt$ denotes the
  number of agents at $\Loc$. We then build the facility $F_1$ at the
  lexicographically least location in
  $\argmin_{\LocFac\in\{-1,0,1\}}\RatTwo{\Dist}{\LocFac}$.
\end{custommech}

Mechanism~\ref{mech:sum-min-fivethird-sgsp-interval} is SGSP because it disregards the reported aversion profile.
In the remainder of this section, we establish two other key
properties of Mechanism~\ref{mech:sum-min-fivethird-sgsp-interval}.
First, in Theorem~\ref{thm:fiveThirdsFast} below, we establish that
Mechanism~\ref{mech:sum-min-fivethird-sgsp-interval} admits a fast
implementation. Second, in Theorem~\ref{thm:fiveThirds} below, we show
that Mechanism~\ref{mech:sum-min-fivethird-sgsp-interval} achieves an
approximation ratio of $5/3$. Given the $5/3$ lower bound established
in Theorem~\ref{thm:sum-min-sgsp-lower-bound}, this approximation
ratio is optimal for any SGSP mechanism.

Lemma~\ref{lem:fiveThirdsAdversaryHelper} below is a weighted
generalization of Lemma~\ref{lem:1-maxian}, and can be
justified in a similar manner. For the sake of completeness, and also to
exercise some of the notations introduced above, we provide an alternate
proof below.

\begin{lemma}
\label{lem:fiveThirdsAdversaryHelper}
Let $\Dist$ be a distribution. Then
\[
\max_{\LocFac\in[-1,1]}\Obj{\Dist}{\emptyset}{\LocFac}
=
\max_{\LocFac\in\{-1,1\}}\Obj{\Dist}{\emptyset}{\LocFac}.
\]
\end{lemma}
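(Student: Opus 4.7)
The plan is to reduce the claim to a direct application of convexity, essentially mirroring the justification given for Lemma~\ref{lem:1-maxian}. First I would unpack the definition of $\Obj{\Dist}{\emptyset}{\LocFac}$. Setting $\Dist'=\emptyset$ forces $\Wt'=\Weight{\emptyset_{=\Loc}}=0$ for every $\Loc$, so the second summand in the defining expression vanishes and we are left with
\[
\Obj{\Dist}{\emptyset}{\LocFac}
=\sum_{(\Loc,\Wt)\in\Dist}\Wt\cdot|\LocFac-\Loc|.
\]

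Next I would observe that for each fixed $\Loc\in[-1,1]$, the function $\LocFac\mapsto |\LocFac-\Loc|$ is convex on $[-1,1]$, and since every $\Wt$ appearing in $\Dist$ is strictly positive by the definition of a distribution, the map $\LocFac\mapsto \Obj{\Dist}{\emptyset}{\LocFac}$ is a nonnegative linear combination of convex functions and hence convex on $[-1,1]$. It is a standard fact (used in the justification of Lemma~\ref{lem:1-maxian}, cf.~\cite{BOYD2004}) that a convex function on a closed bounded interval attains its maximum at one of the endpoints.

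Applying this to our interval $[-1,1]$ yields
\[
\max_{\LocFac\in[-1,1]}\Obj{\Dist}{\emptyset}{\LocFac}
=\max\bigl(\Obj{\Dist}{\emptyset}{-1},\Obj{\Dist}{\emptyset}{1}\bigr)
=\max_{\LocFac\in\{-1,1\}}\Obj{\Dist}{\emptyset}{\LocFac},
\]
which is exactly the claimed equality. There is no real obstacle here; the only things to be careful about are handling the edge case $\Dist=\emptyset$ (where both sides are $0$) and making explicit that the weights $\Wt$ are positive, so that convexity is preserved under the weighted sum. Since the lemma statement flags that the argument parallels the proof of Lemma~\ref{lem:1-maxian}, the writeup can be kept very short.
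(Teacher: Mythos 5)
Your proof is correct, and it is in fact the route the paper itself flags as available: the authors remark that Lemma~\ref{lem:fiveThirdsAdversaryHelper} ``is a weighted generalization of Lemma~\ref{lem:1-maxian}, and can be justified in a similar manner,'' which is exactly your convexity argument, but they then deliberately give an \emph{alternate} proof. Their written proof avoids any appeal to convexity: it splits into two cases according to whether $\Weight{\Dist_{\leq\LocFac}}\geq\Weight{\Dist}/2$ or $\Weight{\Dist_{\geq\LocFac}}\geq\Weight{\Dist}/2$, and in the first case shows directly that moving the facility from $\LocFac$ to $1$ gains $(1-\LocFac)\Weight{\Dist_{\leq\LocFac}}$ from the mass at or left of $\LocFac$ while losing at most $(1-\LocFac)\Weight{\Dist_{>\LocFac}}$ from the mass to its right, so the net change is nonnegative. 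The two arguments buy slightly different things: yours is shorter and reduces cleanly to the standard fact already cited for Lemma~\ref{lem:1-maxian}, while the paper's exchange argument is self-contained, exercises the $\Dist_{\leq\LocFac}$/$\Weight{\cdot}$ notation that the rest of Section~4.1.1 relies on, and makes the ``more than half the mass is on one side'' mechanism explicit. Your two points of care --- positivity of the weights $\Wt$ and the trivial case $\Dist=\emptyset$ --- are the right ones, and the unpacking $\Obj{\Dist}{\emptyset}{\LocFac}=\sum_{(\Loc,\Wt)\in\Dist}\Wt\,|\LocFac-\Loc|$ matches the definition. Nothing is missing.
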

\begin{proof}
Let $\LocFac$ belong to $[-1,1]$. We need to prove that
$\Obj{\Dist}{\emptyset}{\LocFac}$ is at most
$\max(\Obj{\Dist}{\emptyset}{-1},\Obj{\Dist}{\emptyset}{1})$.  We
consider two cases.

Case~1: $\Weight{\Dist_{\leq\LocFac}}\geq\Weight{\Dist}/2$. Observe that
\begin{eqnarray*}
\Obj{\Dist_{\leq\LocFac}}{\emptyset}{1}
-\Obj{\Dist_{\leq\LocFac}}{\emptyset}{\LocFac}
& = &
\sum_{(\Loc,\Wt)\in\Dist_{\leq\LocFac}}[(1-\Loc)-(\LocFac-\Loc)]\Wt\\
& = &
(1-\LocFac)\Weight{\Dist_{\leq\LocFac}},
\end{eqnarray*}
$\Obj{\Dist_{>\LocFac}}{\emptyset}{1}\geq 0$, and
\begin{eqnarray*}
\Obj{\Dist_{>\LocFac}}{\emptyset}{\LocFac}
& = &
\sum_{(\Loc,\Wt)\in\Dist_{>\LocFac}}(\Loc-\LocFac)\Wt\\
& \leq &
(1-\LocFac)\Weight{\Dist_{>\LocFac}}.
\end{eqnarray*}
Combining the above inequalities with $\Weight{\Dist_{\leq\LocFac}}\geq\Weight{\Dist}/2\geq\Weight{\Dist_{>\LocFac}}$, we obtain
\begin{eqnarray*}
& & \Obj{\Dist}{\emptyset}{1}-\Obj{\Dist}{\emptyset}{\LocFac}\\
& = &
\Obj{\Dist_{\leq\LocFac}}{\emptyset}{1}
-\Obj{\Dist_{\leq\LocFac}}{\emptyset}{\LocFac}
+\Obj{\Dist_{>\LocFac}}{\emptyset}{1}
-\Obj{\Dist_{>\LocFac}}{\emptyset}{\LocFac}\\
& \geq &
(1-\LocFac)\Weight{\Dist_{\leq\LocFac}}
+0-(1-\LocFac)\Weight{\Dist_{>\LocFac}}\\
& \geq & 0.
\end{eqnarray*}

Case~2: $\Weight{\Dist_{\geq\LocFac}}\geq\Weight{\Dist}/2$. Using an
argument symmetric to that used in Case~1, we find that
$\Obj{\Dist}{\emptyset}{\LocFac}\leq\Obj{\Dist}{\emptyset}{-1}$.
\end{proof}

Lemma~\ref{lem:fiveThirdsAdversary} below shows that given the
locations and preferences of the agents, the optimal location for the
facility is at one of the endpoints of the interval $[-1,1]$.

\begin{lemma}
\label{lem:fiveThirdsAdversary}
Let $\Dist$ and $\Dist'$ be distributions such that $\Dist$ dominates
$\Dist'$. Then
\[
\max_{\LocFac\in[-1,1]}\Obj{\Dist}{\Dist'}{\LocFac}
=
\max_{\LocFac\in\{-1,1\}}\Obj{\Dist}{\Dist'}{\LocFac}.
\]
\end{lemma}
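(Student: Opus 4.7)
The plan is to reduce this lemma to Lemma~\ref{lem:fiveThirdsAdversaryHelper} by separating the $\LocFac$-dependent and $\LocFac$-independent contributions to $\Obj{\Dist}{\Dist'}{\LocFac}$. Recall that
\[
\Obj{\Dist}{\Dist'}{\LocFac}=\sum_{(\Loc,\Wt)\in\Dist}|\LocFac-\Loc|(\Wt-\Wt')+\sum_{(\Loc,\Wt)\in\Dist}(1+|\Loc|)\Wt',
\]
where $\Wt'=\Weight{\Dist'_{=\Loc}}$. The second sum, which encodes the contribution of the agents indifferent to the facility, does not depend on $\LocFac$ and so is a constant $C$ for the purposes of this lemma.

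For the first sum, I would introduce the auxiliary distribution
\[
\Dist'' \;=\; \{\,(\Loc,\Wt-\Wt')\mid (\Loc,\Wt)\in\Dist,\ \Wt>\Wt'\,\},
\]
where $\Wt'$ is defined as above. Because $\Dist$ dominates $\Dist'$, we have $\Wt\geq\Wt'$ for every $(\Loc,\Wt)\in\Dist$, so every weight appearing in $\Dist''$ is strictly positive; and since $\Dist$ has a unique pair for each location, so does $\Dist''$. Hence $\Dist''$ is a valid distribution. The pairs in $\Dist$ with $\Wt=\Wt'$ contribute zero to the first sum, so
\[
\sum_{(\Loc,\Wt)\in\Dist}|\LocFac-\Loc|(\Wt-\Wt')
\;=\;\sum_{(\Loc,\Wt)\in\Dist''}|\LocFac-\Loc|\Wt
\;=\;\Obj{\Dist''}{\emptyset}{\LocFac},
\]
where the last equality uses $\Weight{\emptyset_{=\Loc}}=0$ in the definition of $\Obj{\cdot}{\cdot}{\cdot}$.

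Thus $\Obj{\Dist}{\Dist'}{\LocFac}=\Obj{\Dist''}{\emptyset}{\LocFac}+C$ for all $\LocFac\in[-1,1]$. Applying Lemma~\ref{lem:fiveThirdsAdversaryHelper} to $\Dist''$ gives
\[
\max_{\LocFac\in[-1,1]}\Obj{\Dist''}{\emptyset}{\LocFac}
=\max_{\LocFac\in\{-1,1\}}\Obj{\Dist''}{\emptyset}{\LocFac},
\]
and adding the constant $C$ to both sides yields the desired equality. There is no real technical obstacle here: the lemma is essentially an immediate corollary of Lemma~\ref{lem:fiveThirdsAdversaryHelper} once one observes that switching on the indifference weights only adds a $\LocFac$-independent offset to the objective. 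The only minor bookkeeping is to verify that $\Dist''$ is a legitimate distribution, which follows directly from the domination hypothesis.
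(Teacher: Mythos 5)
Your proof is correct and follows essentially the same route as the paper: the paper likewise defines the ``difference'' distribution $\Dist^*$ with $\Weight{\Dist^*_{=\Loc}}=\Weight{\Dist_{=\Loc}}-\Weight{\Dist'_{=\Loc}}$ (your $\Dist''$), writes $\Obj{\Dist}{\Dist'}{\LocFac}=\Obj{\Dist^*}{\emptyset}{\LocFac}+C$ with $C$ the $\LocFac$-independent contribution of the indifferent agents, and invokes Lemma~\ref{lem:fiveThirdsAdversaryHelper}. Your extra bookkeeping that $\Dist''$ is a legitimate distribution (positive weights, one pair per location) is a harmless and slightly more careful touch.
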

\begin{proof}
Let $\LocFac$ belong to $[-1,1]$ and let $\Dist^*$ denote the unique
distribution such that
$\Weight{\Dist^*_{=\Loc}}=\Weight{\Dist_{=\Loc}}-\Weight{\Dist'_{=\Loc}}$
for all $\Loc$ in $[-1,1]$. Observe that
$\Obj{\Dist}{\Dist'}{\LocFac}$ is equal to
$\Obj{\Dist^*}{\emptyset}{\LocFac}+\Obj{\Dist'}{\Dist'}{\LocFac}$.
Since $\Obj{\Dist'}{\Dist'}{\LocFac}$ is equal to
$\sum_{(\Loc,\Wt)\in\Dist'}(1+|\Loc|)\Wt$, which is independent of
$\LocFac$, and since Lemma~\ref{lem:fiveThirdsAdversaryHelper} implies
that $\Obj{\Dist^*}{\emptyset}{\LocFac}$ is at most
$\max_{\LocFac'\in\{-1,1\}}\Obj{\Dist^*}{\emptyset}{\LocFac'}$, we
deduce that $\Obj{\Dist}{\Dist'}{\LocFac}$ is at most
$\max_{\LocFac'\in\{-1,1\}}\Obj{\Dist}{\Dist'}{\LocFac'}$.
\end{proof}

For any distributions $\Dist$ and $\Dist'$, we define
$\min(\Dist,\Dist')$ as the unique distribution $\Dist^*$ such that
\[
\Weight{\Dist^*_{=\Loc}}=\min(\Weight{\Dist_{=\Loc}},\Weight{\Dist'_{=\Loc}})
\]
for all $\Loc$ in $[-1,1]$. Lemma~\ref{lem:fiveThirdsEval} below gives
a useful way to rewrite the expression $\Obj{\Dist}{\Dist'}{\LocFac}$
for all $\LocFac$ in $\{-1,0,1\}$.

\begin{lemma}
\label{lem:fiveThirdsEval}
Let $\Dist$ and $\Dist'$ be distributions such that $\Dist$ dominates
$\Dist'$. Then $\Obj{\Dist}{\Dist'}{-1}$ is equal to
$\Weight{\Dist}+\WtdSum{\Dist}-2\WtdSum{\min(\Dist',\Dist_{<0})}$,
$\Obj{\Dist}{\Dist'}{0}$ is equal to
$\Weight{\Dist'}-\WtdSum{\Dist_{<0}}+\WtdSum{\Dist_{>0}}$, and
$\Obj{\Dist}{\Dist'}{1}$ is equal to
$\Weight{\Dist}-\WtdSum{\Dist}+2\WtdSum{\min(\Dist',\Dist_{>0})}$.
\end{lemma}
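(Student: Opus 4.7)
The plan is to establish each of the three equalities by direct computation, starting from the definition
\[
\Obj{\Dist}{\Dist'}{\LocFac}=\sum_{(\Loc,\Wt)\in\Dist}|\LocFac-\Loc|(\Wt-\Wt'_\Loc)+(1+|\Loc|)\Wt'_\Loc,
\]
where I write $\Wt'_\Loc$ for $\Weight{\Dist'_{=\Loc}}$ to avoid overloading notation. Since $\Dist$ dominates $\Dist'$, every location in the support of $\Dist'$ appears in $\Dist$, so summation over $\Dist$ implicitly covers every nonzero term involving $\Wt'_\Loc$.

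For $\LocFac=-1$, I would use $|-1-\Loc|=1+\Loc$ (valid since $\Loc\in[-1,1]$) to split the defining expression into $(1+\Loc)\Wt$ minus $(1+\Loc)\Wt'_\Loc$ plus $(1+|\Loc|)\Wt'_\Loc$. The first piece sums to $\Weight{\Dist}+\WtdSum{\Dist}$, while the remaining piece collapses to $\sum_{(\Loc,\Wt)\in\Dist}(|\Loc|-\Loc)\Wt'_\Loc$, which vanishes on $\Loc\ge 0$ and equals $-2\Loc\,\Wt'_\Loc$ on $\Loc<0$. The key identity to close out the argument is
\[
\sum_{\Loc<0}\Loc\,\Wt'_\Loc=\WtdSum{\min(\Dist',\Dist_{<0})},
\]
which follows because, by dominance, $\min(\Weight{\Dist'_{=\Loc}},\Weight{(\Dist_{<0})_{=\Loc}})$ equals $\Wt'_\Loc$ for $\Loc<0$ and $0$ for $\Loc\ge 0$. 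Symmetrically, for $\LocFac=1$, I would use $|1-\Loc|=1-\Loc$ and the analogous identity $\sum_{\Loc>0}\Loc\,\Wt'_\Loc=\WtdSum{\min(\Dist',\Dist_{>0})}$.

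For $\LocFac=0$, I would substitute $|0-\Loc|=|\Loc|$ into the defining expression. The $(\Wt-\Wt'_\Loc)$ and $\Wt'_\Loc$ contributions to the $|\Loc|$ term recombine to yield $\sum_{(\Loc,\Wt)\in\Dist}|\Loc|\Wt$, which splits as $\WtdSum{\Dist_{>0}}-\WtdSum{\Dist_{<0}}$; the residual $\sum_\Loc\Wt'_\Loc$ equals $\Weight{\Dist'}$, giving the stated formula. Since each case is a matter of algebraic bookkeeping, no step poses a real obstacle; the only care needed is to confirm that the dominance hypothesis lets me freely restrict the $\Wt'_\Loc$ sums to the support of $\Dist$ and then reinterpret the sign-restricted sums via $\min(\Dist',\Dist_{<0})$ and $\min(\Dist',\Dist_{>0})$.
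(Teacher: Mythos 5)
Your computation is correct and follows essentially the same route as the paper: expand the definition of $\ObjFcn$ pairwise, case-split on the sign of $\Loc$ to resolve $|\LocFac-\Loc|$ and $|\Loc|$, and use dominance to identify the sign-restricted sums $\sum_{\Loc<0}\Loc\,\Weight{\Dist'_{=\Loc}}$ and $\sum_{\Loc>0}\Loc\,\Weight{\Dist'_{=\Loc}}$ with $\WtdSum{\min(\Dist',\Dist_{<0})}$ and $\WtdSum{\min(\Dist',\Dist_{>0})}$. The paper presents the same bookkeeping as a per-pair contribution formula, but the algebra is identical.
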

\begin{proof}
Let $(\Loc,\Wt)$ belong to $\Dist$ and let $\Wt'$ denote
$\Weight{\Dist'_{=\Loc}}$.

The contribution of the pair $(\Loc,\Wt)$ to $\Obj{\Dist}{\Dist'}{-1}$
is $\Wt'(1-\Loc)+(\Wt-\Wt')(1+\Loc)=\Wt+\Wt\Loc-2\Wt'\Loc$ if $\Loc$
belongs to $\Dist_{<0}$, and is $\Wt+\Wt\Loc$ otherwise. The first
claim of the lemma follows. A symmetric argument establishes the third
claim.

The contribution of the pair $(\Loc,\Wt)$ to $\Obj{\Dist}{\Dist'}{0}$
is $\Wt'(1-\Loc)-(\Wt-\Wt')\Loc=\Wt'-\Wt\Loc$ if $\Loc$ belongs to
$\Dist_{<0}$, and is $\Wt'(1+\Loc)+(\Wt-\Wt')\Loc=\Wt'+\Wt\Loc$
otherwise. The second claim follows.
\end{proof}

Given the locations of the agents, but not their preferences,
Lemma~\ref{lem:fiveThirdsPlusMinusA} below characterizes the best
possible approximation ratio that can be guaranteed by locating the
facility at $-1$ (resp., $1$).

\begin{lemma}
\label{lem:fiveThirdsPlusMinusA}
Let $\Dist$ be a distribution and let $\LocFac$ belong to $\{-1,1\}$.
Then
\[
\RatTwo{\Dist}{\LocFac}
=
\ObjRat{\Dist}{\Dist'}{\LocFac}{-\LocFac}
\]
where $\Dist'$ denotes $\Dist_{>0}$ (resp., $\Dist_{<0}$) if
$\LocFac$ is equal to $-1$ (resp., $1$).
\end{lemma}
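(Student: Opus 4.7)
The plan is to prove the case $\LocFac = -1$ in full detail; the case $\LocFac = 1$ will follow by the reflection $\Loc \mapsto -\Loc$, which exchanges the roles of $\Dist_{<0}$ and $\Dist_{>0}$ in every formula. Unfolding the definition, $\RatTwo{\Dist}{-1}$ is the supremum of $\ObjRat{\Dist}{\Dist'}{-1}{\LocFac'}$ over all $\Dist'$ dominated by $\Dist$ and all $\LocFac' \in [-1,1]$. First I would fix $\Dist'$ and invoke Lemma~\ref{lem:fiveThirdsAdversary} on the numerator to conclude that the supremum over $\LocFac'$ is achieved in $\{-1,1\}$. Since $\LocFac' = -1$ contributes only the trivial ratio $1$, the problem reduces to maximizing $\Obj{\Dist}{\Dist'}{1}/\Obj{\Dist}{\Dist'}{-1}$ over dominated $\Dist'$, after which the final answer is taken as the maximum with $1$.

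Next I would apply Lemma~\ref{lem:fiveThirdsEval} to expand both $\Obj{\Dist}{\Dist'}{1}$ and $\Obj{\Dist}{\Dist'}{-1}$ in closed form. The key observation is separability: the numerator depends on $\Dist'$ only through $\WtdSum{\min(\Dist',\Dist_{>0})}$, while the denominator depends on $\Dist'$ only through $\WtdSum{\min(\Dist',\Dist_{<0})}$, and these two quantities are controlled by disjoint pieces of $\Dist'$, namely its mass at positive and at negative locations. Hence the numerator and denominator can be optimized independently over $\Dist'$. The numerator is strictly increasing in $\WtdSum{\min(\Dist',\Dist_{>0})}$, whose maximum over dominated $\Dist'$ equals $\WtdSum{\Dist_{>0}}$ and is attained when $\Dist'$ contains all the mass of $\Dist_{>0}$. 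Symmetrically, $\WtdSum{\min(\Dist',\Dist_{<0})}$ is non-positive, the denominator is strictly decreasing in this quantity, and the minimum is attained when $\min(\Dist',\Dist_{<0}) = \emptyset$, i.e., when $\Dist'$ places no mass at negative locations. Both conditions are simultaneously met by $\Dist' = \Dist_{>0}$, and a brief calculation via Lemma~\ref{lem:fiveThirdsEval} shows that $\Obj{\Dist}{\Dist_{>0}}{1} - \Obj{\Dist}{\Dist_{>0}}{-1} = -2\WtdSum{\Dist_{<0}} \geq 0$, so the ratio at this choice is at least $1$ and absorbs the trivial bound from $\LocFac' = -1$.

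The main step demanding care is making the separability observation rigorous: I must unfold the pointwise-minimum definition of $\min(\Dist_1,\Dist_2)$ to check that its restriction to positive (respectively negative) locations depends only on the corresponding restriction of $\Dist'$, so that the two independent maximizations are legitimate. This is essentially immediate but the notation is dense, and I would spell it out explicitly. A minor edge case is the degenerate configuration where $\Obj{\Dist}{\Dist_{>0}}{-1} = 0$, which occurs only when all mass of $\Dist$ sits at $\Loc = -1$; here the ratio is naturally read as $+\infty$, and Mechanism~\ref{mech:sum-min-fivethird-sgsp-interval} would in any event avoid locating the facility at $-1$.
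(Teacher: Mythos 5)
Your proposal is correct and follows essentially the same route as the paper: restrict $\LocFac'$ to $\{-1,1\}$ via Lemma~\ref{lem:fiveThirdsAdversary}, discard the trivial ratio at $\LocFac'=\LocFac$, and then use the closed forms of Lemma~\ref{lem:fiveThirdsEval} to observe that a single choice of $\Dist'$ simultaneously maximizes the numerator and minimizes the denominator (the paper states this tersely for the mirror case $\LocFac=1$ with $\Dist'=\Dist_{<0}$, exactly your separability argument). Your explicit check that the resulting ratio is at least $1$ matches the paper's closing remark, and your flagged degenerate case (all mass at an endpoint, giving a zero denominator) is one the paper silently leaves aside as well.
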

\begin{proof}
By symmetry, it is sufficient to consider the case where $\LocFac$ is
equal to $1$.  Lemma~\ref{lem:fiveThirdsAdversary} implies that
$\RatTwo{\Dist}{1}$ is equal to the maximum, over all distributions
$\Dist'$ dominated by $\Dist$, of $\max_{\LocFac'\in\{-1,1\}}
\ObjRat{\Dist}{\Dist'}{1}{\LocFac'}$.
Lemma~\ref{lem:fiveThirdsEval} implies that setting
$\Dist'$ to $\Dist_{<0}$ simultaneously maximizes
$\Obj{\Dist}{\Dist'}{-1}$ and minimizes
$\Obj{\Dist}{\Dist'}{1}$. Thus, setting $\Dist'$ to $\Dist_{<0}$
maximizes $\ObjRat{\Dist}{\Dist'}{1}{-1}$. Moreover, it is easy to see
that $\ObjRat{\Dist}{\Dist_{<0}}{1}{-1}\geq 1$, and that
$\ObjRat{\Dist}{\Dist'}{1}{1}=1$ for all $\Dist'$ dominated by
$\Dist$.  The claim of the lemma follows.
\end{proof}

Given the locations of the agents, but not their preferences,
Lemma~\ref{lem:fiveThirdsPlusMinusB} gives a useful way to rewrite the
best possible approximation ratio that can be guaranteed by locating
the facility in $\{-1, 1\}$.

\begin{lemma}
\label{lem:fiveThirdsPlusMinusB}
Let $\Dist$ be a distribution. Then
\[
\min_{\LocFac\in\{-1,1\}}\RatTwo{\Dist}{\LocFac}
=\frac{\Weight{\Dist}-\WtdSum{\Dist_{<0}}+\WtdSum{\Dist_{>0}}}
{\Weight{\Dist}+|\WtdSum{\Dist}|}.
\]
\end{lemma}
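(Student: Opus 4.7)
The plan is to apply Lemma~\ref{lem:fiveThirdsPlusMinusA} and Lemma~\ref{lem:fiveThirdsEval} in succession, getting closed-form expressions for $\beta(D,-1)$ and $\beta(D,1)$ individually, and then take the smaller one.

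First, by Lemma~\ref{lem:fiveThirdsPlusMinusA}, we have $\beta(D,-1) = \Phi(D, D_{>0}, 1)/\Phi(D, D_{>0}, -1)$ and $\beta(D,1) = \Phi(D, D_{<0}, -1)/\Phi(D, D_{<0}, 1)$. Next, evaluate each of the four $\Phi$ values using Lemma~\ref{lem:fiveThirdsEval}. For example, $\min(D_{>0}, D_{<0}) = \emptyset$ so the Lemma gives $\Phi(D,D_{>0},-1) = \Gamma(D) + h(D)$; similarly $\min(D_{>0}, D_{>0}) = D_{>0}$ yields $\Phi(D,D_{>0},1) = \Gamma(D) - h(D) + 2h(D_{>0})$. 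The symmetric computation produces $\Phi(D,D_{<0},-1) = \Gamma(D) + h(D) - 2h(D_{<0})$ and $\Phi(D,D_{<0},1) = \Gamma(D) - h(D)$.

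Now use the identity $h(D) = h(D_{<0}) + h(D_{>0})$ (since $D_{=0}$ contributes nothing to $h$) to rewrite each of these numerators/denominators. One finds that the two ratios share a common numerator
\[
\Gamma(D) - h(D_{<0}) + h(D_{>0}),
\]
and that their denominators are $\Gamma(D) + h(D)$ and $\Gamma(D) - h(D)$, respectively. Since $|h(D)| \leq \Gamma(D)$ (because each location lies in $[-1,1]$, so $|\sum_{(x,\gamma)\in D} \gamma x| \leq \sum_{(x,\gamma)\in D} \gamma$), both denominators are non-negative, and the numerator is likewise non-negative. Taking the minimum over the two choices of $y \in \{-1,1\}$ selects the larger denominator, which is $\Gamma(D) + |h(D)|$, yielding the claimed formula.

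There is no real obstacle here; the proof is a direct unwinding of the two preceding lemmas. The only point to handle carefully is the degenerate case $\Gamma(D) = 0$ (the empty distribution), where by the earlier convention $\beta(D,y)$ is defined to be $1$, and one should note that the stated formula should be interpreted as $1$ in that case (or simply observe that the empty case was handled separately in the definition of $\beta$).
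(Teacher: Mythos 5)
Your proposal is correct and is precisely the computation the paper intends: its proof of this lemma is a one-line appeal to Lemmas~\ref{lem:fiveThirdsEval} and~\ref{lem:fiveThirdsPlusMinusA}, and you have simply carried out that unwinding explicitly (including the identity $\WtdSum{\Dist}=\WtdSum{\Dist_{<0}}+\WtdSum{\Dist_{>0}}$ that makes the two numerators coincide). No gaps.
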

\begin{proof}
Follows straightforwardly from Lemmas~\ref{lem:fiveThirdsEval}
and~\ref{lem:fiveThirdsPlusMinusA}.
\end{proof}

We will make repeated use of the following simple fact, so we state it
explicitly.

\begin{fact}
\label{fact:fiveThirdsMono}
Let $f(x)$ denote $\frac{x+c}{x+1}$ where $c$ is a positive constant.
Then $c>1$ (resp., $c<1$, $c=1$) implies $f(x_0)>f(x_1)$ (resp.,
$f(x_0)<f(x_1)$, $f(x_0)=f(x_1)$) for all $x_0$ and $x_1$ such that
$0\leq x_0<x_1$.
\end{fact}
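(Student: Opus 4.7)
The plan is to reduce the claim to the manifest monotonicity of a single reciprocal term. I would begin by rewriting
\[
f(x) = \frac{x+c}{x+1} = \frac{(x+1)+(c-1)}{x+1} = 1 + \frac{c-1}{x+1},
\]
which is valid for all $x \geq 0$ since the denominator $x+1$ is strictly positive on the relevant domain. This identity reduces the three-way case analysis to understanding the sign and monotonicity of the single term $\frac{c-1}{x+1}$ as a function of $x$ on $[0,\infty)$.

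From here the three cases are immediate. Since $x+1$ is positive and strictly increasing in $x$, for any $0 \leq x_0 < x_1$ we have $0 < x_0 + 1 < x_1 + 1$, and hence $\frac{1}{x_0+1} > \frac{1}{x_1+1} > 0$. If $c > 1$, multiplying by the positive constant $c-1$ preserves the inequality, so $\frac{c-1}{x_0+1} > \frac{c-1}{x_1+1}$, and adding $1$ yields $f(x_0) > f(x_1)$. If $c < 1$, multiplying by the negative constant $c-1$ reverses the inequality, giving $f(x_0) < f(x_1)$. If $c = 1$, the added term vanishes and $f \equiv 1$, so $f(x_0) = f(x_1)$. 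There is no real obstacle; the only technical point to note is that the domain $x \geq 0$ keeps $x+1$ bounded away from zero, so every manipulation above is well defined.
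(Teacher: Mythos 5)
Your proof is correct; the paper states this fact without proof (it is introduced only as a ``simple fact''), and your decomposition $f(x) = 1 + \frac{c-1}{x+1}$ is exactly the standard argument the authors evidently have in mind. The only minor remark is that the hypothesis $x \geq 0$ is stronger than needed (any $x > -1$ suffices), but that does not affect correctness.
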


For any distribution $\Dist$ and any integer $k$ in
$\{0,\ldots,|\Dist|\}$, we define $\Prefix{\Dist}{k}$ (resp.,
$\Suffix{\Dist}{k}$) as the subset of $\Dist$ consisting of the $k$
lexicographically smallest (resp., largest) pairs.  Given the
locations of the agents, but not their preferences,
Lemma~\ref{lem:fiveThirdsPrefixSuffix} below characterizes the best
possible approximation ratio that can be guaranteed by locating the
facility at $0$.

\begin{lemma}
\label{lem:fiveThirdsPrefixSuffix}
Let $\Dist$ be a distribution. Then
$\RatTwo{\Dist}{0}$ is equal to
\[
\max\left(\max_{k\in\{0,\ldots,|\Dist_{<0}|\}}
\ObjRat{\Dist}{\Prefix{\Dist}{k}}{0}{-1},
\max_{k\in\{0,\ldots,|\Dist_{>0}|\}}
\ObjRat{\Dist}{\Suffix{\Dist}{k}}{0}{1}\right).
\]
\end{lemma}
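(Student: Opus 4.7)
The plan is to reduce, in stages, the variational problem defining $\RatTwo{\Dist}{0}$ until the only adversary choices that matter are the ones appearing in the claimed formula. Applying Lemma~\ref{lem:fiveThirdsAdversary} to the numerator of $\ObjRat{\Dist}{\Dist'}{0}{\LocFac'}$ shows that the optimizing $\LocFac'$ may be restricted to $\{-1,1\}$, so $\RatTwo{\Dist}{0}$ equals the maximum over distributions $\Dist'$ dominated by $\Dist$ of $\max(\ObjRat{\Dist}{\Dist'}{0}{-1},\ObjRat{\Dist}{\Dist'}{0}{1})$. The two terms in the claimed formula correspond to these two sub-problems; by symmetry, I will focus on $\LocFac'=-1$ and argue that the best $\Dist'$ is $\Prefix{\Dist}{k}$ for some integer $k\in\{0,\ldots,|\Dist_{<0}|\}$.

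By Lemma~\ref{lem:fiveThirdsEval}, the ratio $\ObjRat{\Dist}{\Dist'}{0}{-1}$ equals
\[
\frac{\Weight{\Dist}+\WtdSum{\Dist}-2\WtdSum{\min(\Dist',\Dist_{<0})}}{\Weight{\Dist'}-\WtdSum{\Dist_{<0}}+\WtdSum{\Dist_{>0}}}.
\]
The numerator depends on $\Dist'$ only through its portion supported on $\Dist_{<0}$, whereas the denominator is strictly increasing in $\Weight{\Dist'}$. Hence any mass that $\Dist'$ places at a non-negative location only hurts the ratio, and I may assume $\Dist'$ is dominated by $\Dist_{<0}$. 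Next, for any fixed total weight $w=\Weight{\Dist'}$, the ratio is maximized by making $\WtdSum{\Dist'}$ as negative as possible, which is achieved by allocating weight to the leftmost pairs of $\Dist_{<0}$ first. Thus the optimal $\Dist'$ has prefix structure: its support consists of the first several locations of $\Dist_{<0}$, with possibly partial weight on the last.

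The main obstacle is ruling out such fractional weight on that boundary location. I will write $\Dist'$ as $\Prefix{\Dist}{k}\cup\{(\Loc_{k+1},\epsilon)\}$ with $\epsilon\in[0,\Wt_{k+1}]$, where $(\Loc_{k+1},\Wt_{k+1})$ is the $(k+1)$-th lexicographically smallest pair of $\Dist$ and lies in $\Dist_{<0}$. Then $\ObjRat{\Dist}{\Dist'}{0}{-1}$, viewed as a function of $\epsilon$, takes the form $\frac{c_1-2\Loc_{k+1}\epsilon}{c_2+\epsilon}$ for positive constants $c_1,c_2$ that are independent of $\epsilon$. Rewriting this as $-2\Loc_{k+1}+\frac{c_1+2\Loc_{k+1}c_2}{c_2+\epsilon}$ (equivalently, invoking Fact~\ref{fact:fiveThirdsMono} after a linear change of variable) shows that it is monotonic in $\epsilon$. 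Hence its maximum over $[0,\Wt_{k+1}]$ is attained at $\epsilon=0$ or $\epsilon=\Wt_{k+1}$, both of which correspond to the integer prefixes $\Prefix{\Dist}{k}$ and $\Prefix{\Dist}{k+1}$, respectively. The symmetric argument for $\LocFac'=1$ yields the $\Suffix{\Dist}{k}$ contribution, and the reverse inequality is immediate since the right-hand side of the claim is a maximum over a subset of the choices available to the adversary.
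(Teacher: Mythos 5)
Your proposal is correct and follows essentially the same route as the paper's proof: restrict $\LocFac'$ to $\{-1,1\}$ via Lemma~\ref{lem:fiveThirdsAdversary}, observe from Lemma~\ref{lem:fiveThirdsEval} that the optimal adversary distribution for $\LocFac'=-1$ places no mass at non-negative locations and must greedily fill the leftmost locations, and then eliminate the fractional boundary weight by the monotonicity of $\epsilon\mapsto\frac{c_1-2\Loc_{k+1}\epsilon}{c_2+\epsilon}$ (the paper's Claims 1--3, with its exchange arguments replaced by your equivalent direct observation that the denominator depends on $\Dist'$ only through $\Weight{\Dist'}$). The only cosmetic difference is that the paper phrases the support and prefix-structure steps as proofs by contradiction rather than as a direct optimization.
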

\begin{proof}
Lemma~\ref{lem:fiveThirdsAdversary} implies that $\RatTwo{\Dist}{0}$
is equal to the maximum, over all distributions $\Dist'$ dominated by
$\Dist$, of
$\max_{\LocFac\in\{-1,1\}}\ObjRat{\Dist}{\Dist'}{0}{\LocFac}$.  Let
$\Dist'$ be a distribution dominated by $\Dist$ that maximizes
$\ObjRat{\Dist}{\Dist'}{0}{-1}$, and let $\xi$ denote
$\ObjRat{\Dist}{\Dist'}{0}{-1}$.

Claim~1: $\Dist'_{\geq 0}=\emptyset$. Assume for the sake of
contradiction that $\Dist'_{\geq
  0}\not=\emptyset$. Lemma~\ref{lem:fiveThirdsEval} implies that
$\Obj{\Dist}{\Dist'_{<0}}{-1}=\Obj{\Dist}{\Dist'}{-1}$ and
$\Obj{\Dist}{\Dist'_{<0}}{0}<\Obj{\Dist}{\Dist'}{0}$. Hence
$\ObjRat{\Dist}{\Dist'_{<0}}{0}{-1}>\xi$, a contradiction.

Claim~2: For any $\Loc$ and $\Loc'$ such that $-1\leq\Loc<\Loc'<0$ and
$\Weight{\Dist'_{=\Loc'}}>0$, we have
$\Weight{\Dist'_{=\Loc}}=\Weight{\Dist_{=\Loc}}$. Assume for the sake
of contradiction that $-1\leq\Loc<\Loc'<0$,
$\Weight{\Dist'_{=\Loc'}}>0$, and
$\Weight{\Dist'_{=\Loc}}<\Weight{\Dist_{=\Loc}}$. Let $\delta$ denote
$\min(\Weight{\Dist'_{=\Loc'}},\Weight{\Dist_{=\Loc}}-\Weight{\Dist'_{=\Loc}})$.
Thus $\delta>0$. Let $\Dist''$ denote the distribution
\[
(\Dist'\setminus
\{(\Loc,\Weight{\Dist'_{=\Loc}}),(\Loc',\Weight{\Dist'_{=\Loc'}})\})
\cup
\{(\Loc,\Weight{\Dist'_{=\Loc}}+\delta),(\Loc',\Weight{\Dist'_{=\Loc'}}-\delta)\})
\]
Note that $\Dist''$ is dominated by $\Dist$.
Lemma~\ref{lem:fiveThirdsEval} implies that
$\Obj{\Dist}{\Dist''}{-1}>\Obj{\Dist}{\Dist'}{-1}$ and
$\Obj{\Dist}{\Dist''}{0}=\Obj{\Dist}{\Dist'}{0}$. Hence
$\ObjRat{\Dist}{\Dist''}{0}{-1}>\xi$, a contradiction.

Let $k^*$ denote $|\Dist'|$, which is at most $|\Dist_{<0}|$ by
Claim~1.

Claim~3: If $k^*>0$ then
\[
\max(\ObjRat{\Dist}{\Prefix{\Dist}{k^*-1}}{0}{-1},
\ObjRat{\Dist}{\Prefix{\Dist}{k^*}}{0}{-1})=\xi.
\]
Let $(\Loc,\Wt)$ denote the lexicographically greatest pair in
$\Prefix{\Dist}{k^*}$, let $\Dist^{(0)}$ denote
$\Prefix{\Dist}{k^*-1}$, and for any $t$ such that $0<t\leq 1$, let
$\Dist^{(t)}$ denote $\Dist^{(0)}+(\Loc,t\Wt)$.  For any $t$ in
$[0,1]$, let $g(t)$ denote $\ObjRat{\Dist}{\Dist^{(t)}}{0}{-1}$. Using
Fact~\ref{fact:fiveThirdsMono}, it is straightforward to prove that
$g(t)$ is monotonic over the unit interval and hence $\max_{t\in
  [0,1]}g(t)=\max(g(0),g(1))$. Claim~3 follows.

Combining Claim~3 with the observation that $k^*=0$ implies
$\Dist'=\Prefix{\Dist}{0}$, we deduce that there is an integer $k$ in
$\{0,\ldots,|\Dist_{<0}|\}$ such that
$\ObjRat{\Dist}{\Prefix{\Dist}{k}}{0}{-1})=\xi$.  
In other words, there is an integer $k$ in $\{0,\ldots ,|D_{<0}|\}$ 
such that $\ObjRat{\Dist}{\Dist'}{0}{-1})$ is maximized by setting $D'$ to $\Prefix{\Dist}{k}$.

Using an entirely
symmetric argument, we find that there is an integer $k$ in
$\{0,\ldots,|\Dist_{>0}|\}$ such that $\ObjRat{\Dist}{\Dist'}{0}{1})$
is maximized by setting $\Dist'$ to $\Suffix{\Dist}{k}$. The claim of
the lemma follows.
\end{proof}

We are now ready to describe a fast implementation of
Mechanism~\ref{mech:sum-min-fivethird-sgsp-interval}.

\begin{theorem}
\label{thm:fiveThirdsFast}
There is an 
$O(|\Dist|\log|\Dist|)$-time implementation of Mechanism~\ref{mech:sum-min-fivethird-sgsp-interval}.
\end{theorem}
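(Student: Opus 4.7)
The plan is to compute $\RatTwo{\Dist}{\LocFac}$ for each $\LocFac$ in $\{-1,0,1\}$ and then return the lexicographically least $\LocFac$ achieving the minimum. The main work is in evaluating these three quantities within the stated time bound; the comparison and tie-break are then $O(1)$.

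First I would sort the pairs in $\Dist$ by location in $O(|\Dist|\log|\Dist|)$ time, and then compute, in one linear scan, the aggregates $\Weight{\Dist}$, $\WtdSum{\Dist}$, $\WtdSum{\Dist_{<0}}$, and $\WtdSum{\Dist_{>0}}$. Given these aggregates, Lemma~\ref{lem:fiveThirdsPlusMinusB} (or equivalently Lemma~\ref{lem:fiveThirdsPlusMinusA} combined with Lemma~\ref{lem:fiveThirdsEval}) yields $\RatTwo{\Dist}{-1}$ and $\RatTwo{\Dist}{1}$ by closed-form expressions in $O(1)$ time.

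The interesting case is $\RatTwo{\Dist}{0}$, for which I would invoke Lemma~\ref{lem:fiveThirdsPrefixSuffix}: it suffices to take the maximum of $\ObjRat{\Dist}{\Prefix{\Dist}{k}}{0}{-1}$ as $k$ ranges over $\{0,\ldots,|\Dist_{<0}|\}$, together with the symmetric expression for suffixes. After the sort, I would iterate $k$ from $0$ upward. Using Lemma~\ref{lem:fiveThirdsEval} and the observation that $\min(\Prefix{\Dist}{k},\Dist_{<0})=\Prefix{\Dist}{k}$ for each such $k$, each ratio has numerator $\Weight{\Dist}+\WtdSum{\Dist}-2\WtdSum{\Prefix{\Dist}{k}}$ and denominator $\Weight{\Prefix{\Dist}{k}}-\WtdSum{\Dist_{<0}}+\WtdSum{\Dist_{>0}}$. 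When passing from $k$ to $k+1$, both $\Weight{\Prefix{\Dist}{k}}$ and $\WtdSum{\Prefix{\Dist}{k}}$ update by a single add, so each ratio is produced in $O(1)$ amortized time; the suffix maximum is handled analogously by scanning from the right. The total time for the $\LocFac=0$ case is thus $O(|\Dist|)$ after sorting.

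Summing costs across the three cases yields $O(|\Dist|\log|\Dist|)$ overall, dominated by the sort; the tie-breaking step is $O(1)$ since only three candidate locations are compared. The main potential obstacle is checking that the incremental prefix/suffix updates correctly reflect the formulas supplied by Lemmas~\ref{lem:fiveThirdsEval} and~\ref{lem:fiveThirdsPrefixSuffix}, in particular that the $\min(\Prefix{\Dist}{k},\Dist_{<0})$ term simplifies as claimed throughout the range $k\leq|\Dist_{<0}|$; this is immediate from the definition of $\Prefix{\Dist}{k}$ as the lexicographically smallest pairs, but it is the one place where the plan relies on the sort order rather than the algebraic identities alone.
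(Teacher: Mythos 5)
Your proposal is correct and follows the same route as the paper: sort the pairs, then evaluate $\RatTwo{\Dist}{-1}$ and $\RatTwo{\Dist}{1}$ via Lemma~\ref{lem:fiveThirdsPlusMinusA} (with Lemma~\ref{lem:fiveThirdsEval}) and $\RatTwo{\Dist}{0}$ via Lemma~\ref{lem:fiveThirdsPrefixSuffix} in linear time. You supply more implementation detail than the paper does (the explicit prefix/suffix recurrences and the simplification of $\min(\Prefix{\Dist}{k},\Dist_{<0})$), but the decomposition and the lemmas invoked are identical.
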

\begin{proof}
Let $\Dist$ be a given distribution. We begin by spending
$O(|\Dist|\log|\Dist|)$ time to sort the pairs of $\Dist$ in
lexicographic order.  Using Lemma~\ref{lem:fiveThirdsPlusMinusA}, we
can then compute $\RatTwo{\Dist}{-1}$ and $\RatTwo{\Dist}{1}$ in
$O(|\Dist|)$ time.  Likewise, using
Lemma~\ref{lem:fiveThirdsPrefixSuffix}, we can compute
$\RatTwo{\Dist}{0}$ in $O(|\Dist|)$ time. The theorem follows.
\end{proof}

For any distribution $\Dist$, let $\Dist_-$ (resp., $\Dist_+$) denote
$\Prefix{\Dist}{k}$ (resp., $\Suffix{\Dist}{k}$) where $k$ is the
least integer in $\{0,\ldots,|\Dist_{<0}|\}$ (resp.,
$\{0,\ldots,|\Dist_{>0}|\}$) maximizing the first (resp., second)
$\max$ expression in the statement of
Lemma~\ref{lem:fiveThirdsPrefixSuffix}.
  
We say that a distribution $\Dist$ is \emph{trivial} if
$\Dist_{<0}=\emptyset$ or $\Dist_{>0}=\emptyset$.  We say that a
distribution $\Dist$ is \emph{special} if $|\Dist_{<0}|$ and
$|\Dist_{>0}|$ each belong to $\{1,2\}$.
Lemma~\ref{lem:fiveThirdsNontrivial} below allows us to bound the
performance of Mechanism~\ref{mech:sum-min-fivethird-sgsp-interval} on
any nontrivial distribution of agent locations in terms of its
performance on any special distribution of agent locations.

\begin{lemma}
\label{lem:fiveThirdsNontrivial}
Let $\Dist$ be a nontrivial distribution. Then there exists a special
distribution $\Dist'$ such that
\[
\RatTwo{\Dist}{\LocFac}=\RatTwo{\Dist'}{\LocFac}
\]
for all locations $\LocFac$ in $\{-1,0,1\}$.
\end{lemma}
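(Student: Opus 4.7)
The plan is to build the special distribution $\Dist'$ by independently replacing $\Dist_{<0}$ and $\Dist_{>0}$ with distributions of at most two pairs each, while leaving $\Dist_{=0}$ unchanged. Since $\Dist$ is nontrivial, both $\Dist_{<0}$ and $\Dist_{>0}$ are nonempty, so the resulting $\Dist'$ will satisfy $|\Dist'_{<0}|, |\Dist'_{>0}| \in \{1,2\}$ and therefore be special. I need to verify that $\RatTwo{\Dist'}{\LocFac} = \RatTwo{\Dist}{\LocFac}$ for each $\LocFac \in \{-1,0,1\}$.

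Combining Lemmas~\ref{lem:fiveThirdsEval} and~\ref{lem:fiveThirdsPlusMinusA}, each of $\RatTwo{\Dist}{-1}$ and $\RatTwo{\Dist}{1}$ is a closed-form expression in the four aggregates $\Weight{\Dist}$, $\WtdSum{\Dist}$, $\WtdSum{\Dist_{<0}}$, $\WtdSum{\Dist_{>0}}$: both share the numerator $\Weight{\Dist} - \WtdSum{\Dist_{<0}} + \WtdSum{\Dist_{>0}}$, and their denominators are $\Weight{\Dist} + \WtdSum{\Dist}$ and $\Weight{\Dist} - \WtdSum{\Dist}$ respectively. Hence $\RatTwo{\Dist}{\pm 1}$ is preserved provided that each replacement preserves the pair $(\Weight{\Dist_{<0}}, \WtdSum{\Dist_{<0}})$ on the negative side and the analogous pair on the positive side. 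For $\RatTwo{\Dist}{0}$, Lemma~\ref{lem:fiveThirdsPrefixSuffix} expresses the value as the maximum of a prefix-max over $k \in \{0, \dots, |\Dist_{<0}|\}$ and a symmetric suffix-max over $k \in \{0, \dots, |\Dist_{>0}|\}$. Expanding the ratios via Lemma~\ref{lem:fiveThirdsEval} shows that the prefix-max depends only on $\Dist_{<0}$ and the four aggregates, while the suffix-max depends only on $\Dist_{>0}$ and the aggregates. It therefore suffices to preserve the prefix-max when replacing $\Dist_{<0}$ and the suffix-max when replacing $\Dist_{>0}$.

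I describe the replacement on the negative side; the positive case is entirely symmetric. Let $k^*$ attain the prefix-max, with value $\xi$, and write $k_{\max} = |\Dist_{<0}|$. If $k^* \in \{0, k_{\max}\}$, I replace $\Dist_{<0}$ with the single pair $(\bar\Loc, \Weight{\Dist_{<0}})$, where $\bar\Loc = \WtdSum{\Dist_{<0}} / \Weight{\Dist_{<0}} \in [-1,0)$ is the centroid of $\Dist_{<0}$. The two admissible prefix ratios in the new distribution (at $k = 0$ and $k = 1$) equal the original prefix ratios at $k = 0$ and $k = k_{\max}$, and their maximum is $\xi$. If $0 < k^* < k_{\max}$, I use two pairs: the centroid of $\Prefix{\Dist}{k^*}$ with weight $\Weight{\Prefix{\Dist}{k^*}}$, and the centroid of $\Dist_{<0} \setminus \Prefix{\Dist}{k^*}$ with weight $\Weight{\Dist_{<0}} - \Weight{\Prefix{\Dist}{k^*}}$. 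Both centroids lie in $[-1,0)$, and the first is strictly less than the second because every location in $\Prefix{\Dist}{k^*}$ is strictly less than every location in its complement. The three admissible prefix ratios in the new distribution (at $k \in \{0,1,2\}$) therefore equal the original ratios at $k \in \{0, k^*, k_{\max}\}$, whose maximum is again $\xi$. In either case, $\Weight{\Dist_{<0}}$ and $\WtdSum{\Dist_{<0}}$ are preserved by construction.

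The heart of the argument is the claim that the centroid construction preserves the prefix ratios at the distinguished values of $k$. By Lemma~\ref{lem:fiveThirdsEval}, each such ratio is a function of $\Weight{\Prefix{\Dist}{k}}$, $\WtdSum{\Prefix{\Dist}{k}}$, and the four aggregates, and the centroids are designed precisely so that these piecewise weights and first moments agree with the original prefix statistics at $k \in \{0, k^*, k_{\max}\}$. The one remaining subtlety is the ordering of the two new locations, which must ensure that $\Prefix{\Dist'}{1}$ is the point summarizing the prefix piece; this follows from the fact that a centroid is a weighted average of its piece. Once this is in place, the two replacements combined with $\Dist_{=0}$ yield the desired special distribution $\Dist'$.
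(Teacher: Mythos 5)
Your proposal is correct and follows essentially the same route as the paper: the paper likewise collapses the five pieces $\Dist_-$, $\Dist_{<0}\setminus\Dist_-$, $\Dist_{=0}$, $\Dist_{>0}\setminus\Dist_+$, $\Dist_+$ to their centroids and checks, via Lemmas~\ref{lem:fiveThirdsEval}, \ref{lem:fiveThirdsPlusMinusB}, and~\ref{lem:fiveThirdsPrefixSuffix}, that $\RatTwo{\cdot}{\LocFac}$ depends only on the weights and first moments of these pieces together with the global aggregates. Your observation that the new prefix (resp., suffix) ratios form a value-subset of the old ones containing the old maximizer is exactly the content of the paper's Claims~1 and~2.
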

\begin{proof}
We begin by introducing some useful notation. For any two
distributions $\Dist$ and $\Dist'$, we write $\Dist\sim\Dist'$ to mean
that $\Weight{\Dist}=\Weight{\Dist'}$ and
$\WtdSum{\Dist}=\WtdSum{\Dist'}$, and we write $\Dist\cong\Dist'$ to
mean that $D\sim D'$, $D_{<0}\sim D'_{<0}$, $D_{>0}\sim D'_{>0}$,
$D_-\sim D'_-$, and $D_+\sim D'_+$.

Using Lemmas~\ref{lem:fiveThirdsPlusMinusB}
and~\ref{lem:fiveThirdsPrefixSuffix}, it is straightforward to prove
that for any distributions $\Dist$ and $\Dist'$ such that
$\Dist\cong\Dist'$, we have
$\RatTwo{\Dist}{\LocFac}=\RatTwo{\Dist'}{\LocFac}$ for all locations
$\LocFac$ in $\{-1,0,1\}$. Accordingly, it is sufficient to show how
to map any given nontrivial distribution $\Dist$ to a special
distribution $\Dist'$ such that $\Dist\cong\Dist'$.

Let $\Dist$ be a nontrivial distribution, let $\Dist^{(-2)}$ denote
$\Dist_-$, let $\Dist^{(-1)}$ denote $\Dist_{<0}\setminus\Dist_-$, let
$\Dist^{(0)}$ denote $\Dist_{=0}$, let $\Dist^{(1)}$ denote
$\Dist_{>0}\setminus\Dist_+$, and let $\Dist^{(2)}$ denote $\Dist_+$.
For any $i$ in $\{-2,\ldots,2\}$, let $\Wt_i$ denote
$\Weight{\Dist^{(i)}}$, let $\Delta_i$ denote $\WtdSum{\Dist^{(i)}}$,
and let $\Dist'$ denote the distribution
\[
\{(\Delta_i/\Wt_i,\Wt_i)\mid i\in\{-2,\ldots,2\}\;\wedge\;\Wt_i>0\}.
\]
It is straightforward to argue that $\Dist'$ is special,
$\Dist'\sim\Dist$, $\Dist'_{<0}\sim\Dist_{<0}$, and
$\Dist'_{>0}\sim\Dist_{>0}$. To complete the proof, it remains to
argue that $\Dist'_-\sim\Dist_-$ and $\Dist'_+\sim \Dist_+$.  Claims~1
and~2 below imply that $\Dist'_-\sim\Dist_-$.  A symmetric argument
establishes that $\Dist'_+\sim\Dist_+$.

Claim~1: For any $k$ in $\{0,\ldots,|\Dist'_{<0}|\}$, there exists an
$\ell$ in $\{0,\ldots,|\Dist_{<0}|\}$ such that
$\Prefix{\Dist'}{k}\sim\Prefix{\Dist}{\ell}$; moreover, there exists a
$k$ in $\{0,\ldots,|\Dist'_{<0}|\}$ such that
$\Prefix{\Dist'}{k}\sim\Dist_-$.  To prove Claim~1, we consider three
cases.

Case~1: $\Dist_-=\emptyset$. In this case, we have $|\Dist'_{<0}|=1$,
$\Prefix{\Dist'}{0}=\emptyset=\Prefix{\Dist}{0}=\Dist_-$ and
$\Prefix{\Dist'}{1}\sim\Prefix{\Dist}{|\Dist_{<0}|}=\Dist_{<0}$.

Case~2: $\Dist_-\not=\emptyset$ and $|\Dist'_{<0}|=1$.  In this case,
we have $\Prefix{\Dist'}{0}=\emptyset=\Prefix{\Dist}{0}$ and
$\Prefix{\Dist'}{1}\sim\Prefix{\Dist}{|\Dist_{<0}|}=\Dist_{<0}=\Dist_-$.

Case~3: $\Dist_-\not=\emptyset$ and $|\Dist'_{<0}|=2$.  In this case,
we have $\Prefix{\Dist'}{0}=\emptyset=\Prefix{\Dist}{0}$,
$\Prefix{\Dist'}{1}\sim\Prefix{\Dist}{|\Dist_-|}=\Dist_-$, and
$\Prefix{\Dist'}{2}\sim\Prefix{\Dist}{|\Dist_{<0}|}=\Dist_{<0}$.

Claim~2: For any $k$ in $\{0,\ldots,|\Dist'_{<0}|\}$ and any $\ell$ in
$\{0,\ldots,|\Dist_{<0}|\}$ such that
$\Prefix{\Dist'}{k}\sim\Prefix{\Dist}{\ell}$, we have
\[
\ObjRat{\Dist'}{\Prefix{\Dist'}{k}}{0}{-1}
=
\ObjRat{\Dist}{\Prefix{\Dist}{\ell}}{-1}{0}.
\]
Claim~2 follows from Lemma~\ref{lem:fiveThirdsEval} since
$\Dist'\sim\Dist$, $\Dist'_{<0}\sim\Dist_{<0}$,
$\Dist'_{>0}\sim\Dist_{>0}$, and
$\Prefix{\Dist'}{k}\sim\Prefix{\Dist}{\ell}$.
\end{proof}

When the agent locations are described by a special distribution
$\Dist$ such that $\WtdSum{\Dist}>0$,
Lemma~\ref{lem:fiveThirdsBalanceHelper} below enables us to bound the
performance of Mechanism~\ref{mech:sum-min-fivethird-sgsp-interval} in
terms of its performance on a special distribution $\Dist'$ such that
either (1) $\WtdSum{\Dist'}=0$ or (2) $\WtdSum{\Dist'}>0$ and
$1=|\Dist'_{>0}|<|\Dist_{>0}|=2$. A symmetric claim holds for the case
where $\WtdSum{\Dist}<0$.

The main idea underlying the proof of
Lemma~\ref{lem:fiveThirdsBalanceHelper} is to create the desired
distribution $\Dist'$ from the given distribution $\Dist$ by sliding
the lexicographically least pair $(\Loc,\Wt)$ in $\Dist_{>0}$ towards
$0$ (i.e., replacing it with a pair $(\Loc',\Wt)$ where
$0\leq\Loc'<\Loc$) until either $0$ is reached or
$\WtdSum{\Dist'}=0$. Using Lemmas~\ref{lem:fiveThirdsPlusMinusB}
and~\ref{lem:fiveThirdsPrefixSuffix}, we are able to prove that the
best possible approximation ratio that can be guaranteed by
Mechanism~\ref{mech:sum-min-fivethird-sgsp-interval} under
distribution $\Dist'$ is higher than it is under distribution $\Dist$.

\begin{lemma}
\label{lem:fiveThirdsBalanceHelper}
Let $\Dist$ be a special distribution such that
$\WtdSum{\Dist}>0$. Then there exists a special distribution $\Dist'$
such that
\[
\min_{\LocFac\in\{-1,0,1\}}\RatTwo{\Dist'}{\LocFac}
>\min_{\LocFac\in\{-1,0,1\}}\RatTwo{\Dist}{\LocFac},
\]
$0\leq\WtdSum{\Dist'}<\WtdSum{\Dist}$, and if $\WtdSum{\Dist'}>0$ then
$1=|\Dist'_{>0}|<|\Dist_{>0}|=2$.
\end{lemma}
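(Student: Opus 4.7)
The plan is to construct $\Dist'$ by sliding the lex-least pair $(\Loc, \Wt)$ of $\Dist_{>0}$ toward the origin, stopping as soon as either $\WtdSum{\Dist'}$ hits $0$ or the pair itself reaches location $0$. The verification that each $\RatTwo{\Dist'}{\LocFac}$ for $\LocFac \in \{-1, 0, 1\}$ strictly exceeds $\min_{\LocFac'} \RatTwo{\Dist}{\LocFac'}$ will rest on a single monotonicity principle: for positive reals $N > M$ and $0 < \delta < M$, we have $(N - \delta)/(M - \delta) > N/M$; the same conclusion holds if instead the numerator is raised by $\delta$ while the denominator is lowered by $\delta$.

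Formally, let $(\Loc, \Wt)$ be the lex-least pair of $\Dist_{>0}$, and for $t \in [0, \Loc]$ let $\Dist^{(t)}$ denote the distribution obtained from $\Dist$ by replacing $(\Loc, \Wt)$ with $(t, \Wt)$, coalescing with any existing pair at $t$. I would set $t^\star = \max(0, \Loc - \WtdSum{\Dist}/\Wt)$, $\delta = \Wt(\Loc - t^\star) = \min(\WtdSum{\Dist}, \Wt\Loc) > 0$, and $\Dist' = \Dist^{(t^\star)}$, so that $\Weight{\Dist'} = \Weight{\Dist}$ and $\WtdSum{\Dist'} = \WtdSum{\Dist} - \delta \in [0, \WtdSum{\Dist})$. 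A short case analysis on $|\Dist_{>0}|$ and on whether $t^\star = 0$ will confirm that $\Dist'$ is special and that $\WtdSum{\Dist'} > 0$ forces both $|\Dist_{>0}| = 2$ and $|\Dist'_{>0}| = 1$, as required.

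For $\LocFac \in \{-1, 1\}$, Lemma~\ref{lem:fiveThirdsPlusMinusB} expresses $\min_{\LocFac \in \{-1, 1\}} \RatTwo{\Dist}{\LocFac}$ as a ratio whose numerator and denominator each decrease by exactly $\delta$ when passing from $\Dist$ to $\Dist'$; since the gap equals $-2\WtdSum{\Dist_{<0}} > 0$, the monotonicity principle yields the desired strict inequality. For $\LocFac = 0$, I would apply Lemma~\ref{lem:fiveThirdsPrefixSuffix} to express $\RatTwo{\Dist'}{0}$ as a max of prefix ratios (over $\Dist'' \subseteq \Dist'_{<0} = \Dist_{<0}$) and suffix ratios (over $\Dist'' \subseteq \Dist'_{>0}$). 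Each prefix ratio has both numerator and denominator shifted by $-\delta$, and the gap simplifies to $\Weight{\Dist_{\ge 0}} + \Weight{\Dist_{<0} \setminus \Dist''} + 2\WtdSum{\Dist_{<0} \setminus \Dist''}$, which is strictly positive because $\WtdSum{\Dist} > 0$ forces $\Weight{\Dist_{\ge 0}} > |\WtdSum{\Dist_{<0}}| \ge |\WtdSum{\Dist_{<0} \setminus \Dist''}|$. For suffix strategies in which the same subset $\Dist''$ applies to both $\Dist$ and $\Dist'$ (i.e., $\Dist''$ excludes the pair being slid), the numerator rises by $\delta$ while the denominator falls by $\delta$, forcing a strict increase once more.

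The main obstacle will be the suffix strategies whose $\Dist''$ covers the moved pair. When $t^\star > 0$, so that $|\Dist'_{>0}| = |\Dist_{>0}| = 2$, the comparison of $\ObjRat{\Dist}{\Dist_{>0}}{0}{1}$ with $\ObjRat{\Dist'}{\Dist'_{>0}}{0}{1}$ falls again under the $(N-\delta, M-\delta)$ principle, this time with gap $\Weight{\Dist_{\le 0}} > 0$. When $t^\star = 0$, so that $|\Dist'_{>0}| = 1 < 2 = |\Dist_{>0}|$, the $\Dist$-strategy $\Dist'' = \Dist_{>0}$ has no exact counterpart in $\Dist'$; here I would resort to a direct cross-multiplication to prove $\ObjRat{\Dist'}{\Dist'_{>0}}{0}{1} > \ObjRat{\Dist}{\Dist_{>0}}{0}{1}$, showing that the missing $\Dist$-strategy is strictly dominated by an available $\Dist'$-strategy. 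Combining these bounds yields $\RatTwo{\Dist'}{0} > \RatTwo{\Dist}{0} \ge \min_{\LocFac'} \RatTwo{\Dist}{\LocFac'}$, which together with the $\LocFac \in \{-1, 1\}$ comparison gives $\min_{\LocFac} \RatTwo{\Dist'}{\LocFac} > \min_{\LocFac} \RatTwo{\Dist}{\LocFac}$.
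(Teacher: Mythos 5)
Your proposal is correct and follows essentially the same route as the paper: the identical sliding construction with $t^\star=\max(0,\Loc-\WtdSum{\Dist}/\Wt)$, the same appeals to Lemmas~\ref{lem:fiveThirdsPlusMinusB} and~\ref{lem:fiveThirdsPrefixSuffix}, and the same monotonicity principle (the paper's Fact~\ref{fact:fiveThirdsMono}) applied to numerators and denominators shifted by $\pm\delta$. The only difference is organizational — the paper enumerates the five quantities $X,Y_{-1},Y_1,Z_{-1},Z_1$ explicitly using specialness, while you group the cases by prefix/suffix and by whether the slid pair is covered — and it handles your $t^\star=0$ ``cross-multiplication'' step via an intermediate ratio $Y_1^*$ plus the observation $\Weight{\Dist'_{>0}}\leq\Weight{\Dist_{>0}}$.
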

\begin{proof}
Let $(\Loc^-,\Wt^-)$ denote the lexicographically least pair in
$\Dist$, and let $(\Loc^+,\Wt^+)$ denote the lexicographically
greatest pair in $\Dist$.

Let $(\Loc^*,\Wt^*)$ denote the lexicographically least pair in
$\Dist_{>0}$. We remark that if $|\Dist_{>0}|=1$, then
$(\Loc^*,\Wt^*)=(\Loc^+,\Wt^+)$.

We define $\Loc'$ as $\max(\Loc^*-h(\Dist)/\Wt^*,0)$; thus
$0\leq\Loc'<\Loc^*$.  We now construct the desired distribution
$\Dist'$ as follows. If $\Loc'>0$, then we define $\Dist'$ as
$\Dist-(\Loc^*,\Wt^*)+(\Loc',\Wt^*)$. Otherwise, $\Loc'=0$ and we
define $\Dist'$ as
$\Dist_{\not=0}-(\Loc^*,\Wt^*)+(0,\Weight{\Dist_{=0}}+\Wt^*)$.  It is
easy to see that the following conditions hold: $\Dist'$ is special;
$0\leq\WtdSum{\Dist'}<\WtdSum{\Dist}$; if $\WtdSum{\Dist'}>0$ then
$1=|\Dist'_{>0}|<|\Dist_{>0}|=2$; $\Weight{\Dist'}=\Weight{\Dist}$;
$\Dist'_{<0}=\Dist_{<0}$;
$0\leq\WtdSum{\Dist'_{>0}}-\Wt^*\Loc'=\WtdSum{\Dist_{>0}}-\Wt^*\Loc^*$.

Claim~1: $\min_{\LocFac\in\{-1,1\}}\RatTwo{\Dist'}{\LocFac}
>\min_{\LocFac\in\{-1,1\}}\RatTwo{\Dist}{\LocFac}$.  Since
$\WtdSum{\Dist}>0$, Lemma~\ref{lem:fiveThirdsPlusMinusB} implies
\begin{eqnarray*}
  \min_{\LocFac\in\{-1,1\}}\RatTwo{\Dist}{\LocFac}
& = &
  \frac{\Weight{\Dist}-\WtdSum{\Dist_{<0}}+\WtdSum{\Dist_{>0}}}
  {\Weight{\Dist}+\WtdSum{\Dist_{<0}}+\WtdSum{\Dist_{>0}}}.
\end{eqnarray*}
Similarly, we have
\begin{eqnarray*}
  \min_{\LocFac\in\{-1,1\}}\RatTwo{\Dist'}{\LocFac}
& = &
  \frac{\Weight{\Dist'}-\WtdSum{\Dist'_{<0}}+\WtdSum{\Dist'_{>0}}}
  {\Weight{\Dist'}+\WtdSum{\Dist'_{<0}}+\WtdSum{\Dist'_{>0}}}.
\end{eqnarray*}
Since $\Weight{\Dist'}=\Weight{\Dist}$, $\Dist'_{<0}=\Dist_{<0}$,
$0\leq \Loc'<\Loc^*$,
$0\leq\WtdSum{\Dist'_{>0}}-\Wt^*\Loc'=\WtdSum{\Dist_{>0}}-\Wt^*\Loc^*$,
$\Wt^*>0$, and
$\Weight{\Dist}-\WtdSum{\Dist_{<0}}>\Weight{\Dist}+\WtdSum{\Dist_{<0}}>0$,
Fact~\ref{fact:fiveThirdsMono} implies that Claim~1 holds.

Let $X$ denote
$\max_{\LocFac\in\{-1,1\}}\ObjRat{\Dist}{\emptyset}{0}{\LocFac}$, let
$Y_{-1}$ denote $\ObjRat{\Dist}{\Dist_{<0}}{0}{-1}$, let $Y_1$ denote
$\ObjRat{\Dist}{\Dist_{>0}}{0}{1}$, let $Z_{-1}$ denote
$\ObjRat{\Dist}{\{(\Loc^-,\Wt^-)\}}{0}{-1}$, let $Z_1$ denote
$\ObjRat{\Dist}{\{(\Loc^+,\Wt^+)\}}{0}{1}$, and let $X'$, $Y_{-1}'$,
$Y_1'$, and $Z_{-1}'$ be defined similarly, except in terms of
$\Dist'$ instead of $\Dist$.  If $|\Dist_{>0}|=2$ then
$(\Loc^+,\Wt^+)$ belongs to $\Dist'$, and we likewise define $Z'_1$ as
$\ObjRat{\Dist'}{\{(\Loc^+,\Wt^+)\}}{0}{1}$.

Since $\Dist$ is special, we deduce from
Lemma~\ref{lem:fiveThirdsPrefixSuffix} that
\begin{equation*}
\RatTwo{\Dist}{0}=
\begin{dcases*}
\max(X,Y_{-1},Y_1,Z_{-1}) & if $|\Dist_{>0}|=1$\\
\max(X,Y_{-1},Y_1,Z_{-1},Z_1) & if $|\Dist_{>0}|=2$,
\end{dcases*}
\end{equation*}
and
\begin{equation*}
\label{eqn:xyz-star}
\RatTwo{\Dist'}{0}=
\begin{dcases*}
\max(X',Y'_{-1},Y'_1,Z'_{-1}) & if $|\Dist_{>0}|=1$\\
\max(X',Y'_{-1},Y'_1,Z'_{-1},Z'_1) & if $|\Dist_{>0}|=2$.
\end{dcases*}
\end{equation*}
Thus Claims~2 through~6 below imply that
$\RatTwo{\Dist}{0}\leq\RatTwo{\Dist'}{0}$.  Combining this inequality
with Claim~1, we find that $\Dist'$ satisfies the requirements of the
lemma.  It remains only to prove Claims~2 through~6.

Claim~2: $X'>X$. Since $\WtdSum{\Dist}>0$,
Lemma~\ref{lem:fiveThirdsEval} implies that
\begin{eqnarray*}
X & = & \frac{\Weight{\Dist}+\WtdSum{\Dist}}
{-\WtdSum{\Dist_{<0}}+\WtdSum{\Dist_{>0}}}\\
& = &
\frac{\Weight{\Dist}+\WtdSum{\Dist_{<0}}+\WtdSum{\Dist_{>0}}}
{-\WtdSum{\Dist_{<0}}+\WtdSum{\Dist_{>0}}}.
\end{eqnarray*}
Likewise, since $\WtdSum{\Dist'}\geq 0$,
Lemma~\ref{lem:fiveThirdsEval} implies that
\[
X'=
\frac{\Weight{\Dist'}+\WtdSum{\Dist'_{<0}}+\WtdSum{\Dist'_{>0}}}
{-\WtdSum{\Dist'_{<0}}+\WtdSum{\Dist'_{>0}}}.
\]
Since $\Weight{\Dist'}=\Weight{\Dist}$, $\Dist'_{<0}=\Dist_{<0}$,
$0\leq \Loc'<\Loc^*$,
$0\leq\WtdSum{\Dist'_{>0}}-\Wt^*\Loc'=\WtdSum{\Dist_{>0}}-\Wt^*\Loc^*$,
$\Wt^*>0$, and $\Weight{\Dist}+\WtdSum{\Dist_{<0}}\geq
\Weight{\Dist_{>0}}\geq\WtdSum{\Dist_{>0}}>-\WtdSum{\Dist_{<0}}>0$,
Fact~\ref{fact:fiveThirdsMono} implies that Claim~2 holds.

Claim~3: $Y'_{-1}>Y_{-1}$.
Lemma~\ref{lem:fiveThirdsEval} implies that
\[
Y_{-1} = \frac{\Weight{\Dist}-\WtdSum{\Dist_{<0}}+\WtdSum{\Dist_{>0}}}
{\Weight{\Dist_{<0}}-\WtdSum{\Dist_{<0}}+\WtdSum{\Dist_{>0}}}
\]
and
\[
Y'_{-1} = \frac{\Weight{\Dist'}-\WtdSum{\Dist'_{<0}}+\WtdSum{\Dist'_{>0}}}
{\Weight{\Dist'_{<0}}-\WtdSum{\Dist'_{<0}}+\WtdSum{\Dist'_{>0}}}.
\]
Since $\Weight{\Dist'}=\Weight{\Dist}$, $\Dist'_{<0}=\Dist_{<0}$,
$0\leq \Loc'<\Loc^*$,
$0\leq\WtdSum{\Dist'_{>0}}-\Wt^*\Loc'=\WtdSum{\Dist_{>0}}-\Wt^*\Loc^*$,
$\Wt^*>0$, and
$\Weight{\Dist}-\WtdSum{\Dist_{<0}}>\Weight{\Dist_{<0}}-\WtdSum{\Dist_{<0}}>0$,
Fact~\ref{fact:fiveThirdsMono} implies that Claim~3 holds.

Claim~4: $Y'_1>Y_1$.
Lemma~\ref{lem:fiveThirdsEval} implies that
\[
Y_1 = \frac{\Weight{\Dist}-\WtdSum{\Dist_{<0}}+\WtdSum{\Dist_{>0}}}
{\Weight{\Dist_{>0}}-\WtdSum{\Dist_{<0}}+\WtdSum{\Dist_{>0}}}
\]
and
\[
Y'_1 = \frac{\Weight{\Dist'}-\WtdSum{\Dist'_{<0}}+\WtdSum{\Dist'_{>0}}}
{\Weight{\Dist'_{>0}}-\WtdSum{\Dist'_{<0}}+\WtdSum{\Dist'_{>0}}}.
\]
Let $Y^*_1$ denote the intermediate expression
\[
\frac{\Weight{\Dist'}-\WtdSum{\Dist'_{<0}}+\WtdSum{\Dist'_{>0}}}
{\Weight{\Dist_{>0}}-\WtdSum{\Dist'_{<0}}+\WtdSum{\Dist'_{>0}}}.
\]
Since $\Weight{\Dist'}=\Weight{\Dist}$, $\Dist'_{<0}=\Dist_{<0}$,
$0\leq \Loc'<\Loc^*$,
$0\leq\WtdSum{\Dist'_{>0}}-\Wt^*\Loc'=\WtdSum{\Dist_{>0}}-\Wt^*\Loc^*$,
$\Wt^*>0$, and
$\Weight{\Dist}-\WtdSum{\Dist_{<0}}>\Weight{\Dist_{>0}}-\WtdSum{\Dist_{<0}}>0$,
Fact~\ref{fact:fiveThirdsMono} implies that $Y^*_1>Y_1$. Since
$\Weight{\Dist'_{>0}}\leq\Weight{\Dist_{>0}}$, we have $Y_1'\geq
Y^*_1$.  Hence $Y_1'>Y_1$.

Claim~5: $Z'_{-1}>Z_{-1}$. Observe that
$\Weight{\Dist_{<0}}+\WtdSum{\Dist_{<0}}\geq\Wt^-+\Wt^-\Loc^-$ and
$\Weight{\Dist_{>0}}\geq\WtdSum{\Dist_{>0}}>-\WtdSum{\Dist_{<0}}$. Thus
we obtain the inequality
$\Weight{\Dist}-\WtdSum{\Dist_{<0}}-\Wt^-\Loc^->\Wt^--\WtdSum{\Dist_{<0}}$,
which will be used below.  Lemma~\ref{lem:fiveThirdsEval}
implies that
\begin{eqnarray*}
Z_{-1}
& = &
\frac{\Weight{\Dist}+\WtdSum{\Dist}-2\Wt^-\Loc^-}
{\Wt^--\WtdSum{\Dist_{<0}}+\WtdSum{\Dist_{>0}}}\\
& = &
\frac{\Weight{\Dist}-\WtdSum{\Dist_{<0}}-2\Wt^-\Loc^-+\WtdSum{\Dist_{>0}}}
{\Wt^--\WtdSum{\Dist_{<0}}+\WtdSum{\Dist_{>0}}}
\end{eqnarray*}
and
\[
Z'_{-1} = \frac{\Weight{\Dist'}-\WtdSum{\Dist'_{<0}}-2\Wt^-\Loc^-+\WtdSum{\Dist'_{>0}}}
{\Wt^--\WtdSum{\Dist'_{<0}}+\WtdSum{\Dist'_{>0}}}.
\]
Since $\Weight{\Dist'}=\Weight{\Dist}$, $\Dist'_{<0}=\Dist_{<0}$,
$0\leq \Loc'<\Loc^*$,
$0\leq\WtdSum{\Dist'_{>0}}-\Wt^*\Loc'=\WtdSum{\Dist_{>0}}-\Wt^*\Loc^*$,
$\Wt^*>0$, and $\Weight{\Dist}-\WtdSum{\Dist_{<0}}-2\Wt^-\Loc^->
\Weight{\Dist}-\WtdSum{\Dist_{<0}}-\Wt^-\Loc^->\Wt^--\WtdSum{\Dist_{<0}}>0$,
Fact~\ref{fact:fiveThirdsMono} implies that Claim~5 holds.

Claim~6: If $|\Dist_{>0}|=2$ then $Z'_1>Z_1$. Assume that
$|\Dist_{>0}|=2$.  Lemma~\ref{lem:fiveThirdsEval} implies that
$Z_1=A/B$ where $A=\Weight{\Dist}-\WtdSum{\Dist}+2\Wt^+\Loc^+>0$ and
$B=\Wt^+-\WtdSum{\Dist_{<0}}+\WtdSum{\Dist_{>0}}>0$.  Similarly,
Lemma~\ref{lem:fiveThirdsEval} implies that $Z_1'=A'/B'$ where
$A'=\Weight{\Dist'}-\WtdSum{\Dist'}+2\Wt^+\Loc^+>0$ and
$B'=\Wt^+-\WtdSum{\Dist'_{<0}}+\WtdSum{\Dist'_{>0}}>0$.  Since
$\Weight{\Dist'}=\Weight{\Dist}$ and $\WtdSum{\Dist'}<\WtdSum{\Dist}$,
we have $A'>A$.  Since $\Dist'_{<0}=\Dist_{<0}$ and
$\WtdSum{\Dist'_{>0}}<\WtdSum{\Dist_{>0}}$, we have $B'<B$. It follows
that $Z'_1>Z_1$.
\end{proof}

When the agent locations are described by a special distribution
$\Dist$ such that $\WtdSum{\Dist}\not=0$,
Lemma~\ref{lem:fiveThirdsBalance} below enables us to bound the
performance of Mechanism~\ref{mech:sum-min-fivethird-sgsp-interval} in
terms of its performance on a special distribution $\Dist'$ such that
$\WtdSum{\Dist'}=0$.

\begin{lemma}
\label{lem:fiveThirdsBalance}
Let $\Dist$ be a special distribution such that
$\WtdSum{\Dist}\not=0$. Then there exists a special distribution
$\Dist'$ such that
\[
\min_{\LocFac\in\{-1,0,1\}}\RatTwo{\Dist'}{\LocFac}
>\min_{\LocFac\in\{-1,0,1\}}\RatTwo{\Dist}{\LocFac},
\]
and $\WtdSum{\Dist'}=0$.
\end{lemma}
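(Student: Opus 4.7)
The plan is to establish Lemma~\ref{lem:fiveThirdsBalance} by iterating Lemma~\ref{lem:fiveThirdsBalanceHelper}. Since the helper handles only the case $\WtdSum{\Dist}>0$, I first reduce to that case by symmetry: if $\WtdSum{\Dist}<0$, reflect $\Dist$ across $0$ (i.e., replace each pair $(\Loc,\Wt)\in\Dist$ with $(-\Loc,\Wt)$) to get a special distribution with positive weighted sum, prove the claim there, and reflect back, noting that $\RatTwo{\cdot}{\LocFac}$ is equivariant under this reflection (with $\LocFac\mapsto -\LocFac$), so the minimum over $\LocFac\in\{-1,0,1\}$ is preserved. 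Thus I may assume $\WtdSum{\Dist}>0$.

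Next, I would apply Lemma~\ref{lem:fiveThirdsBalanceHelper} to obtain a special $\Dist^{(1)}$ with $0\leq\WtdSum{\Dist^{(1)}}<\WtdSum{\Dist}$ and
\[
\min_{\LocFac\in\{-1,0,1\}}\RatTwo{\Dist^{(1)}}{\LocFac}
>\min_{\LocFac\in\{-1,0,1\}}\RatTwo{\Dist}{\LocFac}.
\]
If $\WtdSum{\Dist^{(1)}}=0$, we are done by taking $\Dist'=\Dist^{(1)}$. Otherwise, the helper guarantees $1=|\Dist^{(1)}_{>0}|<|\Dist_{>0)}|=2$, so $|\Dist^{(1)}_{>0}|=1$. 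I then apply Lemma~\ref{lem:fiveThirdsBalanceHelper} a second time, to $\Dist^{(1)}$, obtaining a special $\Dist^{(2)}$ with $0\leq\WtdSum{\Dist^{(2)}}<\WtdSum{\Dist^{(1)}}$ and
\[
\min_{\LocFac\in\{-1,0,1\}}\RatTwo{\Dist^{(2)}}{\LocFac}
>\min_{\LocFac\in\{-1,0,1\}}\RatTwo{\Dist^{(1)}}{\LocFac}.
\]
If $\WtdSum{\Dist^{(2)}}>0$, the helper again forces $|\Dist^{(2)}_{>0}|<|\Dist^{(1)}_{>0}|=1$, which is impossible. Hence $\WtdSum{\Dist^{(2)}}=0$, and chaining the two strict inequalities gives
\[
\min_{\LocFac\in\{-1,0,1\}}\RatTwo{\Dist^{(2)}}{\LocFac}
>\min_{\LocFac\in\{-1,0,1\}}\RatTwo{\Dist}{\LocFac},
\]
so $\Dist'=\Dist^{(2)}$ meets the requirements of the lemma.

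The main obstacle is conceptual rather than technical: recognizing that the structural statement of Lemma~\ref{lem:fiveThirdsBalanceHelper}, namely the strict decrease in $|\Dist^{(i)}_{>0}|$ whenever $\WtdSum{\Dist^{(i)}}>0$ persists, provides a built-in termination argument in at most two steps, avoiding any need for an induction on a continuous quantity like $\WtdSum{\Dist}$. Once this is noticed, everything else reduces to the symmetry reduction and a clean transitive chaining of the strict inequalities already furnished by the helper.
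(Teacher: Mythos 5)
Your proposal is correct and follows the same route as the paper, which likewise establishes the claim via at most two applications of Lemma~\ref{lem:fiveThirdsBalanceHelper} (and a symmetric version for $\WtdSum{\Dist}<0$); you simply spell out the termination argument, correctly observing that a second application with $|\Dist^{(1)}_{>0}|=1$ forces $\WtdSum{\Dist^{(2)}}=0$ because the helper's conclusion would otherwise assert $|\Dist^{(1)}_{>0}|=2$.
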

\begin{proof}
  Let $\Dist$ be a special distribution such that
  $\WtdSum{\Dist}\not=0$.  If $\WtdSum{\Dist}>0$, we can establish the
  claim using at most two applications of
  Lemma~\ref{lem:fiveThirdsBalanceHelper}.  If $\WtdSum{\Dist}<0$,
  we proceed in the same fashion by appealing to a symmetric version
  of Lemma~\ref{lem:fiveThirdsBalanceHelper}.
\end{proof}

When the agent locations are described by a special distribution
$\Dist$ such that $\WtdSum{\Dist}=0$, and locating the facility in
$\{-1,1\}$ does not guarantee an approximation ratio of at most $5/3$,
Lemma~\ref{lem:fiveThirdsRatioPlusMinus} below provides a useful lower
bound on $\WtdSum{\Dist_{>0}}$.

\begin{lemma}
\label{lem:fiveThirdsRatioPlusMinus}
Let $\Dist$ be a special distribution such that $\WtdSum{\Dist}=0$ and
\[
\min_{\LocFac\in\{-1,1\}}\RatTwo{\Dist}{\LocFac}>\frac{5}{3}.
\]
Then $\WtdSum{\Dist_{>0}}>\Weight{\Dist}/3$.
\end{lemma}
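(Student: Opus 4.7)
The plan is to prove this lemma by a direct algebraic manipulation, using Lemma~\ref{lem:fiveThirdsPlusMinusB} as the single non-trivial input. Since $\Dist$ is in particular a distribution (special or not does not matter for the algebraic step), we may apply that lemma to express $\min_{\LocFac\in\{-1,1\}}\RatTwo{\Dist}{\LocFac}$ in closed form.

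First I would substitute $\WtdSum{\Dist}=0$, which is the hypothesis that makes the numerator clean: writing $\WtdSum{\Dist}=\WtdSum{\Dist_{<0}}+\WtdSum{\Dist_{=0}}+\WtdSum{\Dist_{>0}}$ and noting that $\WtdSum{\Dist_{=0}}=0$, we get $\WtdSum{\Dist_{<0}}=-\WtdSum{\Dist_{>0}}$. Plugging this into the numerator of the formula from Lemma~\ref{lem:fiveThirdsPlusMinusB} gives $\Weight{\Dist}-\WtdSum{\Dist_{<0}}+\WtdSum{\Dist_{>0}}=\Weight{\Dist}+2\WtdSum{\Dist_{>0}}$, while the denominator $\Weight{\Dist}+|\WtdSum{\Dist}|$ simplifies to $\Weight{\Dist}$.

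Next I would invoke the hypothesis $\min_{\LocFac\in\{-1,1\}}\RatTwo{\Dist}{\LocFac}>5/3$ on the simplified expression, obtaining
\[
\frac{\Weight{\Dist}+2\WtdSum{\Dist_{>0}}}{\Weight{\Dist}}>\frac{5}{3}.
\]
Since $\Weight{\Dist}>0$ (otherwise $\Dist$ would be empty and the ratio would be $1$, contradicting the hypothesis), I can clear the denominator to get $3\Weight{\Dist}+6\WtdSum{\Dist_{>0}}>5\Weight{\Dist}$, and then rearrange to $\WtdSum{\Dist_{>0}}>\Weight{\Dist}/3$, which is the desired conclusion.

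There is no real obstacle here; this lemma is essentially a one-line consequence of Lemma~\ref{lem:fiveThirdsPlusMinusB} together with the balanced-center hypothesis $\WtdSum{\Dist}=0$. The role of this lemma in the broader argument is to hand the downstream analysis a concrete quantitative handle: whenever placing the facility at an endpoint fails to give a $5/3$-approximation on a balanced special distribution, a strict majority (more than one third of the total weight, measured by first moment on the positive side) of the mass lies to the right of $0$, which will then be exploited to show that placing the facility at $0$ does achieve a $5/3$-approximation.
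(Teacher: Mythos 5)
Your proof is correct and follows essentially the same route as the paper: both apply Lemma~\ref{lem:fiveThirdsPlusMinusB}, use $\WtdSum{\Dist}=0$ to get $\WtdSum{\Dist_{<0}}=-\WtdSum{\Dist_{>0}}$, and simplify the ratio to $1+2\WtdSum{\Dist_{>0}}/\Weight{\Dist}$ before rearranging. No gaps.
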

\begin{proof}
Since $\WtdSum{\Dist}=0$, we have
$\WtdSum{\Dist_{>0}}=-\WtdSum{\Dist_{<0}}$.  Thus
Lemma~\ref{lem:fiveThirdsPlusMinusB} implies that
\[
\min_{\LocFac\in\{-1,1\}}\RatTwo{\Dist}{\LocFac}=
1+\frac{2\WtdSum{\Dist_{>0}}}{\Weight{\Dist}}.
\]
The claim of the lemma follows.  
\end{proof}

When the agent locations are described by a special distribution
$\Dist$ such that $\WtdSum{\Dist}=0$, and the inequality
$\WtdSum{\Dist_{>0}}>\Weight{\Dist}/3$ appearing in the statement of
Lemma~\ref{lem:fiveThirdsRatioPlusMinus} is satisfied,
Lemma~\ref{lem:fiveThirdsRatioZeroHelper} below shows that locating
the facility at $0$ results in performance within a factor of $5/3$ of
that obtained by locating it at $1$. A symmetric claim shows that
locating the facility at $0$ results in performance within a factor of
$5/3$ of that obtained by locating it at $-1$.

\begin{lemma}
\label{lem:fiveThirdsRatioZeroHelper}
  Let $\Dist$ be a special distribution such that $\WtdSum{\Dist}=0$
  and $\WtdSum{\Dist_{>0}}>\Weight{\Dist}/3$. Then
\[
\max_{k\in\{0,\ldots,|\Dist_{>0}|\}}
\ObjRat{\Dist}{\Suffix{\Dist}{k}}{0}{1}
<
\frac{5}{3}.
\]
\end{lemma}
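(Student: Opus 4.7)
The plan is to apply Lemma~\ref{lem:fiveThirdsEval} to rewrite the numerator and denominator of $\ObjRat{\Dist}{\Suffix{\Dist}{k}}{0}{1}$ explicitly, and then enumerate the (few) relevant values of $k$. For any $k$ in $\{0,\ldots,|\Dist_{>0}|\}$, the set $\Suffix{\Dist}{k}$ consists of the $k$ lexicographically largest pairs in $\Dist$, all of which lie in $\Dist_{>0}$; hence $\Suffix{\Dist}{k}\subseteq\Dist_{>0}$ and therefore $\min(\Suffix{\Dist}{k},\Dist_{>0})=\Suffix{\Dist}{k}$. Combining this observation with $\WtdSum{\Dist}=0$ (which gives $\WtdSum{\Dist_{<0}}=-\WtdSum{\Dist_{>0}}$) and the three formulas of Lemma~\ref{lem:fiveThirdsEval}, one obtains
\[
\Obj{\Dist}{\Suffix{\Dist}{k}}{1}=\Weight{\Dist}+2\WtdSum{\Suffix{\Dist}{k}}
\quad\text{and}\quad
\Obj{\Dist}{\Suffix{\Dist}{k}}{0}=\Weight{\Suffix{\Dist}{k}}+2\WtdSum{\Dist_{>0}}.
\]

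Since $\Dist$ is special, $|\Dist_{>0}|\in\{1,2\}$, so only three subcases need to be considered: $k=0$, $k=|\Dist_{>0}|$, and (if $|\Dist_{>0}|=2$) $k=1$. In the first subcase, both $\Weight{\Suffix{\Dist}{0}}$ and $\WtdSum{\Suffix{\Dist}{0}}$ vanish, so the ratio equals $\Weight{\Dist}/(2\WtdSum{\Dist_{>0}})<3/2$ by the hypothesis $\WtdSum{\Dist_{>0}}>\Weight{\Dist}/3$. In the second subcase, $\Suffix{\Dist}{|\Dist_{>0}|}=\Dist_{>0}$ and cross-multiplication reduces $\ObjRat{\Dist}{\Dist_{>0}}{0}{1}<5/3$ to $3\Weight{\Dist}<5\Weight{\Dist_{>0}}+4\WtdSum{\Dist_{>0}}$; this follows because $\Weight{\Dist_{>0}}\geq\WtdSum{\Dist_{>0}}>\Weight{\Dist}/3$ (the first inequality uses that every location in $\Dist_{>0}$ is at most $1$).

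The remaining subcase, $k=1$ with $|\Dist_{>0}|=2$, is where the main technical effort lies. I would let $(\Loc^+,\Wt^+)$ denote the lexicographically largest pair in $\Dist$ and $(\Loc^-,\Wt^-)$ denote the other pair in $\Dist_{>0}$, so that $\Weight{\Suffix{\Dist}{1}}=\Wt^+$, $\WtdSum{\Suffix{\Dist}{1}}=\Wt^+\Loc^+$, and $\WtdSum{\Dist_{>0}}=\Wt^-\Loc^-+\Wt^+\Loc^+$. A direct calculation reduces $\ObjRat{\Dist}{\Suffix{\Dist}{1}}{0}{1}<5/3$ to
\[
3\Weight{\Dist}<5\Wt^++4\Wt^+\Loc^++10\Wt^-\Loc^-.
\]
The hypothesis provides the strict bound $3\Weight{\Dist}<9(\Wt^-\Loc^-+\Wt^+\Loc^+)$, and a routine rearrangement shows that the right-hand side above exceeds $9(\Wt^-\Loc^-+\Wt^+\Loc^+)$ precisely when $5\Wt^+(1-\Loc^+)+\Wt^-\Loc^-\geq 0$, which is immediate from $\Loc^+\leq 1$ and $\Wt^-\Loc^-\geq 0$. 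Chaining these two inequalities produces the desired strict bound. Once the rewrites from Lemma~\ref{lem:fiveThirdsEval} are in place, the whole argument reduces to a short case analysis, so the only delicate point is the verification of the middle subcase just described.
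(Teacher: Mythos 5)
Your proof is correct and takes essentially the same approach as the paper's: both apply Lemma~\ref{lem:fiveThirdsEval} (together with $\WtdSum{\Dist}=0$) to reduce the claim to the three ratios corresponding to the suffixes $\emptyset$, the singleton top pair, and $\Dist_{>0}$, and your treatment of the cases $k=0$ and $k=|\Dist_{>0}|$ matches the paper's Claims~1 and~2. The only divergence is in the singleton subcase, where the paper bounds the ratio via Fact~\ref{fact:fiveThirdsMono} as a monotone function of the top weight while you verify the inequality by direct cross-multiplication; your algebra there checks out, since $5\Wt^+ + 4\Wt^+\Loc^+ + 10\Wt^-\Loc^- - 9\WtdSum{\Dist_{>0}} = 5\Wt^+(1-\Loc^+)+\Wt^-\Loc^-\ge 0$, so both arguments are valid.
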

\begin{proof}
Let $X$ denote $\ObjRat{\Dist}{\emptyset}{0}{1}$, let $Y$ denote
$\ObjRat{\Dist}{\Dist_{>0}}{0}{1}$, let $(\Loc,\Wt)$ denote the
lexicographically greatest pair in $\Dist$, and let $Z$ denote
$\ObjRat{\Dist}{\{(\Loc,\Wt)\}}{0}{1}$.  It is sufficient to prove
that $\max(X,Y,Z)<\frac{5}{3}$. The latter inequality is immediate
from Claims~1 through~3 below. In the proofs of Claims~1 through~3,
let $\Delta$ denote $\WtdSum{\Dist_{>0}}$, which is equal to
$-\WtdSum{\Dist_{<0}}$ since $\WtdSum{\Dist}=0$.

Claim~1: $X<\frac{3}{2}$. We have $X=\frac{\Weight{\Dist}}{2\Delta}$,
and the claimed inequality follows since $\Delta>\Weight{\Dist}/3$.

Claim~2: $Y<\frac{5}{3}$. We have
$Y=\frac{\Weight{\Dist}+2\Delta}{\Weight{\Dist_{>0}}+2\Delta}$. Since
$\Weight{\Dist_{>0}}\geq\Delta$, we have
$Y\leq\frac{\Weight{\Dist}+2\Delta}{3\Delta}$. Since the latter bound
is strictly decreasing in $\Delta$ and $\Delta>\Weight{\Dist}/3$, the
desired inequality follows.

Claim~3: $Z<\frac{5}{3}$. We have
$Z=\frac{\Weight{\Dist}+2\Wt\Loc}{\Wt+2\Delta}$.  Since $\Wt>0$ and
$\Wt\leq\Delta/\Loc$, Fact~\ref{fact:fiveThirdsMono} (viewing $Z$
as a function of $\Wt$) implies that $Z$ is at most
\[
\max\left(\frac{\Weight{\Dist}}{2\Delta},\frac{\Weight{\Dist}+2\Delta}{(2+\frac{1}{\Loc})\Delta}\right).
\]
Since $\Delta>\Weight{\Dist}/3$, the first argument above is less than
$3/2$. Since $0<\Loc<1$, the second argument above is less than
$\frac{\Weight{\Dist}+2\Delta}{3\Delta}$, which is less than $5/3$.
\end{proof}

When the agent locations are described by a special distribution
$\Dist$ such that $\WtdSum{\Dist}=0$, and the inequality
$\WtdSum{\Dist_{>0}}>\Weight{\Dist}/3$ appearing in the statement of
Lemma~\ref{lem:fiveThirdsRatioPlusMinus} is satisfied,
Lemma~\ref{lem:fiveThirdsRatioZero} below shows that locating
the facility at $0$ guarantees an approximation ratio less than $5/3$.

\begin{lemma}
\label{lem:fiveThirdsRatioZero}
  Let $\Dist$ be a special distribution such that $\WtdSum{\Dist}=0$
  and $\WtdSum{\Dist_{>0}}>\Weight{\Dist}/3$. Then
  $\RatTwo{\Dist}{0}<\frac{5}{3}$.
\end{lemma}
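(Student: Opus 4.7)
The plan is to combine Lemma~\ref{lem:fiveThirdsPrefixSuffix}, which breaks $\RatTwo{\Dist}{0}$ into a ``$-1$ side'' and a ``$+1$ side'' maximum, with Lemma~\ref{lem:fiveThirdsRatioZeroHelper} (which already handles the $+1$ side) plus a symmetric treatment of the $-1$ side. In more detail, Lemma~\ref{lem:fiveThirdsPrefixSuffix} gives
\[
\RatTwo{\Dist}{0}=\max\!\left(\max_{k\in\{0,\ldots,|\Dist_{<0}|\}}\ObjRat{\Dist}{\Prefix{\Dist}{k}}{0}{-1},\;\max_{k\in\{0,\ldots,|\Dist_{>0}|\}}\ObjRat{\Dist}{\Suffix{\Dist}{k}}{0}{1}\right),
\]
so it suffices to bound each of the two inner maxima by $5/3$.

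For the second (``$+1$ side'') maximum, the hypotheses of the present lemma are exactly those of Lemma~\ref{lem:fiveThirdsRatioZeroHelper}, so that lemma gives the bound immediately.

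For the first (``$-1$ side'') maximum, I would apply a reflection argument. Let $\tilde{\Dist}$ denote the distribution obtained by replacing every pair $(\Loc,\Wt)\in\Dist$ with $(-\Loc,\Wt)$. Then $\tilde{\Dist}$ is still special, $\Weight{\tilde{\Dist}}=\Weight{\Dist}$, $\WtdSum{\tilde{\Dist}}=-\WtdSum{\Dist}=0$, and $\WtdSum{\tilde{\Dist}_{>0}}=-\WtdSum{\Dist_{<0}}=\WtdSum{\Dist_{>0}}>\Weight{\Dist}/3=\Weight{\tilde{\Dist}}/3$, where the last equality on the first line uses $\WtdSum{\Dist}=0$. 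Thus $\tilde{\Dist}$ satisfies the hypotheses of Lemma~\ref{lem:fiveThirdsRatioZeroHelper}, yielding
\[
\max_{k\in\{0,\ldots,|\tilde{\Dist}_{>0}|\}}\ObjRat{\tilde{\Dist}}{\Suffix{\tilde{\Dist}}{k}}{0}{1}<\frac{5}{3}.
\]
Since the definition of $\Obj{\cdot}{\cdot}{\LocFac}$ depends on locations only through $|\LocFac-\Loc|$ and $|\Loc|$, reflection exchanges $\Obj{\Dist}{\Prefix{\Dist}{k}}{0}{-1}$ with $\Obj{\tilde{\Dist}}{\Suffix{\tilde{\Dist}}{k}}{0}{1}$ (and similarly for the denominators $\Obj{\Dist}{\Prefix{\Dist}{k}}{0}$ vs.\ $\Obj{\tilde{\Dist}}{\Suffix{\tilde{\Dist}}{k}}{0}$), so the displayed bound transfers to the ``$-1$ side'' and we are done.

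The main work has already been invested in Lemma~\ref{lem:fiveThirdsRatioZeroHelper}; the only real obstacle here is verifying that the reflection $\Dist\mapsto\tilde{\Dist}$ cleanly swaps the roles of $\Prefix{\cdot}{k}$ and $\Suffix{\cdot}{k}$ and of the target locations $-1$ and $1$ inside $\Obj{\cdot}{\cdot}{\cdot}$. This is straightforward from the explicit formula defining $\Obj{\Dist}{\Dist'}{\LocFac}$ (each summand $|\LocFac-\Loc|(\Wt-\Wt')$ is invariant under $(\LocFac,\Loc)\mapsto(-\LocFac,-\Loc)$, and so is $(1+|\Loc|)\Wt'$), so I would dispatch this verification in a single sentence and cite Lemma~\ref{lem:fiveThirdsRatioZeroHelper} for the quantitative part.
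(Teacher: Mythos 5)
Your proposal is correct and follows essentially the same route as the paper: apply Lemma~\ref{lem:fiveThirdsRatioZeroHelper} to the $+1$ side, handle the $-1$ side by symmetry, and combine via Lemma~\ref{lem:fiveThirdsPrefixSuffix}. The only difference is cosmetic --- the paper simply invokes ``a symmetric argument,'' whereas you make that symmetry explicit through the reflection $\Dist\mapsto\tilde{\Dist}$ and verify that it preserves the hypotheses and swaps the two inner maxima, which is a clean way to formalize the same step.
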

\begin{proof}
Using an argument that is symmetric to the proof of
Lemma~\ref{lem:fiveThirdsRatioZeroHelper}, we obtain
\[
\max_{k\in\{0,\ldots,|\Dist_{<0}|\}}
\ObjRat{\Dist}{\Prefix{\Dist}{k}}{0}{-1}
<
\frac{5}{3}.
\]
Combining the above inequality with
Lemma~\ref{lem:fiveThirdsRatioZeroHelper}, the claim of the lemma
follows by Lemma~\ref{lem:fiveThirdsPrefixSuffix}.
\end{proof}

We are now ready to bound the approximation ratio of
Mechanism~\ref{mech:sum-min-fivethird-sgsp-interval}.

\begin{theorem}
\label{thm:fiveThirds}
Mechanism~\ref{mech:sum-min-fivethird-sgsp-interval} achieves an approximation ratio of $5/3$.
\end{theorem}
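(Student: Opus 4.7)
The goal is to show that $\min_{\LocFac\in\{-1,0,1\}}\RatTwo{\Dist}{\LocFac}\leq 5/3$ for every distribution $\Dist$; since Mechanism~\ref{mech:sum-min-fivethird-sgsp-interval} builds the facility at a point attaining this minimum, and a $\OOFLG$ mechanism is $\alpha$-efficient iff $\RatTwo{\Dist}{\Mech(\Dist)}\leq\alpha$ for every $\Dist$, the theorem will follow. The plan is to reduce the general case in two stages to the class of \emph{balanced special} distributions (special distributions $\Dist$ with $\WtdSum{\Dist}=0$), for which the previously established Lemmas~\ref{lem:fiveThirdsRatioPlusMinus} and~\ref{lem:fiveThirdsRatioZero} together close the argument.

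First I would dispose of the trivial distributions directly. If $\Dist_{<0}=\emptyset$, then every agent sits at a location $\Loc\geq 0$, and placing the facility at $-1$ yields welfare $1+\Loc$ for each disliker, matching the welfare $1+|\Loc|$ of each indifferent agent and clearly achieving the maximum welfare attainable at any $\LocFac\in[-1,1]$. Hence $\RatTwo{\Dist}{-1}=1\leq 5/3$; a symmetric argument handles $\Dist_{>0}=\emptyset$ via $\LocFac=1$.

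For nontrivial $\Dist$, Lemma~\ref{lem:fiveThirdsNontrivial} supplies a special distribution $\Dist'$ with $\RatTwo{\Dist'}{\LocFac}=\RatTwo{\Dist}{\LocFac}$ at every $\LocFac\in\{-1,0,1\}$, so it is enough to bound $\min_{\LocFac\in\{-1,0,1\}}\RatTwo{\Dist'}{\LocFac}$. If $\WtdSum{\Dist'}\neq 0$, Lemma~\ref{lem:fiveThirdsBalance} then produces a special distribution $\Dist''$ with $\WtdSum{\Dist''}=0$ and $\min_{\LocFac\in\{-1,0,1\}}\RatTwo{\Dist''}{\LocFac}>\min_{\LocFac\in\{-1,0,1\}}\RatTwo{\Dist'}{\LocFac}$, so any $5/3$ upper bound on the former transfers to the latter; if $\WtdSum{\Dist'}=0$ we simply set $\Dist''=\Dist'$.

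It remains to bound the min-ratio for a balanced special $\Dist''$. Either $\min_{\LocFac\in\{-1,1\}}\RatTwo{\Dist''}{\LocFac}\leq 5/3$ and we are done, or this minimum exceeds $5/3$, in which case Lemma~\ref{lem:fiveThirdsRatioPlusMinus} yields $\WtdSum{\Dist''_{>0}}>\Weight{\Dist''}/3$; this is precisely the hypothesis of Lemma~\ref{lem:fiveThirdsRatioZero}, which then gives $\RatTwo{\Dist''}{0}<5/3$. Chaining the reductions yields $\min_{\LocFac\in\{-1,0,1\}}\RatTwo{\Dist}{\LocFac}\leq\min_{\LocFac\in\{-1,0,1\}}\RatTwo{\Dist''}{\LocFac}\leq 5/3$, as desired. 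Because the argument merely glues together previously proved lemmas, there is no substantial new technical obstacle; the only point requiring care is tracking the direction of the inequalities through the reductions, which is routine because each intermediate step tightens rather than loosens the ratio.
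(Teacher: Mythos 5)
Your proposal is correct and follows exactly the paper's own proof: dispose of trivial distributions directly, then chain Lemmas~\ref{lem:fiveThirdsNontrivial}, \ref{lem:fiveThirdsBalance}, \ref{lem:fiveThirdsRatioPlusMinus}, and~\ref{lem:fiveThirdsRatioZero}, with the inequality directions handled as you describe. You have merely spelled out the details that the paper's one-paragraph proof leaves implicit.
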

\begin{proof}
Let $\Dist$ be a distribution. Given the definition of Mechanism~\ref{mech:sum-min-fivethird-sgsp-interval},
we need to prove that
\[
\min_{\LocFac\in\{-1,0,1\}}\RatTwo{\Dist}{\LocFac}\leq\frac{5}{3}.
\]
If $\Dist$ is trivial, it is straightforward to argue that
$\min_{\LocFac\in\{-1,0,1\}}\RatTwo{\Dist}{\LocFac}=1$.  If $\Dist$ is
nontrivial, the desired inequality follows from
Lemmas~\ref{lem:fiveThirdsNontrivial}, \ref{lem:fiveThirdsBalance},
\ref{lem:fiveThirdsRatioPlusMinus}, and~\ref{lem:fiveThirdsRatioZero}.
\end{proof}

\subsection{The cycle}
\label{subsec:sum-min-cycle}

Now we present a simple adaptation of
Mechanism~\ref{mech:sum-min-sgsp} for the case where the agents are
located on a cycle.

\begin{custommech}{5}
\label{mech:sum-min-sgsp-circle}
Let $\Instance$ denote the reported $\DOFL$ instance. Build all of the
facilities at $0$ if
\begin{equation*}
    \sum_{i\in [n]} \dist(x_i,0) \ge \sum_{i \in [n]} \dist(x_i, 1/2);
\end{equation*}
otherwise, build all of the facilities at $1/2$.
\end{custommech}

As with Mechanism~\ref{mech:sum-min-sgsp}, reported dislikes do not affect the locations at which Mechanism~\ref{mech:sum-min-sgsp-circle} builds the facilities. Hence Mechanism~\ref{mech:sum-min-sgsp-circle} is SGSP.

\begin{theorem}
\label{thm:sum-min-sgsp-circle-sgsp}
Mechanism~\ref{mech:sum-min-sgsp-circle} is SGSP.
\end{theorem}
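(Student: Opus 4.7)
The plan is to mirror the proof of Theorem~\ref{thm:sp-sum-min-sgsp} for Mechanism~\ref{mech:sum-min-sgsp}, since the structural reason that mechanism is SGSP carries over verbatim to the cycle setting. The key observation is that Mechanism~\ref{mech:sum-min-sgsp-circle} inspects only the location profile $\Location$: the comparison $\sum_{i\in[n]}\dist(x_i,0) \ge \sum_{i\in[n]}\dist(x_i,1/2)$ depends solely on the $x_i$, and the output (either all facilities at $0$, or all at $1/2$) is determined by this comparison alone. In particular, the reported aversion profile $\LieAversion$ plays no role in choosing the solution.

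First, I would set up the SGSP definition applied to an arbitrary $\OOFLG$ instance $(I,I')$ on the cycle with $I = \Instance$, $I' = \LieInstance$, and a coalition $C \subseteq [n]$ satisfying $\LieAversion = (\Aversion_{-C}, \LieAversion_C)$. Since $I$ and $I'$ share the same location profile $\Location$ (recall that in our model agent locations are public and cannot be misreported), the comparison test used by Mechanism~\ref{mech:sum-min-sgsp-circle} evaluates identically on $I$ and $I'$. Consequently, $M(I) = M(I')$, where $M$ denotes the mechanism.

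From $M(I) = M(I')$ it follows that $w(I,i,M(I)) = w(I,i,M(I'))$ for every agent $i \in [n]$, and in particular for every $i \in C$. Hence there is no $i \in C$ with $w(I,i,M(I)) < w(I,i,M(I'))$, so the SGSP condition is satisfied vacuously. This completes the argument. There is no real obstacle here; the only subtlety to flag is that the proof relies on the fact that locations cannot be misreported, which is part of the $\OOFLG$ model described in Section~\ref{sec:preliminaries}.
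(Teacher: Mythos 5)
Your proposal is correct and follows exactly the paper's argument: the mechanism's output depends only on the (public, non-misreportable) location profile, so the reported aversion profile cannot change the outcome and the SGSP condition holds vacuously. You have simply spelled out the details that the paper leaves implicit in its one-line justification.
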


\begin{theorem}
\label{thm:sum-min-sgsp-circle-opt}
Mechanism~\ref{mech:sum-min-sgsp-circle} is $2$-efficient.
\end{theorem}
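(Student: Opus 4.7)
The plan is to mirror the analysis of Theorem~\ref{thm:opt-sum-min-sgsp}, replacing the linear bound $x_i + (1-x_i) = 1$ on the path with its cyclic analogue. The key geometric fact is that, on a unit-circumference cycle, the points $0$ and $1/2$ are antipodal, so for every $x \in [0,1)$ the two shorter-arc distances satisfy $\dist(x, 0) + \dist(x, 1/2) = 1/2$. Summing over all agents gives $\sum_{i \in [n]} \dist(x_i, 0) + \sum_{i \in [n]} \dist(x_i, 1/2) = n/2$, so
\[
\max\Bigl(\sum_{i \in [n]} \dist(x_i, 0),\ \sum_{i \in [n]} \dist(x_i, 1/2)\Bigr) \geq n/4.
\]

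Next I would lower bound $\ALG$. Let $y^* \in \{0, 1/2\}$ be the location chosen by Mechanism~\ref{mech:sum-min-sgsp-circle}; by the tie-breaking rule, $\sum_{i \in [n]} \dist(x_i, y^*) \geq n/4$. Since all $k$ facilities are built at $y^*$, an agent $i \notin \indiff(I)$ has welfare exactly $\dist(x_i, y^*)$, while an agent $i \in \indiff(I)$ has welfare $1/2 \geq \dist(x_i, y^*)$ (because $1/2$ is the diameter of the cycle). Therefore
\[
\ALG \;=\; \sum_{i \notin \indiff(I)} \dist(x_i, y^*) + \sum_{i \in \indiff(I)} \tfrac{1}{2} \;\geq\; \sum_{i \in [n]} \dist(x_i, y^*) \;\geq\; n/4.
\]

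Finally I would upper bound $\OPT$: every agent's welfare (whether given by the minimum distance to a disliked facility or by the default value $1/2$ used for indifferent agents) is at most $1/2$, so $\OPT \leq n/2$. Combining the two bounds yields $2 \cdot \ALG \geq n/2 \geq \OPT$, as required. There is no real obstacle here; the only subtlety is making sure to fold the indifferent agents correctly into the $\ALG$ lower bound, which is handled by the observation that their fixed welfare $1/2$ dominates $\dist(x_i, y^*)$.
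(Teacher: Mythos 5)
Your proof is correct and follows essentially the same route as the paper's: lower-bound $\ALG$ by $\sum_i \dist(x_i, y^*)$ using that indifferent agents' default welfare $1/2$ dominates, use the antipodal identity $\dist(x,0)+\dist(x,1/2)=1/2$ together with the mechanism's choice rule to get $\ALG \ge n/4$, and compare against $\OPT \le n/2$. No gaps.
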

\begin{proof}
We sketch a proof that is similar to our proof of
Theorem~\ref{thm:opt-sum-min-sgsp}.  Let $I = \Instance$ denote the
reported $\DOFL$ instance. Let $\ALG$ denote the social welfare
obtained by Mechanism~\ref{mech:sum-min-sgsp-circle} on this instance,
and let $\OPT$ denote the maximum possible social welfare on this
instance.  We need to prove that $2\cdot\ALG\geq\OPT$.

Assume without loss of generality that
Mechanism~\ref{mech:sum-min-sgsp-circle} builds all of the facilities
at $0$. (A symmetric argument handles the case where all of the
facilities are built at $1/2$.)  Using similar arguments as in the
proof of Theorem~\ref{thm:opt-sum-min-sgsp}, we obtain $\ALG \ge
\sum_{i\in [n]} \dist(x_i,0)$. As
Mechanism~\ref{mech:sum-min-sgsp-circle} builds the facilities at $0$
and not $1/2$, we have $\sum_{i\in [n]} \dist(x_i,0) \ge \sum_{i \in
  [n]} \dist(x_i, 1/2)$. We also have $\dist(x_i, 0) + \dist(x_i, 1/2)
= 1/2$ for all agents $i$. Thus $\sum_{i\in [n]} \dist(x_i, 0) \ge
n/4$. Hence $\ALG \ge n/4$.  Since no agent has welfare greater than
$1/2$, we have $n/2 \ge \OPT$. Thus $2 \cdot \ALG \ge n/2 \ge \OPT$,
as required.
\end{proof}

\subsection{The unit square}
\label{subsec:sum-min-square}

We now show how to adapt Mechanism~\ref{mech:sum-min-sgsp} to the case where the agents are located in the unit square.

\begin{custommech}{6}
\label{mech:sum-min-sgsp-square}
Let $\Instance$ denote the reported $\DOFL$ instance.  For each point
$p$ in $\{0,1\}^2$, let $d_p$ denote $\sum_{i\in [n]} \dist(x_i,p)$.
Let $q$ be the point in $\{0,1\}^2$ that maximizes $d_q$, breaking
ties lexicographically.  Build all of the facilities at $q$.
\end{custommech}

Mechanism~\ref{mech:sum-min-sgsp-square} computes $4n$ Euclidean
  distances, and runs in $O(n)$ time.  As in the case of
Mechanism~\ref{mech:sum-min-sgsp}, reported dislikes do not affect the
locations at which Mechanism~\ref{mech:sum-min-sgsp-square} builds the
facilities. Hence Mechanism~\ref{mech:sum-min-sgsp-square} is SGSP.

\begin{theorem}
\label{thm:sum-min-sgsp-square-sgsp}
Mechanism~\ref{mech:sum-min-sgsp-square} is SGSP.
\end{theorem}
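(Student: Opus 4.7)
The plan is to observe that Mechanism~\ref{mech:sum-min-sgsp-square}, exactly like Mechanisms~\ref{mech:sum-min-sgsp} and~\ref{mech:sum-min-sgsp-circle}, is preference-independent: the chosen location $q$ depends only on $n$ and the location profile $\Location$, both of which are public. In particular, the reported aversion profile $\LieAversion$ plays no role in the computation of $q = \argmax_{p \in \{0,1\}^2} \sum_{i \in [n]} \dist(x_i,p)$ (with lexicographic tie-breaking). So for any $\OOFLG$ instance $(I,I')$ with $I = \Instance$ and $I' = \LieInstance$, we have $M(I) = M(I')$.

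Once that identity is in hand, the SGSP property follows immediately. For any coalition $C \subseteq [n]$ and any agent $i \in C$, the welfare $w(I, i, M(I))$ depends only on the true $a_i$, the location $x_i$, and the facility placement $M(I)$; since $M(I) = M(I')$, we get $w(I, i, M(I)) = w(I, i, M(I'))$. Hence no agent in $C$ can strictly increase their welfare by the coalition deviating, so the hypothesis of the SGSP implication is vacuously false and the implication holds.

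There is no hard part: the argument is a verbatim analogue of the proof of Theorem~\ref{thm:sp-sum-min-sgsp} (and of the SGSP claim for Mechanism~\ref{mech:sum-min-sgsp-circle}), and I would write it as a single short paragraph that cites preference-independence and appeals to the definition of SGSP.
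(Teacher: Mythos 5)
Your proposal is correct and matches the paper exactly: the paper's justification is the one-line observation that reported dislikes do not affect where Mechanism~\ref{mech:sum-min-sgsp-square} builds the facilities, so $M(I)=M(I')$ and SGSP holds vacuously. You have simply spelled out the same argument in slightly more detail.
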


\begin{theorem}
\label{thm:sp-sum-min-sgsp-square-opt}
Mechanism~\ref{mech:sum-min-sgsp-square} is $2$-efficient.
\end{theorem}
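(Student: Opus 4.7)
My plan is to mirror the structure of the proof of Theorem~\ref{thm:opt-sum-min-sgsp} in the interval setting, but replace the two-endpoint averaging step by a four-corner averaging step that exploits the triangle inequality on the two diagonals of the unit square.

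First, I would establish the lower bound $\ALG \ge d_q$, where $q$ denotes the corner of $\{0,1\}^2$ chosen by Mechanism~\ref{mech:sum-min-sgsp-square}. For any agent $i$ with $a_i \ne \emptyset$, placing all facilities at $q$ gives welfare exactly $\dist(x_i,q)$. For any agent $i$ in $\indiff(I)$, the welfare is by definition the maximum distance from $x_i$ to a corner, which is at least $\dist(x_i,q)$. Summing over agents yields $\ALG \ge \sum_{i\in[n]}\dist(x_i,q) = d_q$.

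Second, I would use an averaging argument to lower-bound $d_q$ by roughly $n/\sqrt{2}$. Since $q$ maximizes $d_p$ over $p\in\{0,1\}^2$, we have $4 d_q \ge \sum_{p\in\{0,1\}^2} d_p = \sum_{i\in[n]} \sum_{p\in\{0,1\}^2}\dist(x_i,p)$. For any point $x$ in the unit square, applying the triangle inequality to the two diagonals gives
\begin{equation*}
\dist(x,(0,0))+\dist(x,(1,1)) \ge \sqrt{2}
\quad\text{and}\quad
\dist(x,(0,1))+\dist(x,(1,0)) \ge \sqrt{2},
\end{equation*}
so $\sum_{p\in\{0,1\}^2}\dist(x,p) \ge 2\sqrt{2}$. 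Hence $4 d_q \ge 2\sqrt{2}\,n$, i.e., $d_q \ge n\sqrt{2}/2$, and therefore $\ALG \ge n\sqrt{2}/2$.

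Third, I would upper-bound $\OPT$ by $n\sqrt{2}$, using the fact that in the unit square the welfare of any agent, whether derived from a minimum distance to a disliked facility or from the max distance to a corner for an indifferent agent, is at most the diameter $\sqrt{2}$ of the unit square. Combining the two bounds yields $2\ALG \ge n\sqrt{2} \ge \OPT$, as required. There is no serious obstacle here: the only nonroutine ingredient is the diagonal triangle-inequality observation in the second step, which is what forces the factor $2$ to be the right target and replaces the one-dimensional identity $x + (1-x) = 1$ used in the proof of Theorem~\ref{thm:opt-sum-min-sgsp}.
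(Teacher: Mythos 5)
Your proposal is correct and follows essentially the same route as the paper's proof: lower-bound $\ALG$ by $d_q$, use the maximality of $q$ together with the fact that the four corner distances from any point in the square sum to at least $2\sqrt{2}$ to get $\ALG \ge n/\sqrt{2}$, and compare against the trivial bound $\OPT \le \sqrt{2}\,n$. The only cosmetic difference is that you make the diagonal triangle-inequality justification of the $2\sqrt{2}$ bound explicit, which the paper states without proof.
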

\begin{proof}
We sketch a proof that is similar to our proof of
Theorem~\ref{thm:opt-sum-min-sgsp}.  Let $I = \Instance$ denote the
reported $\DOFL$ instance. Let $\ALG$ denote the social welfare
obtained by Mechanism~\ref{mech:sum-min-sgsp-square} on this instance,
and let $\OPT$ denote the maximum possible social welfare on this
instance.  We need to prove that $2\cdot\ALG\geq\OPT$.

Assume without loss of generality that
Mechanism~\ref{mech:sum-min-sgsp-square} builds all of the facilities
at $(0, 0)$. (A symmetric argument handles the three remaining cases.)
Using similar arguments as in the proof of
Theorem~\ref{thm:opt-sum-min-sgsp}, we obtain $\ALG \ge \sum_{i\in
  [n]} \dist(x_i,(0,0))$.  As Mechanism~\ref{mech:sum-min-sgsp-square}
builds the facilities at $(0,0)$, we have $$\sum_{i\in [n]}
\dist(x_i,(0,0)) \ge \max_{p\in \{(0,1), (1,0), (1,1)\}} \sum_{i \in
  [n]} \dist(x_i, p).$$ We also have $\dist(x_i, (0,0)) + \dist(x_i,
(0,1)) + \dist(x_i, (1,0)) + \dist(x_i, (1,1))\ge 2\sqrt{2}$ for all
agents $i$. Thus $\sum_{i\in [n]} \dist(x_i, (0,0)) \ge
n/\sqrt{2}$. Hence $\ALG \ge n/\sqrt{2}$.  Since no agent has welfare
greater than $\sqrt{2}$, we have $\sqrt{2}n \ge \OPT$.  Thus $2 \cdot
\ALG \ge \sqrt{2}n \ge \OPT$, as required.
\end{proof}

\section{Egalitarian Mechanisms}
\label{sec:min-min}

We now design egalitarian mechanisms for $\OOFLG$ when the agents are
located on an interval, cycle, or square.

In Definition~\ref{def:parallel} below, we introduce a simple way to
convert a single-facility $\OOFLG$ mechanism into a $\OOFLG$
mechanism.  For any $\DOFL$ instance $I = \Instance$ and any $j$
  in $[k]$, let $\single(I,j)$ denote the single-facility $\DOFL$
  instance $(n, 1, \Location, \Aversion')$ where $a'_i$ is the
  singleton set containing the sole facility if $i$ belongs to
  $\haters(I,j)$, and is $\emptyset$ otherwise.

\begin{definition}
\label{def:parallel}
For any single-facility $\OOFLG$ mechanism $M$, let $\Parallel{M}$ denote
the $\OOFLG$ mechanism that takes as input a $\DOFL$ instance
$I=\Instance$ and outputs $\mathbf{y} = (y_1, \dots, y_k)$ where
$y_j$ is the location at which $M$ builds the facility on input $\single(I,j)$.
\end{definition}

Lemmas~\ref{lem:min-min-sfo-ooflg-sp} and
\ref{lem:min-min-sfo-ooflg-opt} below reduce the task of designing
a SP egalitarian $\OOFLG$ mechanism to the single-facility case. 

\begin{restatable}{lemma}{minminsfoooflgsp}
\label{lem:min-min-sfo-ooflg-sp}
Let $M$ be a SP single-facility $\OOFLG$ mechanism. Then
$\Parallel{M}$ is a SP $\OOFLG$ mechanism.
\end{restatable}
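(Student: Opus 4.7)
The plan is to reduce the SP property of $\Parallel{M}$ to the SP property of $M$ by exploiting the product-like structure of $\Parallel{M}$ together with the fact that each agent's welfare is a minimum of distances to the disliked facilities.

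Fix a $\OOFLG$ instance $(I,I')$ with $I=\Instance$, $I'=\LieInstance$, and an agent $i$ with $\LieAversion=(\Aversion_{-i},a_i')$. The goal is to show
\[
w(I,i,\Parallel{M}(I))\;\geq\;w(I,i,\Parallel{M}(I')).
\]
I would first dispose of the easy case $a_i=\emptyset$: the definition of $w(I,i,\cdot)$ then gives a value independent of the facility locations, so the inequality holds with equality.

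For the main case $a_i\neq\emptyset$, writing $\mathbf{y}=\Parallel{M}(I)$ and $\mathbf{y}'=\Parallel{M}(I')$, I would use the fact that
\[
w(I,i,\mathbf{y})=\min_{j:F_j\in a_i}\dist(x_i,y_j)\quad\text{and}\quad w(I,i,\mathbf{y}')=\min_{j:F_j\in a_i}\dist(x_i,y_j'),
\]
so it suffices to show $\dist(x_i,y_j)\geq\dist(x_i,y_j')$ for every $j$ with $F_j\in a_i$ (taking the minimum on both sides then yields the desired inequality, since the index $j^{\star}$ achieving the minimum of $\dist(x_i,y_j)$ also satisfies $\dist(x_i,y_{j^{\star}})\geq\dist(x_i,y'_{j^{\star}})\geq\min_j\dist(x_i,y_j')$).

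Fix such a $j$. Compare the two single-facility instances $\single(I,j)$ and $\single(I',j)$. Since $\Aversion$ and $\LieAversion$ agree outside of agent $i$, these instances differ (if at all) only in agent $i$'s aversion component. If $F_j\in a_i'$, then by the definition of $\single$ both instances assign agent $i$ the singleton aversion set containing the sole facility, and so $\single(I,j)=\single(I',j)$, giving $y_j=y_j'$. If $F_j\notin a_i'$, then $\single(I',j)$ is obtained from $\single(I,j)$ by having agent $i$ replace the true aversion set $\{F\}$ (recall $F_j\in a_i$) with $\emptyset$. Applying the SP property of the single-facility mechanism $M$ to $\single(I,j)$ and this unilateral deviation by agent $i$ yields
\[
\dist(x_i,M(\single(I,j)))\;\geq\;\dist(x_i,M(\single(I',j))),
\]
which is exactly $\dist(x_i,y_j)\geq\dist(x_i,y_j')$.

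The argument is essentially bookkeeping; the only substantive point is the observation that $\min_j A_j\geq\min_j B_j$ whenever $A_j\geq B_j$ coordinatewise, which lets us decouple the comparison across facilities and invoke single-facility SP one $j$ at a time. I do not anticipate any serious obstacle; the definition of $\single(I,j)$ is specifically engineered so that a unilateral deviation by agent $i$ in $I$ translates into a unilateral deviation by agent $i$ in each $\single(I,j)$.
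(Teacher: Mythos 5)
Your proposal is correct and follows essentially the same route as the paper's proof: reduce to the single-facility instances $\single(I,j)$, observe that agent $i$'s deviation induces a unilateral deviation in each such instance, invoke SP of $M$ to get $\dist(x_i,y_j)\geq\dist(x_i,y_j')$ for each $F_j\in a_i$, and conclude by taking minima. The paper states this more tersely; your added case analysis (on $a_i=\emptyset$ and on whether $F_j\in a_i'$) just makes explicit the steps the paper leaves implicit.
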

\begin{proof}
Let $(I,I')$ denote a $\OOFLG$ instance with $I=\Instance$ and
$I'=\LieInstance$, and let $i$ be an agent such that $\LieAversion =
(\Aversion_{-i}, a_i')$.  Let $\mathbf{y} = (y_1, \dots, y_k)$ (resp.,
$\mathbf{y'} = (y'_1, \dots, y'_k)$) denote $\Parallel{M}(I)$ (resp.,
$\Parallel{M}(I')$).  Since $M$ is SP, we have $\dist(x_i, y_j) \ge
\dist(x_i, y_j')$ for each facility $F_j$ in $a_i$. Thus $w(I,i,
\mathbf{y}) \ge w(I, i, \mathbf{y'})$, implying that agent $i$ does
not benefit by reporting $a_i'$ instead of $a_i$.
\end{proof}

\begin{restatable}{lemma}{minminsfoooflgopt}
\label{lem:min-min-sfo-ooflg-opt}
Let $M$ be an egalitarian single-facility $\OOFLG$ mechanism. Then
$\Parallel{M}$ is an egalitarian $\OOFLG$ mechanism.
\end{restatable}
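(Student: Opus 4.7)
The plan is to decompose the egalitarian objective into a baseline contribution from indifferent agents and an independent per-facility contribution, so that $\Parallel{M}$, which by construction optimizes the per-facility contributions one at a time, matches the overall maximum. Fix a $\DOFL$ instance $I = \Instance$. For each $j$ in $[k]$, let $H_j = \haters(I, j)$, let $f_j(y) = \min_{i \in H_j} \dist(x_i, y)$ (with an empty minimum interpreted as $+\infty$), and let $M_j = \max_y f_j(y)$. Writing $c_i$ for the (solution-independent) welfare assigned to an agent with $a_i = \emptyset$, and letting $C = \min_{i \in \indiff(I)} c_i$ (with $C = +\infty$ if $\indiff(I)$ is empty), unfolding the definition of $w$ yields the identity
\begin{equation*}
\MMW(I, \mathbf{y}) = \min\bigl(C,\, f_1(y_1),\, \ldots,\, f_k(y_k)\bigr).
\end{equation*}
Since each $f_j$ depends only on its own coordinate, the $y_j$ can be optimized independently, so $\max_{\mathbf{z}} \MMW(I, \mathbf{z}) = \min(C, M_1, \ldots, M_k) =: m$.

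Next I would analyze $y_j^* := M(\single(I, j))$. Applying the same decomposition to the single-facility instance gives $\MMW(\single(I, j), y) = \min(C_j, f_j(y))$, where $C_j := \min_{i \notin H_j} c_i$ is the corresponding baseline. Because $M$ is egalitarian, $y_j^*$ maximizes $\min(C_j, f_j(y))$ over $y$, and since this maximum equals $\min(C_j, M_j)$, a brief case analysis on whether $f_j(y_j^*) \leq C_j$ yields the key inequality $f_j(y_j^*) \geq \min(C_j, M_j)$.

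The step I expect to be the main obstacle is showing $C_j \geq m$, because $C_j$ is a minimum of $c_i$ over a strictly larger set of agents than $\indiff(I)$, and so the naive bound $C_j \leq C$ goes the wrong way. I would fix any $i \notin H_j$ and split into two cases: if $i \in \indiff(I)$, then $c_i \geq C \geq m$; otherwise $a_i$ contains some facility $F_{j'}$ with $j' \neq j$, so $i \in H_{j'}$, and hence $M_{j'} = \max_y f_{j'}(y) \leq \max_y \dist(x_i, y) = c_i$, giving $c_i \geq M_{j'} \geq m$. Taking the min over $i \notin H_j$ yields $C_j \geq m$. Combined with $M_j \geq m$, this gives $f_j(y_j^*) \geq \min(C_j, M_j) \geq m$ for every $j$, and since $C \geq m$ as well, the decomposition of $\MMW(I, \cdot)$ gives $\MMW(I, \Parallel{M}(I)) \geq m$; the reverse inequality is immediate, so $\Parallel{M}$ is egalitarian.
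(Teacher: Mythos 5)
Your proof is correct, and it follows the same basic decomposition as the paper's: write $\MMW(I,\mathbf{y})$ as the minimum of a location-independent baseline for the indifferent agents and one per-facility term $f_j(y_j)$ depending only on its own coordinate, then argue that $M$'s choice on each $\single(I,j)$ does well enough on $f_j$. Where you differ is at the key step, and your version is the more careful one. The paper simply asserts that $\OPT_j=\ALG_j$ for all $j$, i.e.\ that $M(\single(I,j))$ attains the same value of $f_j$ as the $j$th coordinate of an optimal solution. That is not literally forced: in $\single(I,j)$ the haters of the \emph{other} facilities become indifferent agents, so the egalitarian optimum of $\single(I,j)$ is capped at $C_j=\min_{i\notin H_j}c_i$, and an egalitarian $M$ is free to return a point with $f_j$-value equal to $C_j$ even when $M_j>C_j$ (e.g.\ one hater at $0.3$ and one indifferent agent at $0.5$ on the path: $M$ may build at $0.8$ rather than $1$). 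Your observation that $C_j\ge m$ --- because every $i\notin H_j$ is either truly indifferent (so $c_i\ge C\ge m$) or a hater of some $F_{j'}$ (so $c_i=\max_y\dist(x_i,y)\ge M_{j'}\ge m$) --- is exactly what is needed to close this gap, and it works uniformly in the path, cycle, and square settings since in each case the indifferent-agent welfare $c_i$ equals $\max_y\dist(x_i,y)$. In short, you prove the same lemma by the same route, but you supply a justification for the one step the paper's proof glosses over.
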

\begin{proof}
Let $I=\Instance$ denote the reported $\DOFL$ instance.  Let an
optimal solution be $\mathbf{y^*} = (y_1^*, \dots, y_k^*)$, and let
the optimal (maximum) value of the minimum welfare be $\OPT = \MMW(I,
\mathbf{y^*})$. Assume that $\Parallel{M}$ builds the facilities at
$\mathbf{y}' = (y'_1, \dots, y'_k)$, resulting in minimum welfare
$\ALG = \MMW(I, \mathbf{y}')$.  For each facility $F_j$, we define
$\OPT_j$ (resp., $\ALG_j$) as the distance from $y_j^*$ (resp.,
$y'_j$) to the nearest agent in $\haters(I, j)$ (or $\infty$ if
$\haters(I, j)$ is empty).

We have
\begin{equation*}
    \OPT = \min\left(\min_j \OPT_j, \min_{i\in \indiff(I)} w(I,i, \mathbf{y^*})\right)
\end{equation*}
and 
\begin{equation*}
    \ALG = \min\left(\min_j \ALG_j, \min_{i\in \indiff(I)} w(I, i, \mathbf{y}')\right).
\end{equation*}
Since $M$ is egalitarian, we have $\OPT_j = \ALG_j$ for all $j$. The
welfare of the agents in $\indiff(I)$ does not depend on the locations
of the facilities. Thus $\ALG = \OPT$, implying that $\Parallel{M}$ is
egalitarian.
\end{proof}

\subsection{The unit interval}
\label{subsec:min-min-interval}

We begin by describing a SP egalitarian mechanism for single-facility
$\OOFLG$ when the agents are located in the unit interval.

\begin{custommech}{7}
\label{mech:min-min}
Let $I = (n, 1, \Location, \Aversion)$ denote the reported $\DOFL$
instance and let $H$ denote $\haters(I, 1)$.  If $H$ is empty, build
$F_1$ at $0$. Otherwise, let $H$ contain $\ell$ agents $z_1, \dots,
z_\ell$ such that $x_{z_1} \le x_{z_2} \le \dots \le x_{z_\ell}$. Let
$d_1 = x_{z_1}$ and $d_3 = 1 - x_{z_\ell}$. If $\ell = 1$, then build
$F_1$ at $0$ if $d_1 \ge d_3$, and at $1$ otherwise.  If $\ell>1$, let
$m$ be the midpoint of the leftmost largest interval between
consecutive agents in $H$. Formally, $m = (x_{z_s} + x_{z_{s+1}}) /
2$, where $s$ is the smallest number in $[\ell-1]$ such that
$x_{z_{s+1}} - x_{z_s} = \max_{j \in [\ell-1]} (x_{z_{j+1}} -
x_{z_j})$. Let $d_2 = m - x_{z_s}$. Then build facility $F_1$ at $0$
if $d_1 \ge d_2$ and $d_1 \ge d_3$, at $m$ if $d_2 \ge d_3$, and at
$1$ otherwise.
\end{custommech}

The following lemma shows that Mechanism~\ref{mech:min-min} is SP. It
is established by examining how the location of the facility changes
when an agent misreports.

\begin{restatable}{lemma}{minminspsfo}
\label{lem:min-min-sp-sfo}
Mechanism~\ref{mech:min-min} is SP for single-facility $\OOFLG$.
\end{restatable}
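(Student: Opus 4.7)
The plan is to establish the SP property by exploiting the fact that in the single-facility setting each agent's aversion set is either $\emptyset$ or $\{F_1\}$, so the only possible misreport is flipping between these two. If agent $i$ is truly indifferent, then $w(I,i,\mathbf{y}) = \max(x_i, 1-x_i)$ is independent of the facility location, so the SP inequality holds trivially. I therefore reduce to the case $a_i = \{F_1\}$ and $a_i' = \emptyset$. Writing $H = \haters(I,1)$, $H' = H \setminus \{i\}$, and $y, y'$ for the facility locations under truth and lie respectively, the remaining task is to show $|x_i - y| \ge |x_i - y'|$.

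The central observation is that Mechanism~\ref{mech:min-min} outputs the location in the candidate set $\{0, m_H, 1\}$ (or $\{0, 1\}$ when $|H| \le 1$) that maximizes the minimum distance to any hater in $H$, breaking ties in the order $0 \succ m_H \succ 1$. I would then proceed by case analysis on $|H|$ and the position of $i$ in $H$:
\begin{itemize}
\item If $|H| = 1$, then $H' = \emptyset$, $y$ is the endpoint farther from $x_i$, and $y' = 0$, so $|x_i - y| = \max(x_i, 1-x_i) \ge x_i = |x_i - y'|$.
\item If $|H| \ge 2$ and $i$ is the leftmost (resp., rightmost) hater, removing $i$ yields $d_1' \ge d_1$, $d_3' = d_3$, and $d_2' \le d_2$ (resp., the symmetric relations), because the middle gaps of $H'$ form a sub-collection of the middle gaps of $H$.
\item If $|H| \ge 3$ and $i = z_k$ is interior, then $d_1' = d_1$, $d_3' = d_3$, and $d_2' \ge d_2$, because the two gaps flanking $z_k$ merge into a single gap of size $u + v$ where $u = x_{z_k} - x_{z_{k-1}}$ and $v = x_{z_{k+1}} - x_{z_k}$.
\end{itemize}

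In each subcase I would compare $y$ against $y'$ using the relations above together with the defining inequalities of $y$. Most subcases collapse to $y = y'$ with welfare unchanged. Two genuinely new situations require attention. First, $i = z_1$ with $y = m_H$ and $y' = 0$: the defining inequality $d_2 > d_1 = x_{z_1}$ quickly implies $m_H \ge 2 x_{z_1}$, so the truth welfare $m_H - x_{z_1}$ dominates the lie welfare $x_{z_1}$. Second, $i = z_k$ interior where the merged gap becomes the strictly largest gap in $H'$, forcing $y' = m_{H'} = (x_{z_{k-1}} + x_{z_{k+1}})/2$ and lie welfare $|u-v|/2$; here the dominating inequality that determined $y$ (namely $d_1 \ge \max(u,v)/2$ when $y = 0$, $d_3 > \max(u,v)/2$ when $y = 1$, or $s \in \{k-1, k\}$ forcing $u \ge v$ or $v \ge u$ when $y = m_H$) bounds $|u-v|/2$ by the corresponding truth welfare.

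The principal obstacle is the interior subcase, because removing $i$ exposes a brand-new candidate midpoint $m_{H'}$ that did not exist under the truthful report. The resolution rests on the elementary bound $|u-v|/2 \le \max(u,v)/2$ combined with whichever of the three defining inequalities of $y$ is active, all of which already bound $\max(u,v)/2$ by the truth welfare; hence the new midpoint is never farther from $x_i$ than the truth location, completing the proof.
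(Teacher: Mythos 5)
Your proposal is correct and follows the same overall strategy as the paper's proof: reduce to the case $F_1 \in a_i \setminus a_i'$ (the other misreports are trivially non-beneficial), then do an exhaustive case analysis of how the facility location moves when agent $i$ is deleted from $H$. The difference is organizational: the paper cases on the geometry around the truthful location $y$ (whether $y$ is an endpoint or a midpoint, and whether any hater of $H-i$ lies between $y$ and $x_i$ or beyond $x_i$), whereas you case on the rank of $i$ in the sorted hater list and track the monotone changes in $(d_1,d_2,d_3)$; your reformulation of the mechanism as an argmax over $\{0,m,1\}$ with tie-breaking $0 \succ m \succ 1$ is accurate and makes those monotonicity claims clean. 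Both routes work; yours is arguably more systematic. One caution: your claim that aside from the two highlighted situations ``most subcases collapse to $y=y'$'' is not literally true. For example, when $i=z_\ell$ and $y=0$ the lie can move the facility to $1$ (harmless only because $d_1\geq d_3$ forces $x_{z_\ell}\geq 1-x_{z_\ell}$, so the truthful welfare is already maximal), and when $i=z_\ell$ and $y=m$ you must separately rule out $y'=0$ (which \emph{would} be a strict benefit); this does not follow from the monotonicity relations alone, since $d_1'=d_1$, but requires combining $d_1<d_2$ with $d_1\geq d_3'=1-x_{z_{\ell-1}}$ to derive $x_{z_\ell}>1$, a contradiction. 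These subcases are covered by your stated method of pairing the relations with the defining inequalities of $y$, but they need to be written out; the asymmetric tie-breaking means the ``rightmost'' case is not a verbatim mirror of the ``leftmost'' one.
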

\begin{proof}
Let $(I,I')$ denote a single-facility $\OOFLG$ instance with $I = (n,
1, \Location, \Aversion)$ and $I' = (n, 1, \Location, \LieAversion)$, and
let $i$ be an agent such that $\LieAversion = (\Aversion_{-i}, a_i')$.
If $F_1$ does not belong to $a_i$, the welfare of agent $i$ is
independent of the location of $F_1$ and agent $i$ does not benefit by
reporting $a'_i$. Moreover, if $F_1$ belongs to $a_i \cap a_i'$, then
the location of $F_1$ does not change by reporting $a'_i$ instead of
$a_i$, and again, agent $i$ does not benefit by reporting $a'_i$. Accordingly,
for the remainder of the proof, we assume that $F_1$ belongs to $a_i
\setminus a'_i$.

Let $y$ denote the location at which Mechanism~\ref{mech:min-min}
builds $F_1$ when agent $i$ reports truthfully, and let $H$ denote
$\haters(I, 1)$.  Note that Mechanism~\ref{mech:min-min} does not
build $F_1$ at the location of any agent in $H$, that is, $y \neq
x_{i'}$ for all $i'$ in $H$.  Hence $y \neq x_i$.  We can assume
without loss of generality that $y < x_i$, since the case $y > x_i$
can be handled symmetrically.  Let $d_1$, $d_2$, and $d_3$ be as
defined in Mechanism~\ref{mech:min-min} when all agents report
truthfully.  We consider two cases based on whether there is an agent
in $H$ between $y$ and $x_i$.

Case~1: No agent in $H-i$ is located in $[y,x_i]$. We consider two
cases.

Case~1.1: $y = 0$. Thus $d_1 = x_i$. When agent $i$ reports $a'_i$, 
$F_1$ is built at $0$, which does not benefit agent $i$. 

Case~1.2: $y \neq 0$. Thus $d_2 = x_i - y$, there is an agent $i'$ in
$H$ at $y - d_2$, and there are no agents in $H$ in $(y - d_2, y +
d_2)$. We consider two cases.

Case~1.2.1: No agent in $H$ is located to the right of $x_i$. Hence
$x_i \ge 1 - d_2$. Thus when agent $i$ reports $a'_i$, $F_1$ is built
at $1$, which does not benefit agent $i$.

Case~1.2.2: There is an agent in $H$ located to the right of
$x_i$. Let $i''$ be the first agent to the right of $x_i$, breaking
ties arbitrarily. Then $x_{i''} - x_i \le 2d_2$. Thus when agent $i$
reports $a'_i$, $F_1$ is built in $[y, x_i]$, which does not benefit
agent $i$.

Case~2: There is an agent in $H - i$ in $[y, x_i]$. Let $i'$ be the
first agent to the right of $y$ in $H - i$.  Let $d$ denote $d_1 =
x_{i'}$ if $y = 0$, and $d_2 = x_{i'} - y$ otherwise. It follows that
$x_i - y \ge d$. We consider two cases.

Case~2.1: No agent in $H$ is located to the right of $x_i$. Hence $x_i
\ge 1 - d$. Thus when agent $i$ reports $a'_i$, $F_1$ is either built
at $y$ or at $1$, neither of which benefits agent $i$.

Case~2.2: There is an agent in $H$ located to the right of $x_i$. Let
$b$ be the first agent to the right of $x_i$, breaking ties
arbitrarily.  Let agent $a$ be the agent located in $[0, x_i]$ that is
closest to agent $i$, breaking ties arbitrarily.  It follows that $x_i
- x_a \le 2d$ and $x_b - x_i \le 2d$. When agent $i$ reports $a'_i$,
$F_1$ is built at $y$ or in $[x_i - d, x_i + d]$, neither of which
benefits agent $i$.

Thus agent $i$ does not benefit by reporting $a'_i$.
\end{proof}

\begin{lemma}
\label{lem:min-min-opt-sfo}
Mechanism~\ref{mech:min-min} is egalitarian for single-facility $\OOFLG$.
\end{lemma}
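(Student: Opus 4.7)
The plan is to reduce the egalitarian optimization to a maximum-empty-interval computation on the haters, and then verify that the mechanism's output location achieves the resulting optimum. Let $I=(n,1,\Location,\Aversion)$ be the reported instance, write $H=\haters(I,1)$, and let $\mathbf{y}=(y)$ denote a candidate solution. For any agent $i\in\indiff(I)$, the welfare $w(I,i,y)=\max(x_i,1-x_i)$ does not depend on $y$, so the quantity $\min_{i\in\indiff(I)} w(I,i,y)$ is a constant $c$ that the planner cannot influence. The overall minimum welfare therefore equals $\min(c,\,g(y))$, where $g(y)=\min_{i\in H}|y-x_i|$ if $H\ne\emptyset$ and $g(y)=+\infty$ otherwise. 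Consequently it suffices to show that Mechanism~\ref{mech:min-min} always outputs a location $y^*$ attaining $\max_{y\in[0,1]} g(y)$; then $\min(c,g(y^*))=\max_y \min(c,g(y))$ and the mechanism is egalitarian.

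Next I would characterize $\max_y g(y)$ by a simple gap analysis, matching the three quantities $d_1,d_2,d_3$ used by the mechanism. If $H=\emptyset$, the claim is trivial, and the mechanism's output $0$ achieves $g\equiv\infty$. If $H=\{z_1\}$, then $g(y)=|y-x_{z_1}|$ is a convex function on $[0,1]$, so by (a one-point special case of) Lemma~\ref{lem:1-maxian} its maximum is attained at an endpoint, equal to $\max(x_{z_1},1-x_{z_1})=\max(d_1,d_3)$; the mechanism builds at $0$ iff $d_1\ge d_3$, so it attains this maximum. Finally, if $\ell=|H|>1$, partition $[0,1]$ into the $\ell+1$ regions $[0,x_{z_1}]$, $[x_{z_s},x_{z_{s+1}}]$ for $1\le s<\ell$, and $[x_{z_\ell},1]$. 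On each region, $g$ is the minimum of one or two absolute-value functions of $y$, and is therefore maximized at an endpoint of $[0,x_{z_1}]$, at the midpoint of each consecutive-hater gap, or at an endpoint of $[x_{z_\ell},1]$, yielding the respective maximum values $d_1$, $(x_{z_{s+1}}-x_{z_s})/2$ for each $s$, and $d_3$. Taking the max over regions gives $\max_y g(y)=\max(d_1,d_2,d_3)$, where $d_2$ is half the largest consecutive-hater gap.

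It remains to observe that Mechanism~\ref{mech:min-min} outputs a location at which $g$ equals $\max(d_1,d_2,d_3)$: its three-way comparison selects $0$ only when $d_1\ge d_2$ and $d_1\ge d_3$ (giving $g(0)=d_1$), selects $m$ only when $d_2\ge d_3$ and $d_2>d_1$ (giving $g(m)=d_2$, since by construction no hater lies in the open interval $(x_{z_s},x_{z_{s+1}})$ and $m$ sits at its midpoint), and otherwise selects $1$ (giving $g(1)=d_3$). In every branch the chosen location attains $\max_y g(y)$, so $g(y^*)=\max_y g(y)$ and the mechanism is egalitarian. The only delicate point, which I would handle explicitly, is verifying in the $\ell>1$ case that $g(m)=d_2$ (rather than something smaller due to a closer hater), which follows from the definition of $s$ as an index achieving the largest consecutive gap: the distance from $m$ to any other hater $z_t$ with $t\le s$ or $t\ge s+1$ is at least $(x_{z_{t+1}}-x_{z_t})/2\ge d_2$ or, for non-adjacent haters, even larger.
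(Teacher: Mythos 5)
Your proof is correct and takes essentially the same approach as the paper's: both reduce the problem to showing that the mechanism's chosen location maximizes the distance to the nearest hater, which equals $\max(d_1,d_2,d_3)$; the paper argues this by casing on the position of an optimal location $y^*$ and a ``tight'' hater, while you partition $[0,1]$ into regions and maximize $g$ region by region, a purely organizational difference. One small slip to fix: your inline justification that $g(m)=d_2$ invokes the inequality $(x_{z_{t+1}}-x_{z_t})/2\ge d_2$, which is backwards (that quantity is at most $d_2$ by maximality of the chosen gap); the correct and immediate argument is that any hater $z_t$ with $t\le s$ satisfies $|m-x_{z_t}|=m-x_{z_t}\ge m-x_{z_s}=d_2$, and symmetrically for $t\ge s+1$.
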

\begin{proof}
Let $I = (n, 1, \Location, \Aversion)$ denote the reported $\DOFL$
instance, let $H$ denote $\haters(I, 1)$,
let $y^*$ denote an optimal location for the facility, let $\OPT$
denote $\MMW(I,y^*)$, let $y'$ denote the location at which
Mechanism~\ref{mech:min-min} builds the facility, and let $\ALG$
denote $\MMW(I,y')$. Below we establish that $\ALG \ge \OPT$.  Since
$\ALG \le \OPT$, we conclude that $\ALG = \OPT$ and hence that
Mechanism~\ref{mech:min-min} is egalitarian.

If $H$ is empty, then trivially Mechanism~\ref{mech:min-min} is
egalitarian. For the remainder of the proof, assume that $H$ is
non-empty. We say that an agent in $H$ is \emph{tight} if it is as
close to $y^*$ as any other agent in $H$.  Thus for any tight agent
$i$, $\OPT = |y^* - x_i|$. Similarly, $\ALG$ is the distance between
$y'$ and a closest agent in $H$. Let $i$ be a tight agent, and
consider the following three cases.

Case~1: $y^*=0$.  In this case, no agent in $H$ is located in $[0,
  x_i)$. It follows that $d_1 = x_i = \OPT$. Since $\ALG\ge d_1$, we
  have $\ALG \ge \OPT$.

Case~2: $y^*=1$. Symmetric to Case~1.

Case~3: $0<y^*<1$.  Since $y^* = x_i$ implies $\OPT = 0$, it is easy
to see that $y^* \neq x_i$.  We can assume without loss of generality
that $x_i < y^*$, since the case $x_i > y^*$ can be handled
symmetrically.  Thus $\OPT = y^* - x_i$ and no agent in $H$ is located
in $(x_i = y^* - \OPT, y^* + \OPT)$. We consider two cases.

Case~3.1: There is no agent in $H$ to the right of $y^*$. Thus $d_3
\ge \OPT$. Since $\ALG \ge d_3$, we have $\ALG \ge \OPT$.

Case~3.2: There is an agent in $H$ to the right of $y^*$. Consider the
leftmost such agent $i'$. Since $x_{i'} \ge y^* + \OPT$, we have $d_2
\ge \OPT$. Since $\ALG \ge d_2$, we have $\ALG \ge \OPT$.
\end{proof}

We define Mechanism~8 as the $\OOFLG$ mechanism $\Parallel{M}$, where
$M$ denotes Mechanism~\ref{mech:min-min}.  Using
Lemmas~\ref{lem:min-min-sfo-ooflg-sp} through
\ref{lem:min-min-opt-sfo}, we immediately obtain
Theorem~\ref{thm:min-min-opt-eg} below.

\begin{theorem}
\label{thm:min-min-opt-eg}
Mechanism~8 is SP and egalitarian.
\end{theorem}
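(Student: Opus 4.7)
The plan is to derive Theorem~\ref{thm:min-min-opt-eg} directly from the four lemmas already established earlier in this section, by chaining the single-facility properties of Mechanism~\ref{mech:min-min} through the $\Parallelop$ construction. Specifically, I would let $M$ denote Mechanism~\ref{mech:min-min}, so that Mechanism~8 is by definition equal to $\Parallel{M}$. The theorem then amounts to verifying that $\Parallel{M}$ is both SP and egalitarian.

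First, I would invoke Lemma~\ref{lem:min-min-sp-sfo} to obtain that $M$ is SP for single-facility $\OOFLG$, and Lemma~\ref{lem:min-min-opt-sfo} to obtain that $M$ is egalitarian for single-facility $\OOFLG$. These two lemmas collectively establish that $M$ satisfies both desired properties in the single-facility case.

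Next, I would apply Lemma~\ref{lem:min-min-sfo-ooflg-sp} to $M$, which yields that $\Parallel{M}$ is SP; similarly, applying Lemma~\ref{lem:min-min-sfo-ooflg-opt} to $M$ yields that $\Parallel{M}$ is egalitarian. Since Mechanism~8 is $\Parallel{M}$ by definition, both claimed properties follow immediately.

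There is essentially no obstacle here; the theorem is a straightforward composition of previously established results. The substantive technical work was done in Lemmas~\ref{lem:min-min-sp-sfo} and~\ref{lem:min-min-opt-sfo} (analyzing how the location chosen by the single-facility rule responds to a misreport, and verifying optimality by case analysis on the position of an optimal facility), and in Lemmas~\ref{lem:min-min-sfo-ooflg-sp} and~\ref{lem:min-min-sfo-ooflg-opt} (showing that the $\Parallelop$ construction preserves SP and egalitarian properties, the latter exploiting the fact that under the $\MMW$ objective, each facility's contribution to the minimum welfare depends only on its own haters). The theorem statement itself merely packages these results together and requires no additional argument.
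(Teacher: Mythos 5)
Your proposal is correct and matches the paper exactly: the paper likewise defines Mechanism~8 as $\Parallel{M}$ with $M$ being Mechanism~\ref{mech:min-min} and obtains the theorem immediately from Lemmas~\ref{lem:min-min-sfo-ooflg-sp}, \ref{lem:min-min-sfo-ooflg-opt}, \ref{lem:min-min-sp-sfo}, and~\ref{lem:min-min-opt-sfo}. No further argument is needed.
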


Below we provide a lower bound on the approximation ratio of any WGSP
egalitarian mechanism.
Theorem~\ref{thm:min-min-lowerbound} implies that Mechanism~8 is not WGSP.

\begin{theorem}
\label{thm:min-min-lowerbound}
Let $M$ be a WGSP $\alpha$-egalitarian mechanism. Then $\alpha$ is
$\Omega(\sqrt{n})$, where $n$ denotes the number of agents.
\end{theorem}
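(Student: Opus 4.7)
The plan is to prove the lower bound by contradiction: suppose $M$ is a WGSP $\alpha$-egalitarian mechanism with $\alpha=o(\sqrt{n})$, and construct a family of two-facility $\DOFL$ instances whose combined $\alpha$-approximation and WGSP requirements are unsatisfiable. I will take $m=\Theta(\sqrt{n})$, place $m$ agents at each of $m$ evenly spaced positions $p_1<\cdots<p_m$ on $[0,1]$, and for each $S\subseteq[m]$ define $I_S$ to be the instance in which every agent at $p_j$ dislikes $F_1$ if $j\in S$ and dislikes $F_2$ if $j\notin S$. For the balanced choice $S_0=\{1,\ldots,m/2\}$, the optimum minimum welfare is $\Omega(1)$, achieved by locating $F_1$ at $1$ and $F_2$ at $0$; so the $\alpha$-approximation requirement forces $M(I_{S_0})[F_1]$ to lie to the right of every $p_j$ with $j\in S_0$ by a margin $\omega(1/\sqrt{n})$, while symmetrically $M(I_{S_0^c})[F_1]$ must lie to the left of every such $p_j$. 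In particular these two placements of $F_1$ differ by $\Omega(1)$.

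The second step is to apply WGSP to single-cluster coalitions. Going from $I_S$ to $I_{S\cup\{j\}}$ with $j\notin S$ amounts to the $m$ agents at $p_j$ lying: their true aversion under $I_S$ is $\{F_2\}$, but they report $\{F_1\}$. For every member of this homogeneous coalition to strictly benefit we would need $|p_j-M(I_{S\cup\{j\}})[F_2]|>|p_j-M(I_S)[F_2]|$, so WGSP forbids this and analogously forbids $F_1$ from moving further from $p_j$ when $j$ leaves $S$. I would then chain one-cluster flips along a sequence $S_0=T_0,T_1,\ldots,T_m=S_0^c$ and telescope the resulting distance-nonincreasing inequalities at the pivots corresponding to the flipped indices, aiming to conclude that $M(I_{T_0})[F_1]$ and $M(I_{T_m})[F_1]$ cannot differ by more than $O(\sqrt{n}/\alpha)$. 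Combined with the $\Omega(1)$ separation established above, this forces $\alpha=\Omega(\sqrt{n})$.

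The main obstacle is making the telescoping rigorous. Each WGSP inequality constrains the distance from $F_1$ (or $F_2$) to a single fixed pivot, and when the pivots vary along the chain it is not automatic that the individual constraints combine into a uniform bound on the facility's raw position; the facility could in principle jump over a pivot in one step while still weakly decreasing its distance to that pivot. Handling this likely requires either an inductive invariant that keeps $F_1$ on one side of the currently active pivots, or a potential-function argument that aggregates distances to a range of pivots rather than to a single pivot per step. A secondary but essential subtlety is that every coalition used must consist entirely of agents whose true aversion is nonempty, since an indifferent member of the coalition trivially satisfies WGSP and makes the constraint vacuous; partitioning every cluster into $F_1$- and $F_2$-haters (with no indifferent clusters) is precisely what guarantees that each WGSP constraint along the chain is nontrivial.
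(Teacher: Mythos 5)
There is a genuine gap here: the proposal is a plan whose decisive step is missing, and the ``easy'' step preceding it is quantitatively wrong. First, the claimed $\Omega(1)$ separation between $M(I_{S_0})[F_1]$ and $M(I_{S_0^c})[F_1]$ does not follow from the approximation guarantee. With $F_1$-haters occupying the left half of the clusters in $I_{S_0}$ and the right half in $I_{S_0^c}$, the $\alpha$-approximation only forces $F_1$ to the right of $p_{m/2}+\Theta(1/\alpha)$ in the first instance and to the left of $p_{m/2+1}-\Theta(1/\alpha)$ in the second; since $p_{m/2}$ and $p_{m/2+1}$ are adjacent, the separation is only $\Theta(1/\alpha)-1/m$, which is $o(1)$ when $\alpha\to\infty$. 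Second, and more seriously, the telescoping step --- which you correctly identify as the crux --- is not carried out, and I do not see how it could be. Each homogeneous-cluster coalition yields only the constraint that one facility's \emph{distance to a single pivot} $p_j$ does not increase; such constraints are invariant under reflecting the facility across $p_j$, so they do not bound the facility's displacement, and as the pivot changes from step to step they do not compose into any bound on $|M(I_{T_0})[F_1]-M(I_{T_m})[F_1]|$. Moreover, $\alpha$ never enters these WGSP inequalities, so the target bound $O(\sqrt{n}/\alpha)$ on the total drift has no visible source; the approximation constraints on the intermediate instances only exclude $F_1$ from $\Theta(1/\alpha)$-neighborhoods of the \emph{current} haters, which still permits a gradual drift across the midpoint as the hater set rotates.

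The paper's proof is structured quite differently and avoids both problems. It uses just two instances: a truthful one $I$ with optimum $1/4$, and a misreported one $I'$ in which a single \emph{heterogeneous} coalition (agents at $\Theta(n)$ distinct locations, some truly hating $F_1$ and some $F_2$) interleaves its reports at the fine scale $1/q^2$ with $q=\Theta(\sqrt{n})$, driving the optimum of $I'$ down to $\Theta(1/q)$. The dichotomy is then: if the mechanism places the facilities near the endpoints on $I'$, WGSP (applied once, using the fact that the mixed coalition can only all benefit if \emph{both} facilities move outward) forces a near-endpoint placement on $I$ as well, which is a $\Theta(q)$-factor loss there; otherwise the placement on $I'$ is within $1/q^2$ of some reported hater, a $\Theta(q)$-factor loss on $I'$. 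The $\sqrt{n}$ thus comes from the ratio of the two optima under a single lie, not from accumulating $\sqrt{n}$ small constraints. If you want to salvage your approach, you would need (i) an instance whose \emph{reported} optimum is $\Theta(1/\sqrt{n})$ (your evenly spaced clusters at scale $1/m$ never produce one), and (ii) a coalition whose members' true aversions point at both facilities, so that ``everyone strictly benefits'' pins down the direction of motion of both facilities simultaneously rather than the distance to one pivot.
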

\begin{proof}
Let $q$ be a large even integer, let $p$ denote $q^2 + 1$, and let $U$
(resp., $V$) denote the set of all integers $i$ such that $0<i<q^2/2$
(resp., $q^2/2<i<q^2$).  We construct two $(p+3)$-agent two-facility
$\OOFLG$ instances $(I,I)$ and $(I,I')$.
In both $(I,I)$ and $(I,I')$, there is an agent located at $i/q^2$,
called agent $i$, for each $i$ in $U \cup V$, and there are two agents
each at $0$, $1/2$, and $1$.  In $I$, each agent $i$ in $U$ dislikes
$\{F_2\}$, each agent $i$ in $V$ dislikes $\{F_1\}$, one agent at $0$
(resp., $1/2$, $1$) dislikes $\{F_1\}$, and the other agent at $0$
(resp., $1/2$, $1$) dislikes $\{F_2\}$.
In $I'$, the agents $i$ in $U \setminus \{1, \dots, q-1\}$ have
alternating reports: agent $q$ reports $\{F_1\}$, agent $q+1$ reports
$\{F_2\}$, agent $q+2$ reports $\{F_1\}$, and so on. Symmetrically,
the agents $i$ in $V \setminus \{q^2-q+1, \dots, q^2-1\}$ have
alternating reports: agent $q^2-q$ reports $\{F_2\}$, agent $q^2-q-1$
reports $\{F_1\}$, agent $q^2-q-2$ reports $\{F_2\}$, and so on. All
other agents in $I'$ report truthfully.


Let the optimal minimum welfare for $\DOFL$ instance $I$ (resp., $I'$)
be $\OPT$ (resp., $\OPT'$).  It is easy to see that $\OPT=1/4$ and
$\OPT'=\frac{1}{2q}$ (obtained by building the facilities at $(1/4,
3/4)$ and $(\frac{1}{2q}, 1-\frac{1}{2q})$, respectively).  Let $\ALG$
(resp., $\ALG'$) denote the minimum welfare achieved by $M$ on
instance $I$ (resp., $I'$).  Below we prove that either
$\OPT/\ALG\ge q/4$ or $\OPT'/\ALG'\ge q/2$.


Let $M$ build facilities at $(y_1, y_2)$ (resp., $(y'_1, y'_2)$) on
instance $I$ (resp., $I'$). We consider two cases.

Case~1: $0 \le y'_1 < 1/q$ and $1-1/q < y'_2 \le 1$.  Let $S$ denote
the set of agents who lie in $I'$.
If $y'_1 < y_1$ and $y'_2 > y_2$, then all of the agents in $S$
benefit by lying. Hence for $M$ to be WGSP, either $y'_1 \ge y_1$ or
$y'_2 \le y_2$. Let us assume that $y'_1 \ge y_1$; the case where
$y'_2\le y_2$ can be handled symmetrically. Since $y'_1 < 1/q$, we
have $y_1 < 1/q$. Note that there is an agent at $0$ who reports
$\{F_1\}$. Thus $\ALG \le y_1 < 1/q$. Hence $\OPT/\ALG\ge
q/4$.

Case~2: $y'_1 \ge 1/q$ or $y'_2 \le 1-1/q$. If $y'_1 \ge 1/q$, then at least one agent within distance $1/q^2$ of
$y'_1$ reported $\{F_1\}$ in $I'$.  A similar observation holds for the case
$y'_2 \le 1-1/q$. Thus $\ALG' \le 1/q^2$. Hence $\OPT' / \ALG' \ge
q/2$.

The preceding case analysis shows that $\alpha \ge q/4$.  Since $q =
\sqrt{p-1} = \sqrt{n-4}$, the theorem holds.
\end{proof}

The following variant of Mechanism~8 is SGSP. In this variant, we
treat the reported dislikes of all agents as if they were $\{F_1\}$,
and we use Mechanism~\ref{mech:min-min} to determine where to build
$F_1$. Then we build all of the remaining facilities at the same
location as $F_1$. This mechanism is SGSP because it disregards the
reported aversion profile. We claim that this mechanism is
$2(n+1)$-egalitarian, where $n$ denotes the number of agents. To prove
this claim, we first observe that when Mechanism~\ref{mech:min-min} is
run as a subroutine within this mechanism, we have $\max(d_1, 2d_2,
d_3) \ge 1/(n+1)$. Thus the minimum welfare achieved by the mechanism
is at least $\frac{1}{2(n+1)}$.  Since the optimal minimum welfare is at
most $1$, the claim holds.

\subsection{The cycle}
\label{subsec:min-min-cycle}

In this section, we present egalitarian mechanisms for the case where
the agents are located on the unit-circumference circle $C$. For any
point $u$ on $C$, we let $\widehat{u}$ denote the point antipodal to
$u$. We begin by considering the natural adaptation of
Mechanism~\ref{mech:min-min} to a cycle.

\begin{custommech}{9}
\label{mech:min-min-circle}
Let $I = (n, 1, \Location, \Aversion)$ denote the reported $\DOFL$
instance and let $H$ denote $\haters(I, 1)$.  If $H$ is empty, then
build facility $F_1$ at $0$. If $H$ has only one agent $i$, then build
$F_1$ at $\widehat{x_i}$.  Otherwise, build $F_1$ at the midpoint of
the largest gap between any two consecutive agents in $H$. Formally,
let $H$ have $\ell$ agents $z_0, \dots, z_{\ell-1}$ such that $x_{z_0}
\le x_{z_1} \le \dots \le x_{z_{\ell-1}}$. Let $\oplus$ denote
addition modulo $\ell$. Build $F_1$ at the midpoint of $x_{z_s}$ and
$x_{z_{s\oplus 1}}$, where $s$ is the smallest number in $\{0, \dots
\ell-1\}$ such that $\dist(x_{z_{s\oplus 1}}, x_{z_s}) = \max_{j \in
  \{0, \dots \ell-1\} } \dist(x_{z_{j\oplus 1}}, x_{z_j})$.
\end{custommech}

\begin{lemma}
\label{lem:min-min-circle-sp}
Mechanism~\ref{mech:min-min-circle} is SP for single-facility
$\OOFLG$.
\end{lemma}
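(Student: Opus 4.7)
The plan is to follow the structure of the path-setting argument of Lemma~\ref{lem:min-min-sp-sfo}, analyzing how the output of Mechanism~\ref{mech:min-min-circle} shifts when a single agent $i$ misreports. First I would reduce to the one nontrivial case. Since $k = 1$, $a_i' \in \{\emptyset, \{F_1\}\}$. If $F_1 \notin a_i$, the cycle convention in Section~\ref{sec:preliminaries} makes $w(I, i, \mathbf{y}) \equiv 1/2$, so no report raises $i$'s welfare. If $F_1 \in a_i \cap a_i'$, the reported hater set, and therefore the output, is unchanged. It remains to treat $a_i = \{F_1\}$, $a_i' = \emptyset$; let $H = \haters(I, 1)$ and let $y$ and $y'$ denote the truthful and post-lie locations of $F_1$, so the goal is $\dist(x_i, y) \ge \dist(x_i, y')$. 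The small cases go quickly: if $|H| = 1$ then $y = \widehat{x_i}$ gives welfare $1/2$ while $y' = 0$ gives at most $1/2$; if $|H| = 2$ with $H = \{i, i'\}$ and $d = \dist(x_i, x_{i'})$, then $\dist(x_i, y) = (1-d)/2$ and $\dist(x_i, y') = \dist(x_i, \widehat{x_{i'}}) = 1/2 - d$, and the inequality reduces to $d \ge 0$.

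The bulk of the argument is the case $|H| \ge 3$. Let $j$ and $k$ be the cyclic neighbors of $x_i$ in $H$, let $g_1 = \dist(x_j, x_i)$ and $g_2 = \dist(x_i, x_k)$, and let $L$ be the maximum gap length truthfully. Deleting $i$ from the reported haters fuses the two incident gaps into a single gap of length $g_1 + g_2$ and leaves every other gap intact; a direct calculation along the fused arc shows that the midpoint of the merged gap lies at cycle distance $|g_1 - g_2|/2$ from $x_i$. I would split on the position of $y$. If $y$ is the midpoint of a gap whose endpoints are both distinct from $x_i$, that gap survives in $H'$, and every shortest arc from $x_i$ to $y$ must cross one of those endpoints, giving $\dist(x_i, y) \ge L/2 \ge \max(g_1, g_2)/2 \ge |g_1 - g_2|/2$; the inequality then holds whether $y' = y$ or $y'$ is the merged-gap midpoint. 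If instead $y$ is the midpoint of $g_1$ (symmetrically $g_2$), then $L = g_1$, so the merged gap, of length $g_1 + g_2 > L$, is the unique new maximum and $y'$ is its midpoint, giving $\dist(x_i, y') \le |g_1 - g_2|/2 \le g_1/2 = \dist(x_i, y)$ because $g_2 \le L = g_1$.

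The obstacle I expect to handle most carefully is the leftmost-gap tiebreaking rule: when several gaps of $H'$ share the new maximum length, $y'$ could in principle land at an unchanged gap different from $y$'s. But the original tiebreaking placed $y$ at the leftmost $L$-length gap of $H$, so every other surviving $L$-length gap lies to the right of $y$ in the cyclic order; consequently the only way $y'$ can drift away from $y$ is if the merged gap attains length at least $L$ and appears first in the cyclic order, which is precisely the merged-gap subcase already bounded above. Assembling the three subcases yields $\dist(x_i, y) \ge \dist(x_i, y')$, which is SP.
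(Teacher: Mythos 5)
Your proof is correct and follows essentially the same strategy as the paper's: a case analysis on the gap structure of $H$ around $x_i$ showing that when $i$ drops $F_1$ from its report, the facility either stays put or moves to the midpoint of the merged gap, which is no closer to $x_i$ than the truthful location. If anything your treatment is more explicit than the paper's, since you compute the merged-gap midpoint distance $|g_1-g_2|/2$ directly and verify that the smallest-index tie-breaking cannot redirect the facility to some third surviving gap, a point the paper's proof leaves implicit.
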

\begin{proof}
Let $(I,I')$ denote a single-facility $\OOFLG$ instance with $I = (n,
1, \Location, \Aversion)$ and $I' = (n, 1, \Location, \LieAversion)$,
and let $i$ be an agent such that $\LieAversion = (\Aversion_{-i},
a_i')$.  As in the proof of Lemma~\ref{lem:min-min-sp-sfo}, we can
restrict our attention to the case where $F_1$ belongs to $a_i
\setminus a'_i$.

Let $y$ denote the location at which
Mechanism~\ref{mech:min-min-circle} builds $F_1$ when agent $i$
reports truthfully, and let $H$ denote $\haters(I, 1)$.  Note that
Mechanism~\ref{mech:min-min-circle} does not build $F_1$ at the
location of any agent in $H$, that is, $y \neq x_{i'}$ for all $i'$ in
$H$.  Hence $y \neq x_i$.  Let the arc of $C$ that goes clockwise from
$y$ to $x_i$ be $r_1$ and let the arc of $C$ that goes
counterclockwise from $y$ to $x_i$ be $r_2$. Both arcs $r_1$ and $r_2$
include the endpoints $y$ and $x_i$. We consider four cases.

Case~1: No agent in $H - i$ is on $r_1$ or $r_2$. Hence $H =
\{i\}$. Thus $y = \widehat{x_i}$, and $\dist(x_i, y) = 1/2$. When
agent $i$ reports $a_i'$, $F_1$ is built at $0$. Since $\dist(x_i, 0)
\le 1/2$, reporting $a_i'$ does not benefit agent $i$.

Case~2: 
No agent in $H-i$ is on $r_{1}$ and there is an agent in
$H-i$ on $r_{2}$.
Let $i'$ be the closest agent to $y$ in $H-i$ on $r_2$. Let $d$ denote
$\dist(y, x_{i'})$. Thus $y$ is the midpoint of the arc that runs
clockwise from $x_{i'}$ to $x_i$. Hence $d = \dist(x_i, y)$. Let $i''$
be the closest agent in $H-i$ in the clockwise direction from
$x_i$. Thus $\dist(x_{i''},x_i) \le 2d$.  Since $F_1$ is built on
$r_1$ when agent $i$ reports $a_i'$ and $\dist(x_{i''}, x_i) \le 2d$,
reporting $a_i'$ does not benefit agent~$i$.

Case~3: 
No agent in $H-i$ is on $r_{2}$ and there is an agent in
$H-i$ on $r_{1}$. Symmetric to Case~2.

Case~4: 
There is an agent in $H-i$ on $r_{1}$ and there is an agent in $H-i$ on $r_{2}$. Let the
closest agent from $y$ in $H - i$ on $r_2$ (resp., $r_1$) be $a$
(resp., $b$), respectively. We have $\dist(x_a, y) = \dist(y,
x_b)$. Let $d$ denote $\dist(x_a, y)$. Note that $\dist(x_i, y) \ge
d$. Let $i'$ (resp., $i''$) be the first agent in $H-i$ encountered in
the counterclockwise (resp., clockwise) direction from $x_i$. We have
$\dist(x_i, x_{i'}) \le 2d$ and $\dist(x_i, x_{i''}) \le 2d$. Thus,
when agent $i$ reports $a_i'$, either $F_1$ is built at $y$ or $F_1$
is built within distance $d$ of $x_i$, neither of which benefits agent
$i$.

Thus agent $i$ does not benefit by reporting $a_i'$.
\end{proof}

\begin{lemma}
\label{lem:min-min-circle-opt}
Mechanism~\ref{mech:min-min-circle} is egalitarian for single-facility
$\OOFLG$.
\end{lemma}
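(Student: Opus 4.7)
The plan is to mirror the proof of Lemma~\ref{lem:min-min-opt-sfo}, adapted to cyclic geometry. Let $I = (n, 1, \Location, \Aversion)$ denote the reported $\DOFL$ instance, let $H = \haters(I,1)$, let $y'$ be the location at which Mechanism~\ref{mech:min-min-circle} builds $F_1$, and let $\ALG = \MMW(I, y')$. Let $\OPT$ denote the maximum possible value of $\MMW(I, \cdot)$. Since $\ALG \le \OPT$ holds trivially, it suffices to show $\ALG \ge \OPT$. I would split the argument into three cases matching the three cases in the mechanism definition.

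For the two easy cases: if $H = \emptyset$, every agent is indifferent and receives welfare $1/2$ by the convention from Section~\ref{sec:preliminaries}, so $\ALG = \OPT = 1/2$ independent of the facility location. If $|H| = 1$ with the sole hater at $x_i$, the mechanism places $F_1$ at $\widehat{x_i}$, making the distance from $F_1$ to $x_i$ equal to $1/2$, which is the diameter of the unit-circumference circle and hence the maximum distance achievable; combined with the welfare $1/2$ contributed by any indifferent agents, this gives $\ALG = 1/2 = \OPT$.

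The main case is $|H| = \ell \ge 2$. Let $g$ denote the length of the largest arc between cyclically consecutive agents of $H$ (in the sorted order used by the mechanism). The mechanism builds $F_1$ at the midpoint of such an arc, so the closest agent in $H$ to $y'$ is at distance exactly $g/2$. Since $g \le 1$ we have $g/2 \le 1/2$, so the minimum welfare is attained by a hater rather than an indifferent agent and $\ALG = g/2$. To bound $\OPT$, I would observe that the $\ell$ arcs between cyclically consecutive haters partition the circle. Any candidate facility location $y^*$ therefore lies in some such arc of length $g^* \le g$, and its distance to the nearer endpoint (which is a hater) is at most $g^*/2 \le g/2$. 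Hence $\OPT \le g/2 = \ALG$, as needed.

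The main (modest) obstacle is merely bookkeeping around degenerate situations: coincident hater locations produce zero-length arcs in the cyclic ordering, and the list of consecutive pairs wraps around the cycle, but neither complication affects the arc-partition argument. Overall the proof is shorter than the interval case because the cycle has no boundary subcases of the ambient space to track, and the antipode construction handles the single-hater case uniformly.
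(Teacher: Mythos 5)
Your proof is correct and takes essentially the same approach as the paper: both rest on the observation that the arcs between cyclically consecutive haters partition the circle, so the optimal minimum welfare is at most half the largest gap, which is exactly what the mechanism achieves. The paper phrases this via a ``tight'' agent adjacent to the optimal location rather than bounding $\OPT$ directly by $g/2$, but the argument is the same.
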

\begin{proof}
Let $I = (n, 1, \Location, \Aversion)$ denote the reported $\DOFL$
instance, let $H$ denote $\haters(I, 1)$, let $y^*$ denote an optimal
location for the facility, let $\OPT$ denote $\MMW (I, y^*)$, let $y'$
denote the location at which Mechanism~\ref{mech:min-min-circle}
builds the facility, and let $\ALG$ denote $\MMW (I, y')$.  Below we
establish that $\ALG \ge \OPT$. Since $\ALG \le \OPT$, we conclude
that $\ALG = \OPT$ and hence that Mechanism~\ref{mech:min-min-circle}
is egalitarian.

If $|H|\leq 1$, it is easy to see that
Mechanism~\ref{mech:min-min-circle} is egalitarian. For the remainder
of the proof, we assume that $|H|\geq 2$.  We say that an agent in $H$
is \emph{tight} if it is as close to $y^*$ as any other agent in $H$.
Thus for any tight agent $i$, $\OPT = \dist(y^*,x_i)$.

Let $i$ be a tight agent. Assume without loss of generality that in
the shorter arc between $x_i$ and $y^*$, $x_i$ lies counterclockwise
from $y^*$. Thus $\OPT = \dist(x_i, y^*)$.  Let $i'$ be the closest
agent in $H$ in the clockwise direction from $y^*$.  The definition of
$i'$ implies that $\dist(x_{i'}, y^*) \ge \OPT$.  Thus the length of
the clockwise arc from $x_i$ to $x_{i'}$ is at least $2 \cdot \OPT$.
Since $i$ and $i'$ are consecutive agents in $H$ and
Mechanism~\ref{mech:min-min-circle} builds the facility at the
midpoint of the largest gap between consecutive agents in $H$, we
deduce that $\ALG \ge \OPT$.
\end{proof}

We define Mechanism~10 as the $\OOFLG$ mechanism $\Parallel{M}$, where
$M$ denotes Mechanism~\ref{mech:min-min-circle}.  Using
Lemmas~\ref{lem:min-min-sfo-ooflg-sp},
\ref{lem:min-min-sfo-ooflg-opt}, \ref{lem:min-min-circle-sp},
and~\ref{lem:min-min-circle-opt}, we immediately obtain
Theorem~\ref{thm:min-min-opt-eg-circle} below.

\begin{theorem}
\label{thm:min-min-opt-eg-circle}
Mechanism~10 is SP and egalitarian.
\end{theorem}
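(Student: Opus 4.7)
The statement follows immediately by combining the four lemmas that have already been established just before it, so the plan is simply to chain them together. Specifically, Lemma~\ref{lem:min-min-circle-sp} tells us that Mechanism~\ref{mech:min-min-circle} is SP in the single-facility setting, and Lemma~\ref{lem:min-min-circle-opt} tells us it is egalitarian in the single-facility setting. Then Lemma~\ref{lem:min-min-sfo-ooflg-sp} lifts the SP property from the single-facility mechanism to the $\OOFLG$ mechanism $\Parallel{M}$, and Lemma~\ref{lem:min-min-sfo-ooflg-opt} lifts the egalitarian property in the same way.

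Concretely, I would let $M$ denote Mechanism~\ref{mech:min-min-circle}, so that Mechanism~10 is by definition $\Parallel{M}$. Applying Lemma~\ref{lem:min-min-sfo-ooflg-sp} with this choice of $M$ and using Lemma~\ref{lem:min-min-circle-sp} as the hypothesis immediately yields that $\Parallel{M}$ is SP. Similarly, applying Lemma~\ref{lem:min-min-sfo-ooflg-opt} with the same $M$ and using Lemma~\ref{lem:min-min-circle-opt} as the hypothesis yields that $\Parallel{M}$ is egalitarian. Together, these give both desired properties of Mechanism~10.

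There is no real obstacle here; all the nontrivial work was done in proving the two single-facility lemmas (the SP proof via case analysis on whether agents lie about disliking $F_1$, and the egalitarian proof via comparing the midpoint-of-largest-gap placement to any tight agent at $y^*$) and in proving the two ``parallel lifting'' lemmas. The final theorem is merely the composition of these four results, so the proof reduces to a two-sentence citation.
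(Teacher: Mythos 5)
Your proposal is correct and matches the paper exactly: the paper likewise obtains Theorem~\ref{thm:min-min-opt-eg-circle} immediately by combining Lemmas~\ref{lem:min-min-sfo-ooflg-sp}, \ref{lem:min-min-sfo-ooflg-opt}, \ref{lem:min-min-circle-sp}, and~\ref{lem:min-min-circle-opt}, with Mechanism~10 defined as $\Parallelop(M)$ for $M$ equal to Mechanism~\ref{mech:min-min-circle}. No further argument is needed.
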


Theorem~\ref{thm:min-min-circle-lowerbound} below extends
Theorem~\ref{thm:min-min-lowerbound} to the case of the
cycle. Theorem~\ref{thm:min-min-circle-lowerbound} implies that
Mechanism~10 is not WGSP.

\begin{theorem}
\label{thm:min-min-circle-lowerbound}
Let $M$ be a WGSP $\alpha$-egalitarian mechanism. Then $\alpha$ is
$\Omega(\sqrt{n})$, where $n$ is the number of agents.
\end{theorem}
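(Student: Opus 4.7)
The plan is to mirror the two-instance construction from the proof of Theorem~\ref{thm:min-min-lowerbound}, adapting it to the cycle setting. I would parameterize by a large even integer $q$ yielding $n=\Theta(q^2)$, and build two two-facility $\OOFLG$ instances $(I,I)$ and $(I,I')$ on the unit-circumference cycle. Place a constant number of bulk agents at each of the antipodal points $0$ and $1/2$, with some disliking $\{F_1\}$ and some disliking $\{F_2\}$. Place a dense cluster $U$ of roughly $q^2/2$ agents at positions $i/q^2$ in one half-arc, all truly disliking $\{F_2\}$, and a symmetric dense cluster $V$ of roughly $q^2/2$ agents in the opposite half-arc, all truly disliking $\{F_1\}$. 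In $I'$, the interior agents of $U$ and $V$ use alternating reports exactly as in the interval construction.

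First I would verify that $\OPT=\Theta(1)$ on $I$, for instance $\OPT=1/4$ attained by $(F_1,F_2)=(1/4,3/4)$, since each bulk agent then sits at cycle distance $1/4$ from its disliked facility and each cluster agent at cycle distance at least $1/4$. Next I would compute $\OPT'=\Theta(1/q)$ on $I'$ by identifying the two largest gaps in the reported $F_1$-hater and $F_2$-hater sets; both have length $\Theta(1/q)$ adjacent to the bulk at $0$ and admit a simultaneous placement of the two facilities. Then I would apply the WGSP argument: if $M$ were $o(\sqrt{n})$-egalitarian, each facility $F_j$ on $I'$ would have to lie in the corresponding $\Theta(1/q)$ arc, since otherwise a reported $F_j$-hater from the densely packed cluster would be within $O(1/q^2)$, yielding $\ALG'\le O(1/q^2)$. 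Letting $S$ denote the coalition of all liars, I would then use the antipodal geometry to argue that unless the mechanism has already placed some facility on $I$ within $O(1/q)$ of a bulk point (in which case $\ALG\le O(1/q)$ follows from the corresponding bulk agent), every liar in $S$ strictly benefits from the reported shift, contradicting WGSP.

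The main obstacle is making the direction-of-motion step rigorous on the cycle. On the interval, ``$F_2$ moves right'' uniformly increases its distance to every $U$-liar because $U$ sits strictly to the left of the new position. On the cycle, the corresponding argument must rely on all cluster liars lying in the same half-arc relative to the new facility location; the choice of bulk at $0$ and $1/2$ and the $I$-optimum $(1/4,3/4)$ is designed so that the move of each facility from its $I$-location to its $I'$-location (which sits just across $0$ from the cluster) traces the long arc of the cycle past the bulk point, so that all cluster liars sit strictly on the long arc and their cycle distances to the facility genuinely grow. Verifying this and the analogous property for the $V$-liars will be the most delicate step; the rest of the argument transfers from the interval proof essentially verbatim, with Euclidean distances replaced by cycle distances.
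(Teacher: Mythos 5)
Your proposal matches the paper's approach: the paper's own proof of this theorem is a one-line observation that the construction from Theorem~\ref{thm:min-min-lowerbound} carries over to the cycle by identifying the point $1$ with the point $0$, which yields exactly the configuration you describe (bulk agents at the antipodal points $0$ and $1/2$, dense clusters with alternating reports in the two half-arcs, $\OPT=1/4$ versus $\OPT'=\Theta(1/q)$, and the same WGSP case analysis). The ``direction-of-motion'' issue you flag is precisely the detail the paper dismisses as straightforward to verify, and your resolution via the antipodal geometry is the intended one.
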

\begin{proof}
It is straightforward to verify that the construction used in the
proof of Theorem~\ref{thm:min-min-lowerbound} also works for the cycle
and establishes the same lower bound. (We identify the point $1$ with
the point $0$.)
\end{proof}

The following variant of Mechanism~10 is SGSP. As in the SGSP mechanism
for the case when the agents are located in the unit interval, in this
variant, we treat the reported dislikes of all agents as if they were
$\{F_1\}$,
and we use Mechanism~\ref{mech:min-min-circle} to determine where to
build $F_1$. Then we build all of the remaining facilities at the same
location as $F_1$. This mechanism is SGSP because it disregards the
reported aversion profile. We claim that this mechanism is
$n$-egalitarian, where $n$ denotes the number of agents. To prove this
claim, we first observe that the largest gap between two consecutive
agents with reported dislikes $\{F_1\}$ is at least
$\frac{1}{n}$. Thus the minimum welfare achieved by the mechanism is
at least $\frac{1}{2n}$.  Since the optimal minimum welfare is at most
$1/2$, the claim holds.

\subsection{The unit square}
\label{subsec:min-min-square}

In this section, we extend Mechanism~\ref{mech:min-min} to a SP
egalitarian mechanism for single-facility $\OOFLG$ when the agents are
located in the unit square.  Let $S$ denote $[0,1]^2$, let $B$ denote
the boundary of $S$, and let $x_i = (a_i, b_i)$ denote the location of
agent $i$. For convenience, we assume that all agents are located at
distinct points; the results below generalize easily to instances
where this assumption does not hold.

The analysis that we provide for our mechanism relies on results of
Toussaint~\cite{TOUSSAINT1983} concerning the largest empty circle with
location constraints problem. An instance of the latter problem is
given by a set of points in the plane.  Toussaint makes the
simplifying assumption that these points lie in general position in
the sense that no three are collinear and no four are cocircular.  In
our application of Toussaint's work, the agent locations correspond to
the set of input points. Accordingly, throughout this section, we
assume that the agent locations are in general position.\footnote{We suspect that
Toussaint's results continue to hold when the points
are not in general position. If so, we could drop
  our assumption that the agent locations are in general position.}

\begin{custommech}{11}
\label{mech:min-min-plane}
Let $I = (n, 1, \Location, \Aversion)$ denote the reported $\DOFL$
instance and let $H$ denote $\haters(I, 1)$.  If $H$ is empty, build
$F_1$ at $(0,0)$. Otherwise, construct the Voronoi diagram $D$
associated with the locations of the agents in $H$.  Let $V$ denote
the union of the following three sets of vertices: the vertices of $D$
in the interior of $S$; the points of intersection between $B$ and
$D$; the four vertices of $S$.  For each $v$ in $V$, let $d_v$ denote
the minimum distance from $v$ to any agent in $H$. Build $F_1$ at a
vertex $v$ maximizing $d_v$, breaking ties first by $x$-coordinate and
then by $y$-coordinate.
\end{custommech}

Toussaint has presented an efficient $O(n\log n)$ algorithm to find
the optimal $v$ in
Mechanism~\ref{mech:min-min-plane}~\cite{TOUSSAINT1983}. The following
lemma establishes that Mechanism~\ref{mech:min-min-plane} is
egalitarian. The lemma is shown using a result of~\cite{TOUSSAINT1983}
concerning the largest empty circle with location constraints problem.

\begin{restatable}{lemma}{minminplaneopt}
\label{lem:min-min-plane-opt}
Mechanism~\ref{mech:min-min-plane} is egalitarian for single-facility
$\OOFLG$.
\end{restatable}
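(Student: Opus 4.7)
The plan is to reduce the claim to a direct application of Toussaint's characterization of the optimal center for the largest empty circle with location constraints problem~\cite{TOUSSAINT1983}. Throughout, let $I = (n,1,\Location,\Aversion)$ denote the reported $\DOFL$ instance, let $H$ denote $\haters(I,1)$, let $y'$ denote the location at which Mechanism~\ref{mech:min-min-plane} builds $F_1$, and let $y^*$ denote an optimal facility location. Write $\ALG = \MMW(I,y')$ and $\OPT = \MMW(I,y^*)$. Since $\OPT \ge \ALG$ is immediate, it is enough to prove $\ALG \ge \OPT$.

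First I will dispose of the degenerate case $H = \emptyset$: then for every $y$ in $S$ and every agent $i$, $w(I,i,y)$ depends only on $x_i$, so $\MMW(I,y) = \MMW(I,(0,0)) = \OPT$. In the remaining case $H \ne \emptyset$, define $A(y) = \min_{i \in H} \dist(x_i,y)$. For each $i$ in $\indiff(I)$, the welfare $w(I,i,y)$ is the maximum distance from $x_i$ to a corner of $S$, a constant that I will call $C_i$; let $C = \min_{i \in \indiff(I)} C_i$, with the convention $C = +\infty$ if $\indiff(I) = \emptyset$. Then
\begin{equation*}
\MMW(I,y) = \min\bigl(A(y),\, C\bigr),
\end{equation*}
so $\max_{y \in S} \MMW(I,y) = \min\bigl(\max_{y \in S} A(y),\, C\bigr)$, and it suffices to show that Mechanism~\ref{mech:min-min-plane} picks a point $y'$ with $A(y') = \max_{y \in S} A(y)$.

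This is precisely the largest empty circle with location constraints problem for the point set $\{x_i : i \in H\}$ with feasible region $S$. By Theorem~3 of~\cite{TOUSSAINT1983}, under the general position assumption on the agent locations, the center of any largest empty circle constrained to lie in $S$ must be either (i) a vertex of the Voronoi diagram of $\{x_i : i \in H\}$ that lies in the interior of $S$, (ii) a point at which some Voronoi edge crosses the boundary $B$, or (iii) a vertex of $S$. These are exactly the three sets whose union defines the candidate set $V$ in Mechanism~\ref{mech:min-min-plane}. Since the mechanism selects a vertex $v \in V$ maximizing $d_v = A(v)$, we obtain $A(y') = \max_{v \in V} A(v) = \max_{y \in S} A(y)$, and hence $\ALG = \min(A(y'),C) = \OPT$, as required.

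The only subtle point will be citing the Toussaint characterization precisely; in particular, one must confirm that the feasible region in his formulation (a convex polygon) specializes to our unit square and that the three vertex types he enumerates match the three sets whose union forms $V$. The general position assumption is already built into the hypotheses of the section, so no further care is needed there. Everything else in the argument is a routine rearrangement separating the contribution of the indifferent agents (a constant cap $C$) from the contribution of the haters (the empty-circle problem).
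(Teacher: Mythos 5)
The proposal is correct and takes essentially the same approach as the paper's proof: reduce to the largest empty circle with location constraints problem over the point set $\{x_i \mid i \in H\}$ and invoke Toussaint's characterization of the optimal center as a Voronoi vertex interior to $S$, an intersection of the Voronoi diagram with $B$, or a corner of $S$ (the paper cites his Theorem~2 rather than Theorem~3). Your explicit separation of the indifferent agents' contribution as a constant cap $C$ is a detail the paper leaves implicit.
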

\begin{proof}
Let $I = (n, 1, \Location, \Aversion)$ denote the reported $\DOFL$
instance, let $H$ denote $\haters(I, 1)$, let $y^*$ denote an optimal
location for the facility, let $\OPT$ denote $\MMW(I, y^*)$, let
$y'$ denote the location at which Mechanism~\ref{mech:min-min-plane}
builds the facility, and let $\ALG$ denote $\MMW(I, y')$.  Below we
establish that $\ALG = \OPT$, which implies that
Mechanism~\ref{mech:min-min-plane} is egalitarian.

If $H$ is empty, then it is straightforward to prove that $\ALG =
\OPT$. Otherwise, finding the optimal location at which to build
facility $F_1$ is equivalent to finding the maximum-radius circle
centered in the interior or on the boundary of $S$ such that the
interior of the circle has no points from $\{x_i \mid i \in H\}$. This
corresponds to the largest empty circle with location constraints
problem.  Toussaint shows (see Theorem~2 of~\cite{TOUSSAINT1983}) that
the optimal center for the circle is either a vertex of the Voronoi
diagram of $S$, a point of intersection of $D$ with $B$, or a vertex
of $S$.
Hence $\ALG =
\OPT$.
\end{proof}

We use a case analysis to establish
Lemma~\ref{lem:min-min-opt-sfo-plane} below.  The most interesting
case deals with an agent $i$ who dislikes $F_1$ but does not report
it.  In this case, the key insight is that when agent $i$ misreports,
facility $F_1$ is built either (1) at the same location as when agent
$i$ reports truthfully, or (2) inside or on the boundary of the
Voronoi region that contains $x_i$ when agent $i$ reports truthfully.

\begin{restatable}{lemma}{minminoptsfoplane}
\label{lem:min-min-opt-sfo-plane}
Mechanism~\ref{mech:min-min-plane} is SP for single-facility $\OOFLG$.
\end{restatable}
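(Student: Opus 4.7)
The plan is to case-split based on how agent $i$'s reported aversions $a'_i$ differ from their true aversions $a_i$ with respect to $F_1$. When $F_1\notin a_i$, agent $i$'s welfare is independent of $F_1$'s location, so no misreport can help. When the status of $F_1$ in $a_i$ agrees with its status in $a'_i$, the input to the mechanism is unchanged relative to $F_1$, so $F_1$ is built at the same location. The only remaining case is $F_1\in a_i\setminus a'_i$, in which agent $i$ truly dislikes $F_1$ but conceals this.

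For this main case, let $H=\haters(I,1)$, $H'=H\setminus\{i\}=\haters(I',1)$, let $D$ denote the Voronoi diagram of the agents in $H$, let $R_i$ denote the Voronoi cell of $x_i$ in $D$, and let $v$ and $v'$ denote the points at which the mechanism builds $F_1$ on inputs $I$ and $I'$ respectively. The key geometric claim highlighted in the paragraph above the lemma is the dichotomy $v'\in\{v\}\cup\overline{R_i}$. Given this claim, strategyproofness follows quickly. If $v'=v$, welfare is unchanged. If $v'\in\overline{R_i}\cap S$, then since $\overline{R_i}\cap S$ is a convex polygon and $\dist(x_i,\cdot)$ is convex, the maximum of this distance over the polygon is attained at a vertex $u$; every such vertex belongs to the candidate set $V$ constructed by the mechanism on $I$ (each is either a Voronoi vertex of $D$ lying in $S$ and incident to $R_i$, a crossing of a Voronoi edge incident to $R_i$ with $B$, or a corner of $S$ contained in $\overline{R_i}$). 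For each such vertex $u$, one checks that $\dist(x_i,u)=d_u$, since either $u$ lies on $\partial R_i$ and is thus equidistant from $x_i$ and another generator, or $u$ lies interior to $R_i$ and then $x_i$ is the nearest point in $H$ to $u$. By optimality of $v$, $d_u\le d_v$, and since $x_i\in H$, $\dist(x_i,v)\ge d_v$. Chaining gives $\dist(x_i,v')\le d_u\le d_v\le\dist(x_i,v)$.

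The work is concentrated in the dichotomy. The argument is by contradiction: suppose $v'\notin\overline{R_i}$, so some $x_j\in H'$ is strictly closer to $v'$ than $x_i$ is. General position gives $\dist(v',x_i)>\dist(v',x_j)$, so the empty disk at $v'$ whose radius equals $v'$'s distance to $H'$ also avoids $x_i$ and is therefore empty for all of $H$; this forces $v'$'s $H$-distance and $H'$-distance to coincide, and since $x_i$ lies strictly outside the disk, the Voronoi structure at $v'$ is preserved under the addition of $x_i$. Hence $v'$ is a candidate vertex in the $H$-instance with the $H$-optimal $d$-value. A symmetric argument shows that $v$ is likewise a candidate in the $H'$-instance with the $H'$-optimal $d$-value: if the defining generators of $v$ in $D$ included $x_i$, then the Voronoi edge opposite $x_i$ at $v$ would extend past $v$ after deletion of $x_i$, and strict monotonicity of the distance from a point on a perpendicular bisector to its generating pair (as one moves away from the midpoint) would produce a candidate for the $H'$-instance whose $d$-value strictly exceeds $d_v$, contradicting the already-established equality of the two optima. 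Lexicographic tie-breaking in both mechanisms then forces $v=v'$, contradicting $v'\notin\overline{R_i}$. The main obstacle is this last symmetric step, especially the edge cases in which $v$ lies on $B$ or coincides with a corner of $S$; these reduce to the same monotonicity observation after a modest case analysis verifying that the relevant extension of the Voronoi edge past $v$ meets the interior of $S$ under general position.
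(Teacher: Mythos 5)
Your proof is correct in outline and takes a genuinely different route from the paper's. Both arguments ultimately rest on the same dichotomy---when agent $i$ suppresses $F_1$, the facility either stays put or lands in the closure of agent $i$'s Voronoi cell---but the paper asserts this in a single sentence (``the Voronoi diagram remains unchanged outside $P$, so\dots'') and then finishes numerically: it invokes the egalitarian guarantee of Lemma~\ref{lem:min-min-plane-opt} to force $\OPT=\OPT'$ and derives $d'\le\OPT'=\OPT\le d$ from $y'\in P$. You instead prove the dichotomy explicitly via the candidate sets and the lexicographic tie-breaking, and you dispose of the in-cell case by a self-contained convexity argument ($\dist(x_i,\cdot)$ is convex, $\overline{R_i}\cap S$ is a convex polygon all of whose vertices lie in $V$ and satisfy $\dist(x_i,u)=d_u$, giving $\dist(x_i,v')\le d_u\le d_v\le\dist(x_i,v)$). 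This is more work than the paper records, but the work is not wasted: the dichotomy is exactly the step the paper glosses over, and your proof of it is the real geometric content. One cosmetic slip: deriving $v=v'$ from the hypothesis $v'\notin\overline{R_i}$ is not a contradiction of that hypothesis; it is simply the other branch of the dichotomy, which is all you need.

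The one substantive soft spot is the deferred edge-case analysis in the step showing that $v$ survives as an $H'$-candidate (equivalently, that $x_i$ cannot be a defining generator of $v$ once $\OPT'=\OPT$ is known). Your proposed repair---``verify that the extension of the Voronoi edge past $v$ meets the interior of $S$''---does not describe the right argument when $v$ is a point of $B\cap D$: there the defining generators are just $x_i$ and one other generator $x_a$, so after deleting $x_i$ there is no ``opposite'' Voronoi edge at $v$ to extend ($v$ becomes interior to $x_a$'s cell in $D'$), and the bisector of $x_i$ and $x_a$ exits $S$ at $v$ in any case. The correct move is to slide $v$ along the side of $B$ containing it in a direction that increases $\dist(\cdot,x_a)$ (such a direction always exists, since the restriction of this distance to the side is strictly convex); for a small slide all other generators of $H'$ remain at distance greater than $\OPT$, so one obtains a point of $S$ whose $H'$-distance exceeds $\OPT'$, contradicting the maximality of $\OPT'$. (You need not exhibit a new \emph{candidate} with larger $d$-value; contradicting the optimum suffices, by Lemma~\ref{lem:min-min-plane-opt}.) The corner case needs no contradiction at all: a corner of $S$ is a candidate for every instance, so $v$ is automatically an $H'$-candidate there even when $x_i$ helps define it. With these repairs the proof goes through.
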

\begin{proof}
Assume for the sake of contradiction that
Mechanism~\ref{mech:min-min-plane} is not SP.  Thus there exists a
single-facility $\OOFLG$ instance $(I,I')$ with $I = (n, 1, \Location,
\Aversion)$ and $I'=(n, 1, \Location, \LieAversion)$, and an agent $i$
with $\LieAversion=(\Aversion_{-i}, a_i')$ who benefits by reporting
$a_i'$.  Using the same arguments as in the proof of
Lemma~\ref{lem:min-min-sp-sfo}, we conclude that $F_1$ belongs to $a_i
\setminus a'_i$.

Let $y$ (resp., $y'$) denote the location at which
Mechanism~\ref{mech:min-min-plane} builds $F_1$ when agent $i$ reports
$a_i$ (resp., $a'_i$), and let $H$ denote $\haters(I, 1)$. Note that
Mechanism~\ref{mech:min-min-plane} does not build $F_1$ at the
location of any agent in $H$, that is, $y \neq x_{i'}$ for all $i'$ in
$H$. Hence $y \neq x_i$.  When all agents report truthfully, the
Voronoi diagram partitions $S$ into $|H|$ non-overlapping polygons,
where each polygon contains one agent.  Let $P$ be the polygon that
contains agent $i$ when all agents report truthfully. When agent $i$
reports $a_i'$, the Voronoi diagram remains unchanged outside polygon
$P$.
It follows that when agent $i$ reports $a_i'$, facility $F_1$ is built
either at $y$ or at a point that belongs to $P$. If it is built at
$y$, agent $i$ does not benefit. Thus, for the remainder of the proof,
we assume that $y'$ belongs to $P$.

Let $\OPT$ (resp., $\OPT'$) denote the closest distance of any agent
in $H$ (resp., $H - i$) to the point $y$ (resp., $y'$). Let $d$ and
$d'$ denote $\dist(x_i, y)$ and $\dist(x_i, y')$, respectively. Hence
$d \ge \OPT$. Since the distance from $y$ to any agent in $H - i$ is
at least $\OPT$, we have $\OPT' \ge \OPT$ .

Since agent $i$ benefits by reporting $a_i'$, we have $d' > d$. We
begin by showing that $\OPT = \OPT'$. Suppose $\OPT \neq \OPT'$. Since
$\OPT' \ge \OPT$, we have $\OPT' > \OPT$.  Note that $\MMW(I, y') =
\min(d', \OPT')$. Since $d' > d$ and $\OPT' > \OPT$, we have $\MMW(I,
y') > \min(d, \OPT)$. Since $d \ge \OPT$, we have $\min(d, \OPT) =
\OPT$. Moreover, $\OPT = \MMW(I, y)$. Since $\MMW(I, y') > \min(d,
\OPT)$ and $\min(d, \OPT) = \MMW(I,y)$, we have $\MMW(I,y') >
\MMW(I,y)$, a contradiction since Lemma~\ref{lem:min-min-plane-opt}
implies that Mechanism~\ref{mech:min-min-plane} is egalitarian. Thus
$\OPT = \OPT'$.

Recall that $y'$ belongs to $P$. Hence the closest agent in $H$ to
$y'$ is agent $i$. Thus $d' \le \OPT'$.
Since $\OPT \le d$, $d < d'$, and $d' \le \OPT'$, we obtain $\OPT <
\OPT'$, which contradicts $\OPT = \OPT'$. Thus $d' \le d$, and hence
agent $i$ does not benefit by reporting $a_i'$.
\end{proof}

We define Mechanism~12 as the $\OOFLG$ mechanism $\Parallel{M}$, where
$M$ denotes Mechanism~\ref{mech:min-min-plane}. Using
Lemmas~\ref{lem:min-min-sfo-ooflg-sp},
\ref{lem:min-min-sfo-ooflg-opt}, \ref{lem:min-min-plane-opt},
and~\ref{lem:min-min-opt-sfo-plane}, we immediately obtain
Theorem~\ref{thm:min-min-opt-eg-plane} below.

\begin{theorem}
\label{thm:min-min-opt-eg-plane}
Mechanism~12 is SP and egalitarian.
\end{theorem}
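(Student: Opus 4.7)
The plan is to observe that this theorem follows immediately by composing four already-established lemmas, and so the proof proposal is essentially a single chain of implications. There is no new mathematical content to establish; the whole point of introducing the $\Parallelop$ operator in Definition~\ref{def:parallel} together with Lemmas~\ref{lem:min-min-sfo-ooflg-sp} and~\ref{lem:min-min-sfo-ooflg-opt} is to reduce the multi-facility case to the single-facility case in a black-box manner.

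Concretely, I would proceed as follows. Let $M$ denote Mechanism~\ref{mech:min-min-plane}, so that by definition Mechanism~12 is $\Parallelop(M)$. First, invoke Lemma~\ref{lem:min-min-opt-sfo-plane} to conclude that $M$ is SP when interpreted as a single-facility $\OOFLG$ mechanism. Then apply Lemma~\ref{lem:min-min-sfo-ooflg-sp} to lift this SP property from $M$ to $\Parallelop(M)$, giving the SP half of the theorem. Second, invoke Lemma~\ref{lem:min-min-plane-opt} to obtain that $M$ is egalitarian for single-facility $\OOFLG$, and then apply Lemma~\ref{lem:min-min-sfo-ooflg-opt} to lift egalitarianism from $M$ to $\Parallelop(M)$. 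Combining the two conclusions yields the theorem.

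There is essentially no obstacle here: the lemmas were designed to snap together in exactly this way, and the proof should be a single sentence citing all four of them. The only care needed is bookkeeping, namely to state explicitly that Mechanism~12 $=$ $\Parallelop(M)$ with $M$ set to Mechanism~\ref{mech:min-min-plane}, so that both Lemma~\ref{lem:min-min-sfo-ooflg-sp} and Lemma~\ref{lem:min-min-sfo-ooflg-opt} apply with the same choice of underlying single-facility mechanism. Since the authors preceded the theorem with the sentence ``we immediately obtain,'' I would keep the proof correspondingly short and avoid re-proving any of the reductions.
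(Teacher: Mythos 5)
Your proposal is correct and matches the paper exactly: the authors define Mechanism~12 as $\Parallel{M}$ with $M$ equal to Mechanism~\ref{mech:min-min-plane} and obtain the theorem immediately by combining Lemmas~\ref{lem:min-min-sfo-ooflg-sp}, \ref{lem:min-min-sfo-ooflg-opt}, \ref{lem:min-min-plane-opt}, and~\ref{lem:min-min-opt-sfo-plane}. No further argument is needed.
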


\section{Concluding Remarks}
\label{sec:conclusion}

In this paper, we studied the obnoxious facility location game with
dichotomous preferences. This game generalizes the obnoxious facility
location game to more realistic scenarios.  All of the mechanisms
presented in this paper run in polynomial time, except that the
running time of Mechanism~\ref{mech:sum-min} has exponential
dependence on $k$ (and polynomial dependence on $n$). 
We can extend the results of Section~\ref{subsec:sum-min-square} (resp., Section~\ref{subsec:min-min-square}) to obtain an
analogue of Theorems~\ref{thm:sum-min-sgsp-square-sgsp} and~\ref{thm:sp-sum-min-sgsp-square-opt} (resp., Theorem~\ref{thm:min-min-opt-eg-plane}) that holds for an
arbitrary rectangle (resp., convex polygon).  We showed that Mechanism~\ref{mech:sum-min}
is WGSP for all $k$ and is efficient for $k \le 3$.  Properties~1
and~2 in the proof of our associated theorem,
Theorem~\ref{thm:opt-sum-min}, do not hold for $k>3$. Nevertheless, we
conjecture that Mechanism~\ref{mech:sum-min} is efficient for all $k$.
It remains an interesting open problem to reduce the gap between the
$\Omega(\sqrt{n})$ and $O(n)$ bounds for the approximation ratio
$\alpha$ of WGSP $\alpha$-egalitarian mechanisms.

\bibliographystyle{plain}
\bibliography{main}

\end{document}